\begin{document}
%% ##########################################################
%% ######## Theorem, Proposition, Corollary, Remark #########
%% ##########################################################
\theoremstyle{plain}
\newtheorem{thm}{Theorem}[section]
\newtheorem{prop}[thm]{Proposition}
\newtheorem{cor}[thm]{Corollary}
\theoremstyle{remark}
\newtheorem{rem}[thm]{Remark}
%% #######################################################################################
%% ########################### Title, Authors, Affiliation ###############################
%% #######################################################################################
\title{Development of algebraic techniques for the atomic open-shell MBPT(3)}

\author{Rytis Jur\v{s}\.{e}nas}
\email[R. Jursenas: ]{Rytis.Jursenas@tfai.vu.lt}
\author{Gintaras Merkelis}
\email[G. Merkelis: ]{Gintaras.Merkelis@tfai.vu.lt}
\affiliation{Institute of Theoretical Physics and Astronomy of Vilnius University,\\
A. Go\v{s}tauto 12, LT-01108, Vilnius, Lithuania}

\date{\today}

%% ###############################################################################################
%% ####################################### Abstract ##############################################
%% ###############################################################################################
\begin{abstract}
The atomic third-order open-shell many-body perturbation theory is developed.
Special attention is paid to the generation and algebraic analysis of terms
of the wave operator and the effective Hamiltonian as well.
Making use of occupation-number representation and intermediate normalization, the third-order deviations
are worked out by employing a computational software program that embodies the generalized Bloch equation. 
We prove that in the most general
case, the terms of effective interaction operator on the proposed complete model space are generated by 
not more than eight types of the $n$-body ($n\geq2$) parts of the wave operator. To compose the effective Hamiltonian
matrix elements handily, the operators are written in irreducible tensor form. We present the reduction
scheme in a versatile disposition form, thus it is suited for the coupled-cluster approach.
\end{abstract}

%% ########################################################################
%% ############################ 2010 PACS #################################
%% ########################################################################
%%                     31.15.xp    Perturbation theory                    #
%%                     31.30.jy    Higher-order effective Hamiltonians    #
%%                     02.70.Wz    Symbolic computation (computer algebra)# 
%% ########################################################################
\pacs{31.15.xp, 31.30.jy, 02.70.Wz}

\maketitle %\maketitle must follow title, authors, abstract and \pacs

%% #############################################################################
%% ############################## Section I ####################################
%% #############################################################################
\section{Introduction}
\label{Intro}

From the mathematical point of view, many-body perturbation theory (MBPT) is represented by a number of
recurrence equations which permit to construct a total wave function
of many-body system up to the fixed amendment. The higher order excitations are
involved, the larger number of corrections is found. Due to interactions
of the valence electrons, this number drastically
increases for open-shell atoms. Over the past decades,
it has become generally accepted \cite{Derevianko,Ho,Safronova,Blundell}
to build the wave operator which maps states in selected model
space onto exact states of complex system Hamiltonian, upon the
generalized Bloch equation, introduced by Lindgren et.
al. \cite{Lindgren0,Lindgren,Morrison}. Although, the practical application
of this equation is far from being as mere as the theory is. The situation
exceptionally changed, when symbolic programming tools appropriated
sufficiently high level. Particularly, we exclude a computational
software program \emph{Mathematica} \cite{Mathematica}. There exist  
several symbolic packages, written to evaluate the diagrams or
algebraic expressions of MBPT \cite{Derevianko2,Dzuba,Csepes}. Nevertheless, the
specific features, necessary in computations, are known,
as a rule, to the authors only. Therefore we developed the symbolic package
\emph{NCopertors} \cite{Jursenas}, and the computational part of the present study relies on
this package. The package has been tested by generating the terms
of the first-order wave operator and the second-order effective Hamiltonian.

The other two parts of this paper are devoted to the construction of model space and the reduction of generated terms.
A good survey to various aspects of the problems that merge the selection of model space  can be found
in the reports of Lindgren, Kutzelnigg et.al. \cite{Lindgren2,Kutzelnigg}. On the one hand, we follow the traditional
MBPT \cite{Lindgren} by excluding three types of the one-electron orbitals:
core, virtual (or excited) and valence. On the other hand, we propose an algebraic investment that accounts for
the operation of valence creation and annihilation operators on the model space in a strict manner (Sec. \ref{model}). 
In the result, we  formulate a precise statement which determines the amount of Fock space operators by 
the differing types of one-electron orbitals. That is, it ascertains
the terms of the wave operator producing none zero contributions to the effective
interaction operator. The theorem enables to simplify algebraic procedure, improving further on analysis (especially
reduction) of terms. To apply the suggested formulation of PT, we
display all obtained one-body and two-body terms of the third-order effective
Hamiltonian in irreducible tensor form (Sec. \ref{reduction}). The
reduction is performed by using the angular momentum theory \cite{Jucys,Jucys2}.
The reduction scheme is presented in a versatile disposition form, thus it is suited with 
the coupled-cluster (CC) approach, if some elementary replacements are initiated.
Meanwhile, we show how suggested algebraic approach results to a compactness
and easily accessible form of terms. The method is based
on the construction of effective $n$-particle matrix elements, rather
than the writing of each algebraic expression side by side its diagrammatic
visualization, as, for example, in the classical works of Ho, Lindgren et. al. \cite{Ho,Safronova,Lindgren,Blundell2}
as well as in the recently appeared papers \cite{Porsev,Derevianko3,Derevianko4,Johnson}.

%% #############################################################################
%% ############################## Section II ###################################
%% #############################################################################
\section{The selection of model space}
\label{model}

\subsection{Hilbert space}\label{D1}

Suppose given a set $X\equiv\{h_{k} \}_{k=1}^{\infty}$ of orthonormal
$N$-electron eigenstates $h_{k}$ of the central-field Hamiltonian
$H_{0}$ with the eigenvalues $E_{0k}$. The {\em configuration state functions} (CSF) $h_{k}$ are characterized
by the sets $\{\Pi_{k}^{X}\Lambda_{k}M_{k}\}$ of numbers: (i) the configuration parity $\Pi_{k}^{X}=\pm 1$;
(ii) the irreducible representations $\Lambda_{k}$ of group $\mathrm{G}$, where $\Lambda_{k}\equiv L_{k}S_{k}$ if
$\mathrm{G}=\mathrm{SO}^{L}(3)\times\mathrm{SU}^{S}(2)$ and $\Lambda_{k}\equiv J_{k}$ if 
$\mathrm{G}=\mathrm{SU}(2)$. The indices 
$M_{k}\equiv M_{L_{k}}M_{S_{k}}$ or $M_{k}\equiv M_{J_{k}}$ enumerate the basis for $\Lambda_{k}$.

By \cite{Lehmann}, we let $\langle,\cdot,\rangle_{\mathcal{H}}:X \times X\longrightarrow \mathbb{R}$ be a real function,
called the {\em scalar product} on $N$-particle Hilbert space $\mathcal{H}$, where

\begin{equation}
\mathcal{H}=\{h_{k}: \langle h_{k}\cdot h_{l}\rangle_{\mathcal{H}}=\delta_{kl}, \forall k,l\in\mathbb{Z}^{+} \}.
\label{eq:N1}
\end{equation}

\noindent{}The infinite Hilbert space $\mathcal{H}$ is assumed to be {\em separable}, thus there exist the linear forms
$\Psi(i)\in\mathcal{H}$, the parameter $\epsilon>0$ and the integer $I_{\epsilon}$ such that
$\Vert \Psi(i)-\sum_{k=1}^{M}c_{k}(i)h_{k}\Vert_{\mathcal{H}}<\epsilon$ for any $M>I_{\epsilon}$ and $c_{k}(i)\in\mathbb{R}$,
$i=1,2,\ldots,D$. If particularly $M=\infty$, then $\Psi(i)$ will denote the eigenstates of the $N$-electron
atomic Hamiltonian $H=H_{0}+V$ with the eigenvalues $E_{i}$, where $V$ represents the perturbation.

Along with the scalar product (or equivalently the norm $\Vert,\Vert=\sqrt{\langle,\cdot,\rangle}$), 
we also propose a unit operator on $\mathcal{H}$.

\begin{prop}\label{propN1}
The form $\widehat{\mathbf{1}}:\mathcal{H}\longrightarrow \mathcal{H}$, expressed by

\begin{equation}
\widehat{\mathbf{1}}=\sum_{k=1}^{\infty}h_{k}h_{k}^{\dagger},
\label{eq:N2}
\end{equation}

\noindent{}is a unit operator on $N$-particle Hilbert space $\mathcal{H}$.
\end{prop}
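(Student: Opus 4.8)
The plan is to verify directly that $\widehat{\mathbf{1}}\Psi=\Psi$ for every $\Psi\in\mathcal{H}$; since $\mathcal{H}$ is a linear space, this is precisely the statement that the form (\ref{eq:N2}) is the unit (identity) operator. First I would pin down the meaning of the symbol $h_{k}^{\dagger}$: for each $k$ it is the bounded linear functional $h_{k}^{\dagger}\colon\mathcal{H}\to\mathbb{R}$ defined by $h_{k}^{\dagger}\Psi=\langle h_{k}\cdot\Psi\rangle_{\mathcal{H}}$, so that $h_{k}h_{k}^{\dagger}$ is the rank-one operator $\Psi\mapsto\langle h_{k}\cdot\Psi\rangle_{\mathcal{H}}\,h_{k}$. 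By orthonormality (\ref{eq:N1}) one has $h_{k}h_{k}^{\dagger}h_{l}=\delta_{kl}h_{l}$ and $(h_{k}h_{k}^{\dagger})(h_{l}h_{l}^{\dagger})=\delta_{kl}\,h_{k}h_{k}^{\dagger}$, so each $h_{k}h_{k}^{\dagger}$ is the orthogonal projector onto $\mathbb{R}h_{k}$ and these projectors are mutually orthogonal. Consequently the partial sums $P_{M}=\sum_{k=1}^{M}h_{k}h_{k}^{\dagger}$ are the orthogonal projectors onto $\mathrm{span}\{h_{1},\dots,h_{M}\}$, with $\Vert P_{M}\Psi\Vert_{\mathcal{H}}\le\Vert\Psi\Vert_{\mathcal{H}}$.

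Next I would make sense of the infinite sum. For $\Psi\in\mathcal{H}$ set $c_{k}=\langle h_{k}\cdot\Psi\rangle_{\mathcal{H}}$; Bessel's inequality gives $\sum_{k\ge1}c_{k}^{2}\le\Vert\Psi\Vert_{\mathcal{H}}^{2}<\infty$, whence $P_{M}\Psi=\sum_{k=1}^{M}c_{k}h_{k}$ is Cauchy and converges in $\mathcal{H}$. Thus $\widehat{\mathbf{1}}\Psi:=\lim_{M\to\infty}P_{M}\Psi$ is well defined, and the map $\Psi\mapsto\widehat{\mathbf{1}}\Psi$ is linear (being a pointwise limit of the linear maps $P_{M}$) and bounded by $1$. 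It then only remains to show that this limit equals $\Psi$ itself.

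Here I would invoke the completeness of the orthonormal system $\{h_{k}\}$ in $\mathcal{H}$, which is exactly what the separability assumption stated after (\ref{eq:N1}) provides: for every $\Psi\in\mathcal{H}$ and every $\epsilon>0$ there are an integer $M$ and scalars $d_{k}$ with $\Vert\Psi-\sum_{k=1}^{M}d_{k}h_{k}\Vert_{\mathcal{H}}<\epsilon$. Since $P_{M}$ realizes the best approximation to $\Psi$ from $\mathrm{span}\{h_{1},\dots,h_{M}\}$, this yields $\Vert\Psi-P_{M}\Psi\Vert_{\mathcal{H}}\le\Vert\Psi-\sum_{k=1}^{M}d_{k}h_{k}\Vert_{\mathcal{H}}<\epsilon$, and letting $\epsilon\to0$ gives $P_{M}\Psi\to\Psi$. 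Hence $\widehat{\mathbf{1}}\Psi=\Psi$ for all $\Psi\in\mathcal{H}$, which is the assertion. Equivalently, one may observe that $\langle h_{l}\cdot(\Psi-\widehat{\mathbf{1}}\Psi)\rangle_{\mathcal{H}}=0$ for every $l$ and conclude from maximality of $\{h_{k}\}$ that $\Psi-\widehat{\mathbf{1}}\Psi=0$, or appeal to Parseval's identity $\sum_{k}c_{k}^{2}=\Vert\Psi\Vert_{\mathcal{H}}^{2}$.

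The one genuinely nontrivial point is the handling of the infinite series in (\ref{eq:N2}): it must not be manipulated as a finite sum but treated as a strong limit of the projectors $P_{M}$, whose convergence rests on Bessel's inequality and whose limit is identified with the identity only because $\{h_{k}\}$ is a complete (total) orthonormal set. Everything else — the projector identities, the linearity, and the norm bound — is routine bookkeeping with (\ref{eq:N1}).
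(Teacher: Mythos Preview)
Your argument is correct and in fact more careful than the paper's own proof. The paper simply checks $\widehat{\mathbf{1}}h_{k}=h_{k}$ from the orthonormality relation (\ref{eq:N1}) and then invokes linearity on finite linear combinations $\Psi=\sum_{k=1}^{M}c_{k}h_{k}$, declaring this sufficient. You do essentially the same core computation, but you additionally (i) interpret the infinite sum in (\ref{eq:N2}) rigorously as a strong limit of the finite-rank projectors $P_{M}$, (ii) justify convergence via Bessel's inequality, and (iii) identify the limit with $\Psi$ using the best-approximation property together with the density statement following (\ref{eq:N1}). What your approach buys is a genuine treatment of the infinite-dimensional analysis that the paper leaves implicit; what the paper's approach buys is brevity, at the cost of sweeping the convergence of the series under the rug. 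Substantively the two proofs share the same skeleton: verify the action on basis vectors and extend by linearity and density.
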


\begin{proof}
For the vectors $h_{k}$, we immediately gather 
$\widehat{\mathbf{1}}h_{k}=\sum_{l=1}^{\infty}h_{l}\langle h_{l}\cdot h_{k}\rangle_{\mathcal{H}}=h_{k}$. 
In $\mathcal{H}$, we may construct various linear forms
$\Psi\equiv\sum_{k=1}^{M}c_{k}h_{k}\in\mathcal{H}$, $c_{k}\in\mathbb{R}$. Thus for the vectors $\Psi$, we get 
$\widehat{\mathbf{1}}\Psi=\sum_{k=1}^{M}c_{k}\widehat{\mathbf{1}}h_{k}=\Psi$. This proves the proposition.
\end{proof}

\subsection{Orthogonal subspaces}\label{D2}

From the given set $X$ of orthonormal functions $h_{k}$ (Sec. \ref{D1}), 
we form a subset $Y\subset X$ that is formed from the countable functions 
$h_{k}\equiv\phi_{k}$ $\forall k=1,2,\ldots,d<\infty$. We insist the functions $\phi_{k}$ to be identified
by the sets of numbers $\{\Pi^{Y}\Lambda_{k}M_{k} \}$. Here the configuration parity $\Pi^{Y}$ is identical
for all $\phi_{k}$, for all $k=1,2,\ldots,d$. The representations $\Lambda_{k}$ are obtained by reducing the
Kronecker products of irreducible representations $\lambda$ which in turn label the one-electron orbitals.
These orbitals are represented by two types \cite{Lindgren}: core ($\mathrm{c}$) and valence ($\mathrm{v}$). We also insist
the subset $Y$ to be complete by means of the allocation of valence orbitals in all possible ways.

Since $Y=\{\phi_{k} \}_{k=1}^{d}$ is the subset of vectors of $\mathcal{H}$, it is sufficient to introduce
a finite-dimensional subspace $\mathcal{P}$ ($\mathrm{dim}\:\mathcal{P}=d$) of $\mathcal{H}$ by

\begin{equation}
\mathcal{P}=\{\phi_{k}:\langle\phi_{k}\cdot\phi_{l}\rangle_{\mathcal{H}}=\delta_{kl},\forall k,l=1,2,\ldots,d \}.
\label{eq:N3}
\end{equation}

\noindent{}Then the orthogonal complement $\mathcal{P}^{\bot}\equiv\mathcal{Q}=\mathcal{H}\ominus
\mathcal{P}$ of $\mathcal{P}$ is defined by

\begin{equation}
\mathcal{Q}=\{\theta_{k}\equiv h_{d+k}:\langle\theta_{k}\cdot\theta_{l}\rangle_{\mathcal{H}}=
\delta_{kl},\forall k,l\in\mathbb{Z}^{+} \}.
\label{eq:N4}
\end{equation}

\noindent{}This immediately implies that

\begin{equation}
\langle\phi_{k}\cdot\theta_{l}\rangle_{\mathcal{H}}=0\quad \forall k=1,2,\ldots,d\quad \forall 
l\in\mathbb{Z}^{+},
\label{eq:N5}
\end{equation}

\noindent{}thus the orthonormal functions $\theta_{l}$
form the complement $Z=X\backslash Y=
\{\theta_{l} \}_{l=1}^{\infty}$, and they are particularly characterized by the sets of numbers
$\{\Pi_{l}^{Z}\Lambda_{l}M_{l} \}$.
Let us study the properties of $\theta_{l}$ that are predetermined by 
the orthogonality in Eq. (\ref{eq:N5}). First of all, Eq. (\ref{eq:N5}) is true, regardless of whether
$\Pi_{l}^{Z}=\Pi^{Y}$ or not, as $Z\cap Y=\emptyset$ and $h_{k}\in\mathcal{H}$ (see Eq. (\ref{eq:N1})).
Secondly, Eq. (\ref{eq:N5}) does not exclude the functions $\theta_{l}$ that could contain core and/or valence
orbitals of $\phi_{k}$ if $\Pi_{l}^{Z}\neq \Pi^{Y}$. 
By Eq. (\ref{eq:N1}), on the other hand, we are not confined to form the infinite set $X$ of functions
$h_{k}$ in any manner if reserving the formal conditions, presented in Sec. \ref{D1}. Considerably,
we insist the functions $\theta_{l}$ to include the one-electron orbitals that are absent in all $\phi_{k}$, 
irrespectively whether $\Pi_{l}^{Z}=\Pi^{Y}$ or not. These orbitals will be called excited ($\mathrm{e}$) or virtual. 

\subsection{The model functions}\label{D3}
From now on, we assume that the number $D$ of the selected eigenstates $\Psi(i)$ of $H$ equals to 
$\mathrm{dim}\:\mathcal{P}=d$. Then by Prop. \ref{propN1},

\begin{align}
\widehat{\mathbf{1}}\Psi(i)=\Psi(i)=&
\sum_{k=1}^{d}\phi_{k}\langle\phi_{k}\cdot\Psi(i)\rangle_{\mathcal{H}}+
\sum_{l=1}^{\infty}\theta_{l}\langle\theta_{l}\cdot\Psi(i)\rangle_{\mathcal{H}} \nonumber \\
=&\Phi(i)+\widehat{Q}\Psi(i),\quad \Phi(i)=\widehat{P}\Psi(i)=\sum_{k=1}^{d}c_{k}(i)\phi_{k},
\label{eq:N6}
\end{align}

\noindent{}where

\begin{align}
&\widehat{P}=\sum_{k=1}^{d}\phi_{k}\phi_{k}^{\dagger},\quad
\widehat{Q}=\sum_{l=1}^{\infty}\theta_{l}\theta_{l}^{\dagger},\quad 
\widehat{P}+\widehat{Q}=\widehat{\mathbf{1}},\label{eq:N7}\\
&c_{k}(i)=\langle\phi_{k}\cdot\Psi(i)\rangle_{\mathcal{H}}=\langle\phi_{k}\cdot\Phi(i)\rangle_{\mathcal{H}}
\in\mathbb{R}.\label{eq:N8}
\end{align}

\noindent{}The functions $\Phi(i)$ will be called the model functions of $\mathcal{P}$ (see Ref. \cite{Lindgren}).
In Eq. (\ref{eq:N6}), if replacing $\widehat{Q}$ with 
$\widehat{Q}=(\widehat{\Omega}-\widehat{\mathbf{1}})\widehat{P}$, then
we simply get $\Psi(i)=\widehat{\Omega}\Phi(i)$, where $\widehat{\Omega}$ is called the wave operator. The procedure
how to construct the generalized Bloch equation for $\widehat{\Omega}$ is well known and it will not be presented here.

\subsection{The improvement of generation of Hilbert space operators}\label{D4}

In this section, we wish to select the rules that allow to generate the terms of the wave operator as well as the
effective interaction operator efficiently. As it will be demonstrated later, the rules under consideration 
significantly improve the computation of terms of higher-order PT.

The wave operator is represented by the infinite series, where the first term is $\widehat{\mathbf{1}}$. The rest of terms
$\widehat{\Omega}^{(m)}$ ($m\geq 1$) consist of the sum of $n$-body parts $\widehat{\Omega}_{n}^{(m)}$,
where $n=1,2,\ldots$ depends on $m$. We define $\widehat{\Omega}_{n}$
(for any $m$, in general) by

\begin{align}
\widehat{\Omega}_{n}&={\displaystyle \sum_{\alpha_{i}\beta_{j}}\widehat{Q}\widehat{O}_{n}(\alpha\bar{\beta})\widehat{P}
\;\omega_{\alpha_{1}\alpha_{2}\ldots\alpha_{n}\bar{\beta}_{1}\bar{\beta}_{2}\ldots\bar{\beta}_{n}}},\label{eq:2.2} \\
\widehat{O}_{n}(\alpha\bar{\beta})&=a_{\alpha_{1}}a_{\alpha_{2}}\ldots a_{\alpha_{n-1}}a_{\alpha_{n}}a_{\bar{\beta}_{n}}^{
\dagger}a_{\bar{\beta}_{n-1}}^{\dagger}\ldots a_{\bar{\beta}_{2}}^{\dagger}a_{\bar{\beta}_{1}}^{\dagger}.
\label{eq:2.3}
\end{align}

\noindent{}In Eq. (\ref{eq:2.2}), the summation is performed over
all types ($\mathrm{e}$, $\mathrm{v}$, $\mathrm{c}$) of the one-electron
orbitals. Hereafter, we do not write in the sum the concrete values
of orbitals ($\ldots\alpha_{i},\ldots,\bar{\beta}_{j},\ldots$), only
designating their type ($\alpha_{i}=\mathrm{z}_{i}$, $\beta_{j}=\mathrm{z}_{j}$,
$\mathrm{z}=\mathrm{e},\mathrm{v},\mathrm{c}$). The quantities $\omega_{\alpha_{1}\ldots\alpha_{n}\bar{\beta}_{1}
\ldots\bar{\beta}_{n}}$
denote some structure coefficients, composed of the product of one-particle
and/or two-particle matrix elements (energy denominators included). In accordance with Eq. (\ref{eq:N7}),
the Fock space operators $a_{\alpha}$ (creation) and $a_{\bar{\beta}}^{\dagger}$
(annihilation) are assigned as follows

\begin{equation}
\begin{array}{ll}
(\textrm{A})\;\;a_{\mathrm{c}}\widehat{P}=0,             
&\quad (\textrm{C})\;\;a_{\mathrm{v}}\widehat{P}\neq0,\\
(\textrm{B})\;\;a_{\mathrm{\bar{e}}}^{\dagger}\widehat{P}=0,
&\quad (\textrm{D})\;\;a_{\mathrm{\bar{v}}}^{\dagger}\widehat{P}\neq0.
\end{array}
\label{eq:2.4}
\end{equation}

\noindent{}On the one hand, items (A)-(B) correspond to the definitions, presented in Ref. 
\cite[Eq. (13.3), p. 288]{Lindgren}.
On the other hand, items (C)-(D) are strictly determined and they realize the arguments \cite[Sec. 13.1.2, p. 288]{Lindgren}
that it is possible to create as well as annihilate valence electrons in $\mathcal{P}$. It will be demonstrated later that
these items are of special significance.
Moreover, items (C)-(D) point to the definition of the complete model space \cite{Lindgren2} by means of the allocation
of valence states in $\phi_{k}$ in all possible ways (see Sec. \ref{D2}).

The definition of $\widehat{\Omega}_{n}$
conforms to the elucidation, enunciated in accordance with the generalized Bloch equation
\cite[Eq. (11.62), p. 247]{Lindgren}, as $\widehat{\Omega}_{n}$
plays a role of operator, connecting $\mathcal{Q}$ and $\mathcal{P}$
spaces. (In diagrammatic representation $\widehat{\Omega}_{n}$ includes
open diagrams.)
 
Here are the rules which establish
the distributions of $\alpha_{i}$, $\bar{\beta}_{j}$ providing
none zero contributions of $\widehat{O}_{n}(\alpha\bar{\beta})$.

\begin{prop}[\cite{Lindgren}, p. 292]\label{prop1}
The operator $\widehat{P}\widehat{O}_{n}(\alpha\bar{\beta})\widehat{P}\neq0$
$\forall \alpha,\beta=\mathrm{v}$. 
\end{prop}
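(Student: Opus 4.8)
The plan is to prove a non-vanishing statement, so it suffices to exhibit a single model function $\phi_{k}\in\mathcal{P}$ and a single admissible all-valence index distribution on which $\widehat{P}\widehat{O}_{n}(\alpha\bar{\beta})\widehat{P}$ acts non-trivially; the quantifier ``$\forall\,\alpha,\beta=\mathrm{v}$'' is to be read as asserting that the valence \emph{type} distribution is one of those that can contribute, not that every concrete assignment of valence orbitals works (for instance $a_{\mathrm{v}}a_{\mathrm{v}}=0$ when the two labels coincide).

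First I would invoke the completeness of the model space from Sec. \ref{D2}: since the $\phi_{k}$ are built by allocating the valence electrons among the valence orbitals in all possible ways, $\widehat{P}=\sum_{k=1}^{d}\phi_{k}\phi_{k}^{\dagger}$ acts as the identity on the subspace spanned by the configurations that have the core fully occupied and the valence electrons distributed over the valence orbitals. In particular $\mathcal{P}$ contains a configuration $\phi_{k}$ in which some $n$ distinct valence orbitals $v_{1},\ldots,v_{n}$ are occupied --- this is exactly the situation in which an $n$-body part $\widehat{\Omega}_{n}$ is relevant, and it is precisely what items (C)--(D) of Eq. (\ref{eq:2.4}) encode, namely that valence orbitals may be both annihilated from and created in functions of $\mathcal{P}$. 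Fix such a $\phi_{k}$ and take the diagonal choice $\alpha_{i}=\bar{\beta}_{i}=v_{i}$, $i=1,\ldots,n$.

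Next I would evaluate $\widehat{O}_{n}(\alpha\bar{\beta})$ on $\phi_{k}$. Reading Eq. (\ref{eq:2.3}) from the right, the annihilation operators $a_{\bar{\beta}_{1}}^{\dagger},\ldots,a_{\bar{\beta}_{n}}^{\dagger}$ successively remove $v_{1},\ldots,v_{n}$ and the creation operators $a_{\alpha_{n}},\ldots,a_{\alpha_{1}}$ restore them; in occupation-number representation this product is simply the projector onto configurations with $v_{1},\ldots,v_{n}$ all occupied, the Fermi sign collected while commuting $a_{v_{i}}^{\dagger}$ past the intervening occupied orbitals being exactly cancelled by the sign collected by $a_{v_{i}}$ on the way back. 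Hence $\widehat{O}_{n}(\alpha\bar{\beta})\,\phi_{k}=\phi_{k}$, and since $\phi_{k}\in\mathcal{P}$ we have $\widehat{P}\phi_{k}=\phi_{k}$, so that $\widehat{P}\widehat{O}_{n}(\alpha\bar{\beta})\widehat{P}\,\phi_{k}=\phi_{k}\neq0$, which establishes $\widehat{P}\widehat{O}_{n}(\alpha\bar{\beta})\widehat{P}\neq0$.

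I expect the only step needing genuine care to be the first one: extracting from the completeness hypothesis on $Y$ (together with (C)--(D)) the existence of a model function with $n$ simultaneously occupied valence orbitals, i.e. the non-emptiness of the relevant valence sector; once that is secured the occupation-number computation is routine, and it is worth spelling out that the anticommutation signs cancel in pairs so that the contribution cannot vanish by accident. If one prefers not to use the diagonal choice, the same conclusion follows by annihilating $v_{1},\ldots,v_{n}$ and then creating any $n$ valence orbitals that are empty in the intermediate configuration: completeness of $\mathcal{P}$ returns a nonzero multiple of a model function, and the argument goes through unchanged.
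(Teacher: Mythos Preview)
Your argument is correct and follows essentially the same route the paper takes: the paper simply remarks that Proposition~\ref{prop1} ``is obvious, and it directly follows from Eq.~(\ref{eq:2.4}),'' i.e., from items (C)--(D) together with the completeness of $\mathcal{P}$, which is exactly what you unpack in detail. Your explicit diagonal choice $\alpha_i=\bar{\beta}_i=v_i$ and the verification that $\widehat{O}_n(\alpha\bar{\beta})\phi_k=\phi_k$ make rigorous what the paper leaves implicit.
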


\begin{prop}\label{prop2}
The operator $\widehat{Q}\widehat{O}_{n}(\alpha\bar{\beta})\widehat{Q}\neq0$
$\forall \alpha,\beta=\mathrm{z}$. 
\end{prop}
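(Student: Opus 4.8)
The plan is to show that the action of $\widehat{O}_{n}(\alpha\bar\beta)$ on any basis vector $\theta_{l}\in\mathcal{Q}$, with all $\alpha_{i},\bar\beta_{j}$ of arbitrary type $\mathrm{z}\in\{\mathrm{e},\mathrm{v},\mathrm{c}\}$, can be arranged to land again inside $\mathcal{Q}$, so that $\widehat{Q}\widehat{O}_{n}(\alpha\bar\beta)\widehat{Q}$ has a nonzero matrix element for a suitable choice of orbitals. First I would recall from Sec.~\ref{D2} that $\mathcal{Q}$ is spanned by the orthonormal functions $\theta_{l}=h_{d+l}$, which, by the freedom in forming the set $X$ noted there, may contain core, valence and excited orbitals in essentially any admissible combination (subject only to the formal conditions of Sec.~\ref{D1}). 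In particular, for each type $\mathrm{z}$ there exists a $\theta_{l}$ containing an orbital of that type, and also a $\theta_{l}$ lacking a prescribed orbital, so that both creation $a_{\mathrm{z}}$ and annihilation $a_{\bar{\mathrm{z}}}^{\dagger}$ act nontrivially on appropriate vectors of $\mathcal{Q}$.

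The key step is a counting/occupation argument. The string $\widehat{O}_{n}(\alpha\bar\beta)$ in Eq.~(\ref{eq:2.3}) contains exactly $n$ creation operators and $n$ annihilation operators, hence it is particle-number conserving; thus if $\theta_{l}$ is an $N$-electron function, $\widehat{O}_{n}(\alpha\bar\beta)\theta_{l}$ is again an $N$-electron function (or zero). I would then pick, for the given type assignment $\alpha_{i}=\mathrm{z}_{i}$, $\bar\beta_{j}=\mathrm{z}_{j}$, a reference $\theta_{l}$ that (i) contains occupied orbitals of the types $\mathrm{z}_{j}$ demanded by the annihilators $a_{\bar\beta_{1}}^{\dagger},\dots,a_{\bar\beta_{n}}^{\dagger}$, and (ii) has empty the orbitals of the types $\mathrm{z}_{i}$ demanded by the creators $a_{\alpha_{1}},\dots,a_{\alpha_{n}}$, with the concrete orbital labels chosen distinct from one another. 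By the anticommutation relations this yields a single nonzero Slater-type determinant $\theta_{l'}$ with a definite phase $\pm1$. Finally I would verify $\theta_{l'}\in\mathcal{Q}$: since $\theta_{l'}$ differs from $\theta_{l}$ by replacing some occupied orbitals with orbitals of types $\mathrm{e},\mathrm{v},\mathrm{c}$, and since by Sec.~\ref{D2} the excited orbitals (and suitably parity-shifted configurations) are precisely those absent from every $\phi_{k}$, one can choose the created orbitals so that $\theta_{l'}$ is not any $\phi_{k}$; hence $\widehat{Q}\theta_{l'}=\theta_{l'}$ and $\langle\theta_{l'}\cdot\widehat{O}_{n}(\alpha\bar\beta)\theta_{l}\rangle_{\mathcal{H}}\neq0$, giving $\widehat{Q}\widehat{O}_{n}(\alpha\bar\beta)\widehat{Q}\neq0$.

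The main obstacle I anticipate is the case $\mathrm{z}=\mathrm{v}$ together with the completeness requirement on $\mathcal{P}$: because $\mathcal{P}$ contains the valence orbitals allocated in \emph{all} possible ways, one must be careful that the determinant $\theta_{l'}$ produced above does not accidentally coincide with some $\phi_{k}\in\mathcal{P}$. The way around this is to exploit the parity or particle-number bookkeeping — if the type string forces a change of configuration parity relative to $\Pi^{Y}$, then automatically $\theta_{l'}\notin\mathcal{P}$; otherwise one uses the presence of at least one core or excited orbital, or of an extra excited orbital introduced among the $\alpha_{i}$, to push $\theta_{l'}$ outside $\mathcal{P}$ while keeping the matrix element nonzero. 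Once this selection is made carefully the rest is a routine application of Eqs.~(\ref{eq:N4})--(\ref{eq:N5}) and the Fock-space anticommutators.
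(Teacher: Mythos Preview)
Your approach is sound but genuinely different from the paper's. The paper does not construct explicit states at all; it reduces to the single-operator case, arguing that it suffices to show $\widehat{Q}a_{\mathrm{z}}\widehat{Q}\neq0$ for each type $\mathrm{z}$ (the adjoint case following from $(\widehat{Q}a_{\mathrm{z}}\widehat{Q})^{\dagger}=\widehat{Q}a_{\mathrm{z}}^{\dagger}\widehat{Q}$), and then dispatches each type by a pure operator identity using $\widehat{P}+\widehat{Q}=\widehat{\mathbf{1}}$ together with the rules (A)--(D) of Eq.~(\ref{eq:2.4}): rule (A) gives $\widehat{Q}a_{\mathrm{c}}\widehat{Q}=\widehat{Q}a_{\mathrm{c}}$; the adjoint of (B) gives $\widehat{Q}a_{\mathrm{e}}\widehat{Q}=a_{\mathrm{e}}\widehat{Q}$; and the valence case is rewritten as the commutator $[a_{\mathrm{v}},\widehat{P}]$. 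No occupation counting or choice of $\theta_{l}$ enters.

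Your constructive route---picking a reference $\theta_{l}\in\mathcal{Q}$ with suitably occupied and empty orbitals so that the string produces a definite $\theta_{l'}\in\mathcal{Q}$---has the advantage of treating general $n$ in one stroke and of making explicit why the all-valence assignment is the delicate one. One correction to your sketch: when you invoke ``the presence of at least one core or excited orbital'' to keep $\theta_{l'}$ outside $\mathcal{P}$, a core orbital does not help, since every $\phi_{k}$ already has all core shells filled; what actually works is an excited orbital in $\theta_{l}$ left untouched by the string, or equivalently a core \emph{hole}. Choosing $\theta_{l}$ to contain a spectator excited orbital settles the pure-valence case cleanly without any appeal to parity. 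The paper's argument is shorter but leaves the step from $n=1$ to general $n$ implicit; yours is more self-contained at the cost of some case bookkeeping.
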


\begin{prop}\label{prop3}
The operator $\widehat{Q}\widehat{O}_{n}(\alpha\bar{\beta})\widehat{P}\neq0$
$\forall \alpha=\mathrm{e},\mathrm{v}$, $\forall \beta=\mathrm{v},\mathrm{c}$.
\end{prop}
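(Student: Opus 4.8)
The plan is to prove nonvanishing constructively, by exhibiting a single nonzero matrix element for each admissible assignment of types. Since $\widehat{P}$ and $\widehat{Q}$ are the orthogonal projectors of Eq. (\ref{eq:N7}), for any model function $\phi\in\mathcal{P}$ and any complement function $\theta\in\mathcal{Q}$ one has $\langle\theta\cdot\widehat{Q}\widehat{O}_{n}(\alpha\bar{\beta})\widehat{P}\,\phi\rangle_{\mathcal{H}}=\langle\theta\cdot\widehat{O}_{n}(\alpha\bar{\beta})\phi\rangle_{\mathcal{H}}$. Hence it suffices to produce, for each choice of types $\alpha_{i}\in\{\mathrm{e},\mathrm{v}\}$ and $\bar{\beta}_{j}\in\{\mathrm{v},\mathrm{c}\}$, one pair $(\phi,\theta)$ making this number nonzero. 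Because the model space is complete over all valence allocations (Sec. \ref{D2}, items (C)--(D) of Eq. (\ref{eq:2.4})), its span coincides with that of all closed-core determinants whose open occupancies lie entirely in the valence set; I may therefore test the operator on a single such determinant $\phi$ whose occupancies I tailor to the index types, with all orbital labels chosen pairwise distinct.

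First I would let the annihilation operators $a_{\bar{\beta}_{n}}^{\dagger}\cdots a_{\bar{\beta}_{1}}^{\dagger}$ act on $\phi$. For a core index $\bar{\beta}_{j}=\mathrm{c}$ the orbital is occupied in every model function (closed core), so its removal is nonzero; for a valence index $\bar{\beta}_{j}=\mathrm{v}$ I would select $\phi$ with that valence orbital occupied, which completeness permits, so the step is nonzero by item (D) of Eq. (\ref{eq:2.4}). With the labels distinct no Pauli cancellation occurs, and the intermediate vector is a single nonzero determinant with the $\bar{\beta}$-orbitals emptied. Next the creation operators $a_{\alpha_{1}}\cdots a_{\alpha_{n}}$ act: for an excited index $\alpha_{i}=\mathrm{e}$ the target is absent from all $\phi_{k}$ (Sec. \ref{D2}) and hence empty, so the creation is nonzero; for $\alpha_{i}=\mathrm{v}$ I would arrange the target to be vacant in $\phi$, again possible by completeness. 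The outcome is, up to a sign from the anticommutations, one determinant $\theta$, and $\langle\theta\cdot\widehat{O}_{n}(\alpha\bar{\beta})\phi\rangle_{\mathcal{H}}=\pm1$.

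It remains to place $\theta$ in $\mathcal{Q}$. The decisive observation is that as soon as one $\alpha_{i}=\mathrm{e}$ (an excited orbital becomes occupied) or one $\bar{\beta}_{j}=\mathrm{c}$ (a core hole is opened), the determinant $\theta$ is no longer a closed-core valence-only configuration; it is therefore orthogonal to every $\phi_{k}$ and lies in the complement, by Eqs. (\ref{eq:N4})--(\ref{eq:N5}). Combining the two displays gives $\langle\theta\cdot\widehat{Q}\widehat{O}_{n}(\alpha\bar{\beta})\widehat{P}\,\phi\rangle_{\mathcal{H}}=\pm1\neq0$, whence $\widehat{Q}\widehat{O}_{n}(\alpha\bar{\beta})\widehat{P}\neq0$, which is the assertion.

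The main obstacle is bookkeeping rather than depth: I must ensure that the chosen string of creations and annihilations does not vanish identically through the Pauli principle, and that the several normal-ordered contractions cannot conspire to cancel. Fixing the occupancies of $\phi$ in advance and taking all orbital labels distinct removes both difficulties, collapsing each matrix element to a single $\pm1$. One degenerate assignment must be set aside: if every $\alpha_{i}=\mathrm{v}$ and every $\bar{\beta}_{j}=\mathrm{v}$, the operator merely redistributes valence electrons among valence orbitals and, by completeness, maps $\mathcal{P}$ into $\mathcal{P}$; this is exactly the content of Prop. \ref{prop1} ($\widehat{P}\widehat{O}_{n}(\alpha\bar{\beta})\widehat{P}\neq0$). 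In the present proposition one therefore reads at least one index as the exciting type $\mathrm{e}$ or the hole-opening type $\mathrm{c}$, which is precisely what forces $\theta\in\mathcal{Q}$ and drives the construction above.
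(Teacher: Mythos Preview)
Your argument is correct and considerably more explicit than what the paper offers. The paper's own treatment is a single sentence: the result ``immediately follows from Propositions~\ref{prop1}--\ref{prop2}, and there is no point to present it here.'' In effect the paper is asserting that the elementary rules of Eq.~(\ref{eq:2.4}) (which underlie the proofs of Propositions~\ref{prop1}--\ref{prop2}) settle the matter, without spelling out the mechanism. Your proof supplies that mechanism constructively: pick a tailored determinant in $\mathcal{P}$, track the action of $\widehat{O}_n$ through to a single determinant $\theta$, and observe that an excited particle or a core hole forces $\theta\in\mathcal{Q}$. This is a genuinely different route---direct construction of a nonzero matrix element rather than a bare appeal to prior propositions---and it buys transparency: one sees exactly why the type restrictions $\alpha\in\{\mathrm{e},\mathrm{v}\}$, $\beta\in\{\mathrm{v},\mathrm{c}\}$ arise.

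Your closing paragraph on the all-valence case is also well placed. The paper, too, singles out $\alpha=\beta=\mathrm{v}$ immediately after stating Proposition~\ref{prop3} and devotes the discussion leading to Corollary~\ref{cor2} to precisely the phenomenon you identify: under completeness (and parity preservation) the operator $\widehat{O}_n(\mathrm{v}\bar{\mathrm{v}})$ merely shuffles valence occupancies within $\mathcal{P}$, so $\widehat{Q}\widehat{O}_n(\mathrm{v}\bar{\mathrm{v}})\widehat{P}=0$. You have anticipated that refinement. The one nuance you might add is that the all-valence operator can still land in $\mathcal{Q}$ when it changes configuration parity (since $\mathcal{P}$ is restricted to fixed parity $\Pi^{Y}$); this is exactly the content of the parity selection rule in Corollary~\ref{cor2}, and it explains why the paper keeps the case $\alpha=\beta=\mathrm{v}$ in the statement of Proposition~\ref{prop3} rather than excluding it outright.
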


\noindent{}By Eq. (\ref{eq:2.3}), the self-adjoint operator $\widehat{O}_{n}^{\dagger}(\alpha\bar{\beta})
=\widehat{O}_{n}(\bar{\beta}\alpha)$.
Thus, Proposition \ref{prop3} may be reformulated in a distinct way.

\begin{cor}\label{cor1}
The operator $\widehat{P}\widehat{O}_{n}(\alpha\bar{\beta})\widehat{Q}\neq0$
$\forall \alpha=\mathrm{v},\mathrm{c}$, $\forall \beta=\mathrm{e},\mathrm{v}$. 
\end{cor}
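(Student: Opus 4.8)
The plan is to derive Corollary~\ref{cor1} as a formal consequence of Proposition~\ref{prop3} by taking adjoints, so no new operator-algebra argument is needed. First I would recall the observation recorded just after Proposition~\ref{prop3}: from the explicit normal-ordered form in Eq.~(\ref{eq:2.3}) one reads off $\widehat{O}_{n}^{\dagger}(\alpha\bar\beta)=\widehat{O}_{n}(\bar\beta\alpha)$, since taking the Hermitian adjoint reverses the product of the $a_{\alpha_i}$ and $a_{\bar\beta_j}^{\dagger}$ and exchanges creation operators with annihilation operators, which is precisely the effect of swapping the roles of the label strings $\alpha$ and $\bar\beta$. I would also note that the projectors are self-adjoint, $\widehat{P}^{\dagger}=\widehat{P}$ and $\widehat{Q}^{\dagger}=\widehat{Q}$, which is immediate from Eq.~(\ref{eq:N7}) because $\widehat{P}=\sum_k \phi_k\phi_k^{\dagger}$ and $\widehat{Q}=\sum_l \theta_l\theta_l^{\dagger}$ are sums of orthogonal rank-one projectors.

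Next I would compute $\bigl(\widehat{Q}\,\widehat{O}_{n}(\alpha\bar\beta)\,\widehat{P}\bigr)^{\dagger} = \widehat{P}^{\dagger}\,\widehat{O}_{n}^{\dagger}(\alpha\bar\beta)\,\widehat{Q}^{\dagger} = \widehat{P}\,\widehat{O}_{n}(\bar\beta\alpha)\,\widehat{Q}$. An operator is nonzero if and only if its adjoint is nonzero, so $\widehat{Q}\,\widehat{O}_{n}(\alpha\bar\beta)\,\widehat{P}\neq 0$ is equivalent to $\widehat{P}\,\widehat{O}_{n}(\bar\beta\alpha)\,\widehat{Q}\neq 0$. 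By Proposition~\ref{prop3} the left-hand side holds exactly when $\alpha=\mathrm{e},\mathrm{v}$ and $\beta=\mathrm{v},\mathrm{c}$. Therefore $\widehat{P}\,\widehat{O}_{n}(\bar\beta\alpha)\,\widehat{Q}\neq 0$ under those same type restrictions; relabelling the dummy symbols (write the first argument of $\widehat{O}_n$ as $\alpha'$ and the second as $\bar\beta'$, i.e.\ set $\alpha'=\beta$, $\beta'=\alpha$), the condition $\alpha=\mathrm{e},\mathrm{v}$, $\beta=\mathrm{v},\mathrm{c}$ becomes $\alpha'=\mathrm{v},\mathrm{c}$, $\beta'=\mathrm{e},\mathrm{v}$, which is exactly the statement of the corollary.

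There is essentially no hard step here; the only point that deserves a line of care is the justification of the identity $\widehat{O}_{n}^{\dagger}(\alpha\bar\beta)=\widehat{O}_{n}(\bar\beta\alpha)$ at the level of the index strings, namely checking that reversing the operator product and dualizing each factor sends the ordered list $(\alpha_1,\dots,\alpha_n,\bar\beta_n,\dots,\bar\beta_1)$ to the list obtained by the analogous construction applied to $(\bar\beta_1,\dots,\bar\beta_n,\bar\alpha_n,\dots,\bar\alpha_1)$; this is a direct inspection of Eq.~(\ref{eq:2.3}) and introduces no sign ambiguity because we only care whether the operator vanishes. I would therefore keep the proof to three or four sentences: state self-adjointness of $\widehat P,\widehat Q$, invoke $\widehat{O}_{n}^{\dagger}(\alpha\bar\beta)=\widehat{O}_{n}(\bar\beta\alpha)$, take the adjoint of the operator in Proposition~\ref{prop3}, use that a bounded operator vanishes iff its adjoint does, and read off the type conditions after relabelling.
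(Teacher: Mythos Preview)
Your proposal is correct and follows essentially the same route as the paper: the paper simply remarks that $\widehat{O}_{n}^{\dagger}(\alpha\bar\beta)=\widehat{O}_{n}(\bar\beta\alpha)$ from Eq.~(\ref{eq:2.3}) and states that Proposition~\ref{prop3} ``may be reformulated in a distinct way,'' which is exactly your adjoint-and-relabel argument spelled out in full.
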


\noindent{}The proof of Proposition \ref{prop1} is obvious, and it directly follows from Eq. (\ref{eq:2.4}). 

\begin{proof}[Proof \textrm{$(${\bf Proposition \ref{prop2}}$)$}]

It suffices to prove that $\widehat{Q}a_{\mathrm{z}}\widehat{Q}\neq0$,
since in this case, the expression
$\widehat{Q}a_{\mathrm{z}}^{\dagger}\widehat{Q}=(\widehat{Q}a_{\mathrm{z}}\widehat{Q})^{\dagger}\neq0$.
By Eq. (\ref{eq:2.4}) and the expression $\widehat{P}+\widehat{Q}=\widehat{\mathbf{1}}$,
we obtain: 
(1) $\widehat{Q}a_{\mathrm{c}}\widehat{Q}=\widehat{Q}a_{\mathrm{c}}$.
But $\widehat{Q}a_{\mathrm{c}}=(a_{\mathrm{c}}^{\dagger}\widehat{Q})^{\dagger}\neq 0$.
(2) The operator $\widehat{Q}a_{\mathrm{e}}\widehat{Q}=a_{\mathrm{e}}\widehat{Q}\neq0$.
(3) $\widehat{Q}a_{\mathrm{v}}\widehat{Q}=[a_{\mathrm{v}},\widehat{P}]\neq 0$, 
where $[\cdot,\cdot]$ denotes a commutator.

\end{proof}

\noindent{}The proof of Proposition \ref{prop3} immediately follows from Propositions \ref{prop1}-\ref{prop2},
and there is no point to present it here. 

Proposition \ref{prop3} agrees with the statement of Lindgren \cite[p. 292]{Lindgren},
if the identity $\alpha=\beta=\mathrm{v}$ is neglected. The last condition,
as a rule, is simply postulated to be false; otherwise, due to zero-valued
energy denominators, the infinite terms $\widehat{\Omega}_{n}$
are observed. However, we will study the operators $\widehat{Q}\widehat{O}_{n}(\mathrm{v\bar{v}})\widehat{P}$
on the Hilbert space $\mathcal{H}$
in a more detail to show in which cases the rejection of the excluded condition $\alpha=\beta=\mathrm{v}$
is true, since it has a direct connection to the properties of the model
functions $\Phi(i)$ (see Eq. (\ref{eq:N6})) which form the subspace $\mathcal{P}$ (see Eq. (\ref{eq:N3})).

Suppose for simplicity that $n=1$. Also, let us mark (see Sec. \ref{D2})

$$
\phi_{k}\equiv h\bigl(\lambda_{k1}^{N_{k1}}\lambda_{k2}^{N_{k2}}\ldots\lambda_{ku}^{N_{ku}}
\Gamma_{k}\Pi^{Y}\Lambda_{k} M_{k} \bigr),\quad k=1,2,\ldots d,
$$

\noindent{}where $\lambda_{ku}^{N_{ku}}$ is the $u$th electron shell in $\phi_{k}$; $\Gamma_{k}$ denotes additional
quantum numbers. Then by Eq. (\ref{eq:N7}), the operator 

\begin{align}
&\widehat{Q}\widehat{O}_{1}\left(\mathrm{v}\bar{\mathrm{v}}\right)\widehat{P}
={\displaystyle\sum_{k=1}^{d}\sum_{l=1}^{\infty}}
\theta_{l}\langle\theta_{l}\cdot\phi_{k}^{\prime}\rangle_{\mathcal{H}}\phi_{k}^{\dagger},\label{eq:2.5}\\
&\phi_{k}^{\prime}\equiv a_{\mathrm{v}_{1}}a_{\mathrm{\bar{v}}_{1}}^{\dagger}\phi_{k}
=(-1)^{\sum_{x=1}^{i-1}N_{kx}+\sum_{y=1}^{j-1}N_{ky}+\delta_{ij}+1}\delta_{\lambda_{\mathrm{\bar{v}}_{1}}\lambda_{ki}}
\nonumber \\
&\times h\bigl(n_{k1}\lambda_{k1}^{N_{k1}}n_{k2}\lambda_{k2}^{N_{k2}}\ldots n_{k\:i-1}\lambda_{k\:i-1}^{N_{k\:i-1}}
n_{ki}\lambda_{ki}^{N_{ki}-1}n_{k\:i+1}\lambda_{k\:i+1}^{N_{k\:i+1}}\ldots \nonumber \\
&\ldots n_{k\:j-1}\lambda_{k\:j-1}^{N_{k\:j-1}}n_{kj}\lambda_{kj}^{N_{kj}+1}n_{k\:j+1}\lambda_{k\:j+1}^{N_{k\:j+1}}\ldots
n_{ku}\lambda_{ku}^{N_{ku}}
\overline{\Pi}_{k}\overline{\Lambda}_{k}\overline{M}_{k}\bigr).
\label{eq:2.5b}
\end{align}

\noindent{}The parity of the $N$-electron CSF $\phi_{k}^{\prime}$ equals to
$\overline{\Pi}_{k}=(-1)^{l_{\mathrm{v}_{1}}+l_{\mathrm{\bar{v}_{1}}}}\Pi^{Y}$.
It is assumed that $a_{\mathrm{\bar{v}_{1}}}^{\dagger}$ and $a_{\mathrm{v}_{1}}$
annihilate and create the $i$th and $j$th valence states of $\phi_{k}$, respectively.
If $\overline{\Pi}_{k}=\Pi^{Y}$,
then, by the definition of $\mathcal{P}$, obtained functions $h(\ldots)$ on the right
hand side of Eq. (\ref{eq:2.5b}) belong to the set $Y$, and thus 

$$
\phi_{k}^{\prime}=(-1)^{\sum_{x=1}^{i-1}N_{kx}+\sum_{y=1}^{j-1}N_{ky}+\delta_{ij}+1}\delta_{\lambda_{\mathrm{\bar{v}}_{1}}\lambda_{ki}}\phi_{k^{\prime}},
\quad k^{\prime}=1,2,\ldots,d.
$$

\noindent{}Thus for $i=j$, we obtain a particular case $k=k^{\prime}$, if the shell $\lambda_{ki}^{N_{ki}}$ in $\phi_{k}$ is
labeled by $\lambda_{ki}=\lambda_{\mathrm{\bar{v}}}$. In general, for some $k$, the functions $\phi_{k^{\prime}}$ are zeroes.
Nevertheless, due to the completeness of the finite set $Y$ (Sec. \ref{D2}), there will always be at least one
function $\phi_{k}$ with $\lambda_{ki}=\lambda_{\mathrm{\bar{v}}}$. But
$\langle\theta_{l}\cdot\phi_{k^{\prime}}\rangle_{\mathcal{H}}=0$,
$\forall k^{\prime}=1,2,\ldots,d$ (see Eq. (\ref{eq:N5})). 
This implies $\widehat{Q}\widehat{O}_{1}\left(\mathrm{v}\bar{\mathrm{v}}\right)\widehat{P}=0$.
The generalization of studied example holds for any $n$.

\begin{cor}\label{cor2}
The operator $\widehat{Q}\widehat{O}_{n}(\mathrm{v\bar{v}})\widehat{P}=0$,
if $\sum_{i=1}^{n}(l_{\mathrm{v}_{i}}+l_{k\mathrm{\bar{v}}_{i}})$ is even
for those $k$ values which determine the electron shell 
$n_{k\mathrm{\bar{v}}_{i}}\lambda_{k\mathrm{\bar{v}}_{i}}^{N_{k\mathrm{\bar{v}}_{i}}}$
in $\phi_{k}\in\mathcal{P}$.
\end{cor}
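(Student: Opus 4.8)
The plan is to lift the explicit $n=1$ computation preceding the statement, Eqs.~(\ref{eq:2.5})--(\ref{eq:2.5b}), to arbitrary $n$. Using Eq.~(\ref{eq:N7}) I would first rewrite
\begin{equation}
\widehat{Q}\,\widehat{O}_{n}(\mathrm{v}\bar{\mathrm{v}})\,\widehat{P}
=\sum_{k=1}^{d}\sum_{l=1}^{\infty}\theta_{l}\,
\langle\theta_{l}\cdot\phi_{k}^{\prime}\rangle_{\mathcal{H}}\,\phi_{k}^{\dagger},
\qquad
\phi_{k}^{\prime}\equiv a_{\mathrm{v}_{1}}\cdots a_{\mathrm{v}_{n}}\,
a_{\bar{\mathrm{v}}_{n}}^{\dagger}\cdots a_{\bar{\mathrm{v}}_{1}}^{\dagger}\,\phi_{k},
\end{equation}
so that the corollary becomes equivalent to the statement that $\langle\theta_{l}\cdot\phi_{k}^{\prime}\rangle_{\mathcal{H}}=0$ for every $l\in\mathbb{Z}^{+}$ and every $k=1,\dots,d$ once the parity hypothesis is imposed.

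Next I would analyse the action of the $n$ valence annihilators and $n$ valence creators on a CSF $\phi_{k}=h(\lambda_{k1}^{N_{k1}}\cdots\lambda_{ku}^{N_{ku}}\Gamma_{k}\Pi^{Y}\Lambda_{k}M_{k})$. Each $a_{\bar{\mathrm{v}}_{i}}^{\dagger}$ removes one electron from the valence orbital $\bar{\mathrm{v}}_{i}$ occurring in $\phi_{k}$ (and annihilates the state if that orbital is unoccupied); each $a_{\mathrm{v}_{i}}$ adds one electron in the valence orbital $\mathrm{v}_{i}$ (and annihilates the state if that orbital is already fully occupied); and — precisely as in Eq.~(\ref{eq:2.5b}) for $n=1$ — the net effect on the configuration parity is the factor $(-1)^{\sum_{i=1}^{n}(l_{\mathrm{v}_{i}}+l_{k\bar{\mathrm{v}}_{i}})}$. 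Hence, whenever $\phi_{k}^{\prime}\neq0$, it is again an $N$-electron CSF built only from core and valence orbitals, carrying the configuration parity $\overline{\Pi}_{k}=(-1)^{\sum_{i=1}^{n}(l_{\mathrm{v}_{i}}+l_{k\bar{\mathrm{v}}_{i}})}\Pi^{Y}$; the hypothesis that this exponent is even gives $\overline{\Pi}_{k}=\Pi^{Y}$.

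Then I would invoke the completeness of the finite set $Y$ (Sec.~\ref{D2}): since $Y$ realizes every allocation of the valence orbitals among the admissible shells at the fixed parity $\Pi^{Y}$, and $\phi_{k}^{\prime}$ is exactly such an allocation, $\phi_{k}^{\prime}$ lies in $\mathrm{span}\,Y=\mathcal{P}$, i.e.\ $\phi_{k}^{\prime}=\sum_{k^{\prime}=1}^{d}c_{kk^{\prime}}\phi_{k^{\prime}}$ with $c_{kk^{\prime}}\in\mathbb{R}$ (some, or all, possibly zero, just as observed in the $n=1$ case). Orthogonality, Eq.~(\ref{eq:N5}), then forces $\langle\theta_{l}\cdot\phi_{k}^{\prime}\rangle_{\mathcal{H}}=\sum_{k^{\prime}}c_{kk^{\prime}}\langle\theta_{l}\cdot\phi_{k^{\prime}}\rangle_{\mathcal{H}}=0$ for all $l$, and substitution back into the displayed expression yields $\widehat{Q}\,\widehat{O}_{n}(\mathrm{v}\bar{\mathrm{v}})\,\widehat{P}=0$.

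The step I expect to require the most care is the middle one — ensuring that $\phi_{k}^{\prime}$ does not leak out of $\mathcal{P}$ into $\mathcal{Q}$. This rests on two facts that should be stated explicitly: first, valence creation/annihilation operators can never introduce an excited orbital, so the only possible obstruction to $\phi_{k}^{\prime}\in Y$ is a parity mismatch, which the hypothesis rules out; second, the degenerate possibilities (annihilating an unoccupied orbital or overfilling a shell) simply return $\phi_{k}^{\prime}=0$ and are harmless. I would also note that permuting the labels $\mathrm{v}_{i},\bar{\mathrm{v}}_{i}$ in Eq.~(\ref{eq:2.3}) only redistributes sign factors and leaves the parity exponent $\sum_{i}(l_{\mathrm{v}_{i}}+l_{k\bar{\mathrm{v}}_{i}})$ untouched, so the conclusion is independent of how the Fock-space operators in $\widehat{O}_{n}$ are indexed.
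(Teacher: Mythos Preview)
Your proposal is correct and follows exactly the route the paper takes: the paper works out the $n=1$ case in Eqs.~(\ref{eq:2.5})--(\ref{eq:2.5b}) and then simply asserts that ``the generalization of studied example holds for any $n$,'' which is precisely what you have spelled out. Your write-up is in fact more careful than the paper's, since you allow $\phi_{k}^{\prime}$ to be a linear combination $\sum_{k'}c_{kk'}\phi_{k'}$ (which is what one really gets after acting with several Fock-space operators on a CSF) and you make explicit the two harmless degenerate cases.
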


\noindent{}(Obviously, the condition in Corollary \ref{cor2} holds for 
$\widehat{P}\widehat{O}_{n}(\mathrm{v\bar{v}})\widehat{Q}$
too.) In other words, Corollary \ref{cor2} represents an additional parity
selection rule. The practical treatment of present rule in the study
of triple excitations in CC approach can be found
in Ref.\cite[Sec. III-E, p. 6]{Porsev}.

We can now summarize. The selected $d$-dimensional subspace $\mathcal{P}$ of $N$-electron separable Hilbert space 
$\mathcal{H}$ is assumed to be constructed of the set $Y$ of same parity configuration state functions $\phi_{k}$
by allocating the valence electrons in all possible
ways (complete model space). Additionally, in order to avoid the divergence
of terms of the PT, we select the parity conservation rule in Corollary
\ref{cor2} to be true. The subspace $\mathcal{P}$ will be called the model space.

Finally, let us define the effective interaction operator by \cite[Eq. (15.5), p. 386]{Lindgren}

\begin{equation}
W={\displaystyle \sum_{n}}{\displaystyle \sum_{\xi\leq 4}}W_{n,\xi}={\displaystyle \sum_{n}}{\displaystyle \sum_{\xi\leq 4}}
\{\widehat{P}(V_{1}+V_{2})\widehat{\Omega}_{n}\widehat{P}\}_{\xi},
\label{eq:2.6}
\end{equation}

\noindent{}where $\xi=1,2,3,4$ denotes the number of contractions
between the $i$-body parts ($i=1,2$) of perturbation $V$ and the $n$-body parts of $\widehat{\Omega}$
(for $n+i-\xi\geq0$).
On behalf of the definition of $\mathcal{P}$, the associated operators $W_{n,\xi}$ are generated
by applying the following theorem.

\begin{thm}\label{theor1}
If the $n$-body part of the wave operator $\widehat{\Omega}$ is defined by Eq. (\ref{eq:2.2}), 
then none zero terms of the effective interaction
operator $W$ on the model space $\mathcal{P}$ are generated by maximal
eight types of the Fock space operators $\widehat{O}_{n}(\alpha\bar{\beta})$ for
all $n\geq2$.
\end{thm}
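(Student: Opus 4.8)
The plan is to count, for each fixed $n\geq2$, how many inequivalent \emph{types} of operator $\widehat{O}_{n}(\alpha\bar{\beta})$ — that is, how many assignments of the labels $\mathrm{e},\mathrm{v},\mathrm{c}$ to the $2n$ orbital slots — can survive inside $W_{n,\xi}=\{\widehat{P}(V_{1}+V_{2})\widehat{\Omega}_{n}\widehat{P}\}_{\xi}$ of Eq.~(\ref{eq:2.6}). First I would note that in $\widehat{\Omega}_{n}$, Eq.~(\ref{eq:2.2}), the factor $\widehat{Q}\widehat{O}_{n}(\alpha\bar{\beta})\widehat{P}$ carries $\widehat{Q}$ on the left and $\widehat{P}$ on the right, so by Proposition \ref{prop3} (with items (A)--(D) of Eq.~(\ref{eq:2.4})) every creation label must be $\alpha_{i}\in\{\mathrm{e},\mathrm{v}\}$ and every annihilation label $\beta_{j}\in\{\mathrm{v},\mathrm{c}\}$. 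Since $\widehat{O}_{n}(\alpha\bar{\beta})$ is, up to an overall sign, invariant under permutations of the $\alpha$'s among themselves and of the $\bar{\beta}$'s among themselves (Eq.~(\ref{eq:2.3})), such a type is completely determined by the pair $(p,r)$, where $p=\#\{i:\alpha_{i}=\mathrm{e}\}$ and $r=\#\{j:\beta_{j}=\mathrm{c}\}$; the remaining $n-p$ creators and $n-r$ annihilators are then of valence type.

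Next I would bound $p$ and $r$. Written in normal order, every term surviving in $W_{n,\xi}$ is a linear combination of operators $\widehat{P}\widehat{O}_{m}(\alpha'\bar{\beta}')\widehat{P}$, and since $\mathcal{P}$ consists of CSF's with a fixed, fully occupied core and no excited orbitals (Sec.~\ref{D2}), such an operator is nonzero only when all of $\alpha',\bar{\beta}'$ are valence — this is the content of Proposition \ref{prop1} read together with items (A)--(B). Hence every \emph{uncontracted} line of $(V_{1}+V_{2})\widehat{\Omega}_{n}$ must be valence; in particular the $p$ excited creators and the $r$ core annihilators of $\widehat{O}_{n}(\alpha\bar{\beta})$ must all be removed by contractions with $V_{1}+V_{2}$. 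Because $\widehat{O}_{n}(\alpha\bar{\beta})$ is already in normal order it admits no internal contractions, so an excited creator can be contracted only with an annihilator of $V$ and a core annihilator only with a creator of $V$; as $V_{1}+V_{2}$ supplies at most two creators and at most two annihilators, this forces $p\leq2$ and $r\leq2$. For $n\geq2$ this leaves the $3\times3=9$ candidate types $(p,r)\in\{0,1,2\}\times\{0,1,2\}$.

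Finally I would eliminate the type $(p,r)=(0,0)$. Here $\widehat{O}_{n}(\mathrm{v}^{n}\bar{\mathrm{v}}^{n})$ touches only valence orbitals, so $\widehat{Q}\widehat{O}_{n}(\mathrm{v}^{n}\bar{\mathrm{v}}^{n})\widehat{P}$ sends $\mathcal{P}$ into the span of CSF's having the same full core, no excited orbitals, and merely reshuffled valence shells. If $\sum_{i=1}^{n}(l_{\mathrm{v}_{i}}+l_{k\mathrm{\bar{v}}_{i}})$ is even this operator already vanishes by Corollary \ref{cor2}; if it is odd, its image lies in the sector of opposite parity $-\Pi^{Y}$, and since $V_{1}+V_{2}$ is a parity-even scalar while $\widehat{P}$ projects onto the parity-$\Pi^{Y}$ sector, one gets $\widehat{P}(V_{1}+V_{2})\,\widehat{Q}\widehat{O}_{n}(\mathrm{v}^{n}\bar{\mathrm{v}}^{n})\widehat{P}=0$ as well. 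In either case this type contributes nothing to $W$, so at most $9-1=8$ types generate nonzero terms of $W$ for every $n\geq2$, which is the assertion. I expect this last step to be the main obstacle: excluding $(0,0)$ needs the parity selection rule of Corollary \ref{cor2} \emph{combined} with the parity invariance of the perturbation, rather than following from the operator bookkeeping of Eq.~(\ref{eq:2.4}) alone; the counting in the first two paragraphs is routine.
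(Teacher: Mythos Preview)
Your argument follows essentially the same route as the paper's: restrict the labels via Proposition~\ref{prop3}, observe that the uncontracted lines surviving in $\widehat{P}(V_{1}+V_{2})\widehat{\Omega}_{n}\widehat{P}$ must all be valence (so at most two $\mathrm{e}$'s among the $\alpha_{i}$ and at most two $\mathrm{c}$'s among the $\bar{\beta}_{j}$), parametrize by $(p,r)\in\{0,1,2\}^{2}$, and discard $(0,0)$. The paper simply enumerates the eight surviving allocations rather than counting $9-1$, but the content is identical.

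The one point where you diverge is the elimination of $(p,r)=(0,0)$. The paper does not argue the odd-parity sub-case at all: immediately before the theorem it \emph{postulates} that the parity condition of Corollary~\ref{cor2} is in force (``we select the parity conservation rule in Corollary~\ref{cor2} to be true''), so $\widehat{Q}\widehat{O}_{n}(\mathrm{v}\bar{\mathrm{v}})\widehat{P}=0$ holds outright and $(0,0)$ never enters $\widehat{\Omega}_{n}$. Your alternative treatment of the odd-parity sub-case leans on $V_{1}+V_{2}$ being parity-even, but the paper explicitly allows one-body perturbations $v^{\tau_{i}}$ with arbitrary rank $\tau_{i}$ (e.g.\ multipole moments), so that assumption is not available in general. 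Within the paper's framework your extra step is unnecessary; outside it, it would need a different justification. Otherwise your proof is correct and matches the paper's.
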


\begin{proof}
We refer to Proposition \ref{prop3} and Corollary \ref{cor2}. None zero terms $W_{n,\xi}$
are obtained, if at least for $\xi=4$, the operators 
$\{\widehat{P}_{s}V_{2}\widehat{\Omega}_{n}\widehat{P}_{s}\}_{4}\neq 0$
are derived. This means $\widehat{\Omega}_{n}$ includes at least
$(n-2)$ creation and $(n-2)$ annihilation operators, designating the
valence orbitals. Possible allocations of $\alpha_{i}$ and $\bar{\beta}_{j}$
(see Eq. (\ref{eq:2.3})) orbitals for all $i,j=1,2,\ldots,n$ are
these:

$\begin{array}{lll}
(1)\;\left\{ \begin{array}{l}
\alpha_{1}=\mathrm{e},\quad\alpha_{i}=\mathrm{v}_{i}\\
\bar{\beta}_{j}=\bar{\mathrm{v}}_{j}\\
i=2,3,\ldots,n\\
j=1,2,\ldots,n\end{array}\right. 
&\quad (2)\;\left\{ \begin{array}{l}
\alpha_{i}=\mathrm{v}_{i}\\
\bar{\beta}_{1}=\bar{\mathrm{c}},\quad\bar{\beta}_{j}=\bar{\mathrm{v}}_{j}\\
i=1,2,\ldots,n\\
j=2,3,\ldots,n\end{array}\right.
&\quad(3)\;\left\{ \begin{array}{l}
\alpha_{1}=\mathrm{e},\quad\alpha_{i}=\mathrm{v}_{i}\\
\bar{\beta}_{1}=\bar{\mathrm{c}},\quad\bar{\beta}_{j}=\bar{\mathrm{v}}_{j}\\
i=2,3,\ldots,n\\
j=2,3,\ldots,n\end{array}\right. \\

(4)\;\left\{ \begin{array}{l}
\alpha_{1}=\mathrm{e},\:\alpha_{2}=\mathrm{e}^{\prime},\:\alpha_{i}=\mathrm{v}_{i}\\
\bar{\beta}_{j}=\bar{\mathrm{v}}_{j}\\
i=3,4,\ldots,n\\
j=1,2,\ldots,n\end{array}\right.
&\quad(5)\;\left\{ \begin{array}{l}
\alpha_{i}=\mathrm{v}_{i}\\
\bar{\beta}_{1}=\bar{\mathrm{c}},\quad\bar{\beta}_{2}=\bar{\mathrm{c}}^{\prime}\\ \bar{\beta}_{j}=\bar{\mathrm{v}}_{j}\\
i=1,2,\ldots,n\\
j=3,4,\ldots,n\end{array}\right.
&\quad (6)\;\left\{ \begin{array}{l}
\alpha_{1}=\mathrm{e},\quad\alpha_{i}=\mathrm{v}_{i}\\
\bar{\beta}_{1}=\bar{\mathrm{c}},\:\bar{\beta}_{2}=\bar{\mathrm{c}}^{\prime},\:\bar{\beta}_{j}=\bar{\mathrm{v}}_{j}\\
i=2,3,\ldots,n\\
j=3,4,\ldots,n\end{array}\right.
\end{array}$

$$
(7)\;\left\{ \begin{array}{l}
\alpha_{1}=\mathrm{e},\quad\alpha_{2}=\mathrm{e}^{\prime}\\ \alpha_{i}=\mathrm{v}_{i}\\
\bar{\beta}_{1}=\bar{\mathrm{c}},\quad\bar{\beta}_{j}=\bar{\mathrm{v}}_{j}\\
i=3,4,\ldots,n\\
j=2,3,\ldots,n\end{array}\right.
\quad (8)\;\left\{ \begin{array}{l}
\alpha_{1}=\mathrm{e},\:\alpha_{2}=\mathrm{e}^{\prime},\,\alpha_{i}=\mathrm{v}_{i}\\
\bar{\beta}_{1}=\bar{\mathrm{c}},\:\bar{\beta}_{2}=\bar{\mathrm{c}}^{\prime},\,\bar{\beta}_{j}=\bar{\mathrm{v}}_{j}\\
i=3,4,\ldots,n\\
j=3,4,\ldots,n\end{array}\right.
$$

\noindent{}Because of the anticommutation properties of creation and annihilation operators,
the similar allocations of none valence orbitals hold for any selected $i$, $j$, not only $i,j=1,2$.
\end{proof}

\noindent{}Theorem \ref{theor1} also fits the effective Hamiltonian, given by the formula 
$H_{eff}^{(m+1)}=\widehat{P}V\widehat{\Omega}^{(m)}\widehat{P}$. In this case, the structure coefficients $\omega$ in Eq.
(\ref{eq:2.2}) are replaced with $\omega^{(m)}$, obtained from $\widehat{\Omega}^{(m)}$. For $m=2$, these coefficients
will be displayed in the next section, where we examine a special case of the application of proposed formalism.

%% #############################################################################
%% ############################# Section III ###################################
%% #############################################################################
\section{The third-order effective Hamiltonian}
\label{reduction}

In open-shell MBPT, the procedure to determine some fixed number $i=1,2,\ldots,d$ of
energy levels $E_{i}$ of $N$-electron atomic Hamiltonian $H$ is addressed to
the solution of eigenvalue equations $H_{eff}\Phi(i)=E_{i}\Phi(i)$, where the model functions $\Phi(i)$ are determined
in Eq. (\ref{eq:N6}). However, in practical applications, the accuracy
of effective Hamiltonian $H_{eff}$ is finite. In this section, we
consider the third-order contribution to $H_{eff}$. 

The third-order approximation $H_{eff}^{(3)}$ is represented by Eq.
(\ref{eq:2.6}), replacing $\widehat{\Omega}_{n}$ with the second-order contribution
$\widehat{\Omega}^{(2)}=\sum_{n=1}^{4}\widehat{\Omega}_{n}^{(2)}$. Then

\begin{equation}
H_{eff}^{(3)}={\displaystyle \sum_{m=1}^{2}\sum_{n=1}^{4}\sum_{\xi=1}^{\min(2m,2n)}}\widehat{h}_{mn;\xi}^{(3)};
\quad\widehat{h}_{mn;\xi}^{(3)}=\{\widehat{P}V_{m}\widehat{\Omega}_{n}^{(2)}\widehat{P}\}_{\xi}.
\label{eq:3.1}
\end{equation}

\noindent{}The task under consideration is divided into two parts: 
(1) the determination of $\widehat{\Omega}_{n}^{(2)}$ with $n=1,2,3,4$;
(2) the construction of $\widehat{h}_{mn;\xi}^{(3)}$ for $m+n-\xi=1,2$.

%% #############################################################################
%% ########################### Subsection III-1 ################################
%% #############################################################################
\subsection{The determination of terms of the second-order wave operator}
\label{part1}

In the first part, the operators $\widehat{\Omega}_{n}^{(2)}$ are generated
in accordance with Ref. \cite[Eq. (13.30), p. 302]{Lindgren}. The terms
are computed by using the \emph{NCoperators} package which is programmed
upon Propositions \ref{prop1}-\ref{prop3} and Corollaries \ref{cor1}-\ref{cor2}. This part of computation 
is the most time consuming process. The generated terms are arranged by passing to Theorem \ref{theor1}. The coefficients
$\omega_{\alpha\bar{\beta}}^{(2)}$, $\omega_{\alpha\beta\bar{\mu}\bar{\nu}}^{(2)}$,
$\omega_{\alpha\beta\zeta\bar{\mu}\bar{\nu}\bar{\eta}}^{(2)}$ and
$\omega_{\alpha\beta\zeta\rho\bar{\mu}\bar{\nu}\bar{\eta}\bar{\sigma}}^{(2)}$,
located next to the operators $\widehat{O}_{1}$,
$\widehat{O}_{2}$, $\widehat{O}_{3}$ and $\widehat{O}_{4}$ in Eq.
(\ref{eq:2.2}), are treated as the effective one-, two-, three- and four-particle
matrix elements on the basis of the product of accordingly same number one-electron
eigenstates $\varphi(n_{\alpha}\lambda_{\alpha}m_{\alpha})$.
Here $\lambda_{\alpha}$ denotes the irreducible representation of $\mathrm{G}$ (Sec. \ref{D1}).
The basis index acquires the values $m_{\alpha}=-\lambda_{\alpha},-\lambda_{\alpha}+1,\ldots,\lambda_{\alpha}-1,
\lambda_{\alpha}$. Then the Wigner-Eckart theorem is applied to each matrix element, and
the basis indices are excluded to the Clebsch-Gordan coefficients
of $\mathrm{SU(2)}$. Despite of a large number of matrix elements $\omega^{(2)}$, there are only a few of fundamental
constructions to be examined; other elements are derived by varying the given ones. These $\mathrm{SO(3)}$-invariant
constructions are produced in Tabs. \ref{Tab1}-\ref{Tab3}. All computed effective matrix elements $\omega^{(2)}$, 
necessary to form one-body and two-body terms of the effective
Hamiltonian, are presented in an explicit form in Appendix \ref{A}. The analysis of these elements is 
performed in the next part of computation.

%% #############################################################################
%% ################################## TABLE 1 ##################################
%% #############################################################################
\begin{table}
\caption{\label{Tab1}The multipliers for effective one-particle matrix elements of $\widehat{\Omega}^{(2)}$}
\begin{tabular}{ll}
\hline\hline
$(ij\xi)$ & Element: Expression \\
\hline
$(111)$& \raisebox{2.ex}{}$S_{\alpha\bar{\beta}}(\tau_{1}\tau_{2}\tau):$ \\
\raisebox{5.ex}{} & $\begin{array}{l}
(-1)^{\lambda_{\bar{\beta}}-\lambda_{\alpha}-\tau}[\tau_{1},\tau_{2}]^{1/2}\sum_{\mu}\frac{f(\tau_{1}
\lambda_{\alpha}\lambda_{\mu})f(\tau_{2}\lambda_{\mu}\lambda_{\bar{\beta}})}{\varepsilon_{\bar{\beta}}-
\varepsilon_{\mu}}\left\{ \begin{smallmatrix}\tau_{1} & \tau_{2} & \tau\\
\lambda_{\bar{\beta}} & \lambda_{\alpha} & \lambda_{\mu}\end{smallmatrix}\right\} \\
\times\langle\tau_{1}m_{1}\tau_{2}m_{2}\vert\tau m\rangle\end{array}$ \\
$(122)$&  \raisebox{2.5ex}{}$\widetilde{S}_{\alpha\bar{\beta}}(\tau_{1}):$ \\
\raisebox{5.ex}{} & $\begin{array}{l}
2(-1)^{\lambda_{\alpha}+\lambda_{\bar{\beta}}}\sum_{\zeta\bar{\mu}}(-1)^{\lambda_{\zeta}+\lambda_{\bar{\mu}}}
\frac{f(\tau_{1}\lambda_{\zeta}\lambda_{\bar{\mu}})}{\varepsilon_{\zeta\bar{\beta}}-\varepsilon_{\alpha\bar{\mu}}}
\sum_{u}\left[u\right]^{1/2}\left\{ \begin{smallmatrix}\tau_{1} & \lambda_{\zeta} & \lambda_{\bar{\mu}}\\
u & \lambda_{\alpha} & \lambda_{\bar{\beta}}\end{smallmatrix}\right\} \\
\times\widetilde{z}(0\lambda_{\bar{\mu}}\lambda_{\alpha}\lambda_{\bar{\beta}}\lambda_{\zeta}uu)\end{array}$ \\
$(212)$&  \raisebox{2.5ex}{}$\widetilde{S^{\prime}}_{\alpha\bar{\beta}}(\tau_{2}):$ \\
 & \raisebox{3.ex}{}\raisebox{-2.ex}{}$\widehat{R}\bigl(\substack{\bar{\beta}
\zeta\rightarrow\bar{\mu}\\
\alpha\bar{\mu}\rightarrow\zeta}
\bigr)\widetilde{S}_{\alpha\bar{\beta}}(\tau_{2})$ \\
$(223)$&  \raisebox{2.5ex}{}$\widetilde{S}_{\alpha\bar{\beta}}:$ \\
 & \raisebox{5.ex}{}\raisebox{-2.ex}{}$\begin{array}{l}
4\delta_{\lambda_{\alpha}\lambda_{\bar{\beta}}}[\lambda_{\alpha}]^{-1/2}\sum_{u}\sum_{\rho\eta\zeta}
(-1)^{\lambda_{\rho}-\lambda_{\eta}-u}\frac{1}{\varepsilon_{\bar{\beta}\zeta}
-\varepsilon_{\rho\eta}}\\
\times\widetilde{z}(0\lambda_{\alpha}\lambda_{\zeta}\lambda_{\eta}\lambda_{\rho}uu)
\widetilde{z}(0\lambda_{\rho}\lambda_{\eta}\lambda_{\alpha}\lambda_{\zeta}uu)\end{array}$ \\
\hline\hline
\end{tabular} 
\end{table}

%% #############################################################################
%% ################################## TABLE 2 ##################################
%% #############################################################################
\begin{table}
\caption{\label{Tab2}The multipliers for effective two-particle matrix elements of $\widehat{\Omega}^{(2)}$}
\begin{tabular}{ll}
\hline\hline
$(ij\xi)$ & \raisebox{2.ex}{}Element: Expression \\
\hline
$(110)$& $D_{\alpha\beta\bar{\mu}\bar{\nu}}(ud\tau):$ \\
\raisebox{5.5ex}{} & $\begin{array}{l}
\left[\tau_{1},\tau_{2},u,d\right]^{1/2}\frac{f(\tau_{1}\lambda_{\alpha}\lambda_{\bar{\mu}})f(\tau_{2}\lambda_{\beta}
\lambda_{\bar{\nu}})}{\varepsilon_{\bar{\nu}}-
\varepsilon_{\beta}}\langle\tau_{1}m_{1}\tau_{2}m_{2}\vert\tau m\rangle\\
\times\left\{ \begin{smallmatrix}\lambda_{\alpha} & \lambda_{\beta} & u\\
\lambda_{\bar{\mu}} & \lambda_{\bar{\nu}} & d\\
\tau_{1} & \tau_{2} & \tau\end{smallmatrix}\right\} 
\end{array}$ \\
$(121)$& \raisebox{2.ex}{}$D_{\alpha\beta\bar{\mu}\bar{\nu}}(Uu\tau_{1}):$ \\
\raisebox{5.ex}{} & $\begin{array}{l}
2(-1)^{\lambda_{\alpha}-\lambda_{\beta}+\lambda_{\bar{\mu}}+\lambda_{\bar{\nu}}+\tau_{1}}[U]^{1/2}\sum_{\zeta}\frac{f
(\tau_{1}\lambda_{\alpha}\lambda_{\zeta})}{\varepsilon_{\bar{\mu}\bar{\nu}}-\varepsilon_{\beta\zeta}}
z(0\lambda_{\zeta}\lambda_{\beta}\lambda_{\bar{\nu}}\lambda_{\bar{\mu}}uu)\\
\times \left\{ \begin{smallmatrix}\tau_{1} & 
\lambda_{\alpha} & \lambda_{\zeta}\\
\lambda_{\beta} & u & U\end{smallmatrix}\right\} \end{array}$ \\
$(211)$& \raisebox{2.ex}{}$D_{\alpha\beta\bar{\mu}\bar{\nu}}^{\prime}(Uu\tau_{2}):$ \\
 & \raisebox{3.5ex}{}\raisebox{-2.ex}{}$\widehat{R}\bigl(\substack{\bar{\mu}\bar{\nu}
\rightarrow\zeta\\
\beta\zeta\rightarrow\alpha}
\bigr)D_{\alpha\beta\bar{\mu}\bar{\nu}}(Uu\tau_{2})$ \\
$(222)$& \raisebox{2.ex}{}$D_{\alpha\beta\bar{\mu}\bar{\nu}}(uu):$ \\
\raisebox{3.5ex}{} & $
4(-1)^{\lambda_{\bar{\mu}}+\lambda_{\bar{\nu}}+u}[u]^{-1/2}\sum_{\zeta\rho}\frac{z(0\lambda_{\alpha}\lambda_{\beta}\lambda_{\rho}
\lambda_{\zeta}uu)z(0\lambda_{\rho}\lambda_{\zeta}\lambda_{\bar{\nu}}\lambda_{\bar{\mu}}uu)
}{\varepsilon_{\bar{\mu}\bar{\nu}}-\varepsilon_{\zeta\rho}}$ \\
$(222)$& \raisebox{2.ex}{}$\Delta_{\alpha\beta\bar{\mu}\bar{\nu}}(UU):$ \\
& \raisebox{6.ex}{}$\begin{array}{l}
4(-1)^{U-\lambda_{\bar{\mu}}}[U]^{1/2}\sum_{ud}\sum_{\zeta\rho}(-1)^{\lambda_{\zeta}+d}\frac{[u,d]^{1/2}}{\varepsilon_{\bar{\nu}
\zeta}-\varepsilon_{\beta\rho}}\\
\times\left\{ \begin{smallmatrix}\lambda_{\alpha} & \lambda_{\beta} & U\\
\lambda_{\zeta} & d & \lambda_{\bar{\nu}}\\
u & \lambda_{\rho} & \lambda_{\bar{\mu}}\end{smallmatrix}\right\} z(0\lambda_{\alpha}\lambda_{\zeta}\lambda_{\rho}
\lambda_{\bar{\mu}}uu)z(0\lambda_{\rho}\lambda_{\beta}\lambda_{\zeta}\lambda_{\bar{\nu}}dd)\end{array}$ \\
\hline\hline
\end{tabular}
\end{table}

%% #############################################################################
%% ################################## TABLE 3 ##################################
%% #############################################################################
\begin{table}
\caption{\label{Tab3}The multipliers for effective three- and four-particle matrix elements of
$\widehat{\Omega}^{(2)}$}
\begin{tabular}{ll}
\hline\hline
$(ij\xi)$ & Element: Expression\raisebox{2.ex}{} \\
\hline
$(120)$ & $T_{\alpha\beta\zeta\bar{\mu}\bar{\nu}\bar{\eta}}(u\tau_{1}):$ \\
& $\begin{array}{l}
2(-1)^{\lambda_{\bar{\nu}}+\lambda_{\bar{\eta}}+u}\frac{f(\tau_{1}\lambda_{\alpha}\lambda_{\bar{\mu}})}{\varepsilon_{\bar{\nu}
\bar{\eta}}-\varepsilon_{\beta\zeta}}z(0\lambda_{\beta}\lambda_{\zeta}\lambda_{\bar{\eta}}\lambda_{\bar{\nu}}uu)
\end{array}$ \\
$(210)$ & $T_{\alpha\beta\zeta\bar{\mu}\bar{\nu}\bar{\eta}}^{\prime}(u\tau_{2}):$ \\
& \raisebox{3.ex}{}$\widehat{R}\bigl(\substack{\bar{\nu}\bar{\eta}
\rightarrow\bar{\mu}\\
\beta\zeta\rightarrow\alpha}
\bigr)T_{\alpha\beta\zeta\bar{\mu}\bar{\nu}\bar{\eta}}(\Lambda_{1}\tau_{2})$ \\
$(221)$ &  \raisebox{2.5ex}{}$(-1)^{M}T_{\alpha\beta\zeta\bar{\mu}\bar{\nu}\bar{\eta}}(DdU):$ \\
\raisebox{4.5ex}{} &$\begin{array}{l}
4(-1)^{\lambda_{\bar{\eta}}-\lambda_{\bar{\nu}}+\lambda_{\zeta}+U}[D]^{1/2}\sum_{u\rho}\frac{(-1)^{\lambda_{\rho}}}{
\varepsilon_{\bar{\nu}\bar{\eta}}-\varepsilon_{\zeta\rho}}[u]^{1/2}
z(0\lambda_{\alpha}\lambda_{\beta}\lambda_{\rho}\lambda_{\bar{\mu}}uu)\\
\times z(0\lambda_{\rho}\lambda_{\zeta}
\lambda_{\bar{\eta}}\lambda_{\bar{\nu}}dd)\left\{ \begin{smallmatrix}D & d & U\\
\lambda_{\rho} & \lambda_{\beta} & \lambda_{\zeta}\end{smallmatrix}\right\}
 \left\{ \begin{smallmatrix}\lambda_{\alpha} & 
\lambda_{\beta} & u\\
\lambda_{\rho} & \lambda_{\bar{\mu}} & U\end{smallmatrix}\right\}\end{array}$ \\
$(220)$ &  \raisebox{2.ex}{}$Q_{\alpha\beta\zeta\rho\bar{\mu}\bar{\nu}\bar{\eta}\bar{\sigma}}(ud):$ \\
\raisebox{2.ex}{} &$
\frac{4}{\varepsilon_{\bar{\eta}\bar{\sigma}}-\varepsilon_{\zeta\rho}}(-1)^{u+d+\lambda_{\bar{\mu}}+\lambda_{\bar{\nu}}+
\lambda_{\bar{\eta}}+\lambda_{\bar{\sigma}}}
z(0\lambda_{\alpha}\lambda_{\beta}\lambda_{\bar{\nu}}\lambda_{\bar{\mu}}uu)z(0\lambda_{\zeta}\lambda_{\rho}
\lambda_{\bar{\sigma}}\lambda_{\bar{\eta}}dd)$ \\
\hline\hline
\end{tabular}
\end{table}

Let us study the structure
$\sum_{\zeta\bar{\mu}}v_{\zeta\bar{\mu}}\widetilde{v}_{\bar{\mu}
\alpha\zeta\bar{\beta}}/(\varepsilon_{
\zeta\bar{\beta}}-\varepsilon_{\alpha\bar{\mu}})$ as an example.
Here $\varepsilon_{\alpha\beta\ldots\zeta}=\varepsilon_{\alpha}+
\varepsilon_{\beta}+\ldots+\varepsilon_{\zeta}$, where $\varepsilon_{\alpha}$ denotes the one-electron energy.
The one-electron matrix element is given by 
$v_{\zeta\bar{\mu}}=\langle \varphi(n_{\zeta}\lambda_{\zeta}m_{\zeta})\cdot v_{m_{1}}^{\tau_{1}}
\varphi(n_{\bar{\mu}}\lambda_{\bar{\mu}}m_{\bar{\mu}})\rangle_{\mathcal{\mathfrak{H}}}
\equiv\langle n_{\zeta}\lambda_{\zeta}m_{\zeta}\vert v_{m_{1}}^{
\tau_{1}}\vert n_{\bar{\mu}}\lambda_{\bar{\mu}}m_{\bar{\mu}}\rangle$.
The operator $v$ acts on a single-particle Hilbert space $\mathfrak{H}$. Hereafter, the irreducible
representation $\tau_{1}$ labels the one-electron operator $v$ which
befits to a second quantized form $V_{1}$ in the generalized Bloch equation,
written for $\widehat{\Omega}^{(2)}$; the one-electron operator
$v$, located in $\widehat{\Omega}^{(1)}$, will be labeled
by $\tau_{2}$; the operator $v$ in $V_{1}$ of Eq.
(\ref{eq:3.1}) will be labeled by $\tau_{0}$. We assume that 
$v_{m_{1}}^{\tau_{1}\dagger}=(-1)^{\Upsilon(\tau_{1}m_{1})}v_{-m_{1}}^{\tau_{1}}$ and $v_{\zeta\bar{\mu}}=v_{\bar{\mu}\zeta}$. This implies

\begin{align}
\overline{f(\tau_{1}\lambda_{\bar{\mu}}\lambda_{\zeta})}&=\epsilon(\tau_{1}\lambda_{\zeta}\lambda_{\bar{\mu}})f(\tau_{1}
\lambda_{\zeta}\lambda_{\bar{\mu}}),\label{eq:3.2} \\
\epsilon(\tau_{1}\lambda_{\zeta}\lambda_{\bar{\mu}})&=(-1)^{\lambda_{\zeta}-\lambda_{\bar{\mu}}}[\tau_{1}]^{-1}{
\displaystyle \sum_{m_{1}=-\tau_{1}}^{+\tau_{1}}}(-1)^{m_{1}-\Upsilon(\tau_{1}m_{1})},\label{eq:3.3} \\
f(\tau_{1}\lambda_{\zeta}\lambda_{\bar{\mu}})&=-\frac{[\lambda_{\zeta}]^{1/2}}{[\tau_{1}]^{1/2}}[n_{\zeta}\lambda_{
\zeta}||v^{\tau_{1}}||n_{\bar{\mu}}\lambda_{\bar{\mu}}],\label{eq:3.4}
\end{align}

\noindent{}where $[x]\equiv 2x+1$. The phase multiplier $\Upsilon$ is optional. Usually
it is chosen to be equal to \cite{Jucys} $\Upsilon(\tau_{1}m_{1})=\tau_{1}-m_{1}$.
Then $\epsilon(\tau_{1}\lambda_{\zeta}\lambda_{\bar{\mu}})=(-1)^{\lambda_{\zeta}-\lambda_{\bar{\mu}}+\tau_{1}}$.
Particularly, if $v_{q}^{k}=rC_{q}^{k}$ represents the multipole momentum,
then $\overline{f(kl_{\bar{\mu}}l_{\zeta})}=f(kl_{\zeta}l_{\bar{\mu}})$,
since $\lambda_{\zeta,\bar{\mu}}=l_{\zeta,\bar{\mu}}\frac{1}{2}$
and the reduced matrix element $[l_{\zeta}||C^{k}||l_{\bar{\mu}}]\neq0$,
if $l_{\zeta}+l_{\bar{\mu}}+k$ is even. The state $\varphi(n_{\alpha}\lambda_{\alpha}m_{\alpha})$
denotes either $2$-spinor or $4$-spinor, thus Eqs. (\ref{eq:3.2})-(\ref{eq:3.4})
hold for both -- none relativistic and relativistic -- approaches. We
assume that the one-electron Slater integrals are involved in
the definition of $f$. The two-particle matrix element with tilde equals to
$\widetilde{v}_{\bar{\mu}\alpha\zeta\bar{\beta}}=v_{\bar{\mu}\alpha\zeta\bar{\beta}}-v_{\bar{\mu}\alpha\bar{\beta}\zeta}$,
where the element $v_{\bar{\mu}\alpha\zeta\bar{\beta}}=\langle n_{\bar{\mu}}\lambda_{\bar{\mu}}m_{\bar{\mu}}n_{\alpha}\lambda_{\alpha}m_{
\alpha}\vert g_{m}^{\gamma}\vert n_{\zeta}\lambda_{\zeta}m_{\zeta}n_{\bar{\beta}}\lambda_{\bar{\beta}}m_{\bar{\beta}}\rangle$.
In general, the two-particle interaction operator $g_{12}$ acts on
$\mathfrak{H}\times\mathfrak{H}$. However, $g^{\gamma}$ is reduced
and it acts on irreducible tensor space $\mathfrak{H}^{\gamma}$, obtained by reducing \cite[Sec. 2, Eq. (3)]{Jursenas2}
$\mathfrak{H}^{\gamma_{1}}\times\mathfrak{H}^{\gamma_{2}}$. We also account for only scalar
representations $\gamma=0$, and self-adjoint operators $g_{12}^{\dagger}=g_{12}$.
This implies $v_{\bar{\mu}\alpha\zeta\bar{\beta}}=v_{\alpha\bar{\mu}\bar{\beta}\zeta}=v_{\bar{\beta}\zeta\alpha\bar{\mu}}$.
In Ref. \cite{Jursenas2}, it was showed that $v_{\bar{\mu}\alpha\zeta\bar{\beta}}$
may be constructed in two distinct ways: (i) reducing the Kronecker
product $(\lambda_{\bar{\mu}}\times\lambda_{\zeta})\times(\lambda_{\alpha}\times\lambda_{\bar{\beta}})$
($b$-scheme); (ii) reducing the Kronecker product $(\lambda_{\bar{\mu}}\times\lambda_{\alpha})\times(\lambda_{\zeta}\times
\lambda_{\bar{\beta}})$
($z$-scheme). Then the two-particle reduced matrix element is formed
in terms of either $b(0\lambda_{\bar{\mu}}\lambda_{\alpha}\lambda_{\bar{\beta}}\lambda_{\zeta}\gamma_{1}\gamma_{1})$
or $z(0\lambda_{\bar{\mu}}\lambda_{\alpha}\lambda_{\bar{\beta}}\lambda_{\zeta}\Gamma_{1}\Gamma_{1})$
coefficients \cite[Sec. 2, Eqs. (24), (30)]{Jursenas2} for $\gamma=0$.

%% #############################################################################
%% ################################## TABLE 4 ##################################
%% #############################################################################
\begin{table}
\caption{\label{Tab4}The expansion coefficients for one-body terms of the third-order contribution to the
effective Hamiltonian}
\begin{tabular}{ll}
\hline\hline
$(mn\xi)$ & \raisebox{3.ex}{}\raisebox{-1.2ex}{}$\mathfrak{h}_{mn;\xi}^{(3)+}(\Lambda)$ \\
\hline
\raisebox{4.5ex}{} $(111)$ & $\begin{array}{l}
(-1)^{\lambda_{\mathrm{v}}-\lambda_{\mathrm{\bar{v}}}}[\tau_{0}]^{1/2}\sum_{\overline{\Lambda}}
[\overline{\Lambda}]^{1/2}\langle\tau_{0}m_{0}\overline{\Lambda}\:\overline{M}\vert\Lambda M
\rangle\Bigl((-1)^{\Lambda}\sum_{\mathrm{e}}f(\tau_{0}\lambda_{\mathrm{v}}\lambda_{\mathrm{e}})
\Omega_{\mathrm{e\bar{v}}}^{(2)+}(\overline{\Lambda})\left\{ \begin{smallmatrix}\tau_{0} & \overline{\Lambda} & \Lambda\\
\lambda_{\mathrm{\bar{v}}} & \lambda_{\mathrm{v}} & \lambda_{\mathrm{e}}\end{smallmatrix}\right\}\\
-(-1)^{\overline{\Lambda}}\sum_{\mathrm{c}}f(\tau_{0}\lambda_{\mathrm{c}}\lambda_{\mathrm{\bar{v}}})
\Omega_{\mathrm{vc}}^{(2)+}(\overline{\Lambda})\left\{ \begin{smallmatrix}\tau_{0} & \overline{\Lambda} & \Lambda\\
\lambda_{\mathrm{v}} & \lambda_{\mathrm{\bar{v}}} & \lambda_{\mathrm{c}}\end{smallmatrix}\right\} \Bigr)\end{array}$ \\ \\
\raisebox{3.ex}{} $(212)$ & $
2(-1)^{\lambda_{\mathrm{v}}-\lambda_{\mathrm{\bar{v}}}}\sum_{\alpha=\mathrm{v},\mathrm{e}}
\sum_{\mathrm{c}}(-1)^{\lambda_{\alpha^{\prime}}-\lambda_{\mathrm{c}}}
\sum_{u}\widetilde{z}(0\lambda_{\mathrm{c}}\lambda_{\mathrm{v}}\lambda_{\mathrm{\bar{v}}}\lambda_{\alpha^{\prime}}uu)
\Omega_{\alpha^{\prime}\mathrm{c}}^{(2)+}(\Lambda)[u]^{1/2}\left\{ \begin{smallmatrix}\Lambda & \lambda_{\alpha^{\prime}} & 
\lambda_{\mathrm{c}}\\
u & \lambda_{\mathrm{v}} & \lambda_{\mathrm{\bar{v}}}\end{smallmatrix}\right\}$ \\ \\
\raisebox{6.5ex}{} $(122)$ & $\begin{array}{l}
(-1)^{\Lambda}[\tau_{0}]^{1/2}\sum_{\Lambda_{1}\Lambda_{2}\overline{\Lambda}}(-1)^{\overline{\Lambda}}
[\Lambda_{1},\Lambda_{2},\overline{\Lambda}]^{1/2}
\langle\tau_{0}m_{0}\overline{\Lambda}\:\overline{M}\vert\Lambda M\rangle\sum_{\mathrm{c}}
\Bigl(\sum_{\mathrm{v^{\prime}}}(-1)^{\lambda_{\mathrm{c}}-\lambda_{\mathrm{v^{\prime}}}}f(\tau_{0}\lambda_{\mathrm{c}}
\lambda_{\mathrm{v^{\prime}}})\\
\times\widetilde{\Omega}_{\mathrm{v^{\prime}vc\bar{v}}}^{(2)+}(\Lambda_{1}\Lambda_{2}\overline{\Lambda})
\left\{ \begin{smallmatrix}\lambda_{\mathrm{v^{\prime}}} & \lambda_{\mathrm{v}} & \Lambda_{1}\\
\lambda_{\mathrm{c}} & \lambda_{\mathrm{\bar{v}}} & \Lambda_{2}\\
\tau_{0} & \Lambda & \overline{\Lambda}\end{smallmatrix}\right\}
-\sum_{\mathrm{e}}
a(\lambda_{\mathrm{e}}\lambda_{\mathrm{\bar{v}}}\Lambda_{2})
f(\tau_{0}\lambda_{\mathrm{c}}\lambda_{\mathrm{e}})\Omega_{\mathrm{ev\bar{v}c}}^{(2)+}
(\Lambda_{1}\Lambda_{2}\overline{\Lambda})\left\{ \begin{smallmatrix}\lambda_{\mathrm{e}} & 
\lambda_{\mathrm{v}} & \Lambda_{1}\\
\lambda_{\mathrm{c}} & \lambda_{\mathrm{\bar{v}}} & \Lambda_{2}\\
\tau_{0} & \Lambda & \overline{\Lambda}\end{smallmatrix}\right\} \Bigr)\end{array}$ \\ \\
\raisebox{6.5ex}{} $(223)$ & $\begin{array}{l}
2\sum_{\Lambda_{1}\Lambda_{2}}[\Lambda_{1}]^{1/2}\sum_{\mathrm{c}\mathrm{c^{\prime}}}
\Bigl(a(\lambda_{\mathrm{v}}\lambda_{\mathrm{\bar{v}}}\Lambda)
\sum_{\mathrm{v^{\prime}}}\widetilde{z}(0\lambda_{\mathrm{c}}\lambda_{\mathrm{c^{\prime}}}
\lambda_{\mathrm{v^{\prime}}}\lambda_{\mathrm{\bar{v}}}\Lambda_{2}\Lambda_{2})
\widetilde{\Omega}_{\mathrm{vv^{\prime}cc^{\prime}}}^{(2)+}(\Lambda_{1}\Lambda_{2}\Lambda)
\left\{ \begin{smallmatrix}\Lambda_{1} & \Lambda_{2} & \Lambda\\
\lambda_{\mathrm{\bar{v}}} & \lambda_{\mathrm{v}} & \lambda_{\mathrm{v^{\prime}}}\end{smallmatrix}\right\} \\
-a(\Lambda_{1}\Lambda_{2}\Lambda)\sum_{\mathrm{e}}\widetilde{z}(0\lambda_{\mathrm{c}}\lambda_{\mathrm{c^{\prime}}}
\lambda_{\mathrm{\bar{v}}}\lambda_{\mathrm{e}}\Lambda_{2}\Lambda_{2})
\Omega_{\mathrm{evcc^{\prime}}}^{(2)+}(\Lambda_{1}\Lambda_{2}\Lambda)\left\{ \begin{smallmatrix}
\Lambda_{1} & \Lambda_{2} & \Lambda\\
\lambda_{\mathrm{\bar{v}}} & \lambda_{\mathrm{v}} & \lambda_{\mathrm{e}}\end{smallmatrix}\right\} 
\Bigr)+2\sum_{\Lambda_{1}\Lambda_{2}}(-1)^{\Lambda_{1}}
[\Lambda_{2}]^{1/2} \\
\times\sum_{\mathrm{c}}\left\{ \begin{smallmatrix}\Lambda_{1} & \Lambda_{2} & \Lambda\\
\lambda_{\mathrm{\bar{v}}} & \lambda_{\mathrm{v}} & \lambda_{\mathrm{c}}\end{smallmatrix}\right\} 
\Bigl(a(\lambda_{\mathrm{v}}\lambda_{\mathrm{\bar{v}}}\Lambda)\sum_{\mu=\mathrm{v},\mathrm{e}}
(-1)^{\lambda_{\mu^{\prime}}+\lambda_{\mu^{\prime\prime}}}
\widetilde{z}(0\lambda_{\mathrm{v}}\lambda_{\mathrm{c}}\lambda_{\mu^{\prime\prime}}\lambda_{\mu^{\prime}}
\Lambda_{1}\Lambda_{1})\Omega_{\mu^{\prime}\mu^{\prime\prime}\mathrm{c\bar{v}}}^{(2)+}(\Lambda_{1}\Lambda_{2}\Lambda) \\
-a(\Lambda_{1}\Lambda_{2}\Lambda) 
\sum_{\mathrm{e}\mathrm{v^{\prime}}}(-1)^{\lambda_{\mathrm{e}}+\lambda_{\mathrm{v^{\prime}}}}
\widetilde{z}(0\lambda_{\mathrm{c}}\lambda_{\mathrm{v}}\lambda_{\mathrm{v^{\prime}}}\lambda_{\mathrm{e}}\Lambda_{1}\Lambda_{1})
\Omega_{\mathrm{ev^{\prime}\bar{v}c}}^{(2)+}(\Lambda_{1}\Lambda_{2}\Lambda)\Bigr)\end{array}$ \\ \\
\raisebox{6.5ex}{} $(234)$ & $\begin{array}{l}
2\sum_{\mathrm{c}\mathrm{c^{\prime}}}\sum_{\mu=\mathrm{v},\mathrm{e}}
\sum_{\Lambda_{2}}a(\lambda_{\mathrm{c}}\lambda_{\mathrm{c^{\prime}}}\Lambda_{2})
\Bigl(\widetilde{z}(0\lambda_{\mathrm{c}}\lambda_{\mathrm{c^{\prime}}}\lambda_{\mathrm{v^{\prime\prime}}}
\lambda_{\mu^{\prime}}\Lambda_{2}\Lambda_{2})
\Omega_{\mathrm{vv^{\prime\prime}}\mu^{\prime}\mathrm{\bar{v}c^{\prime}c}}^{(2)+}
(\Lambda_{2}\Lambda_{2}\Lambda0)\\
+\sum_{\Lambda_{1}\Lambda_{3}\overline{\Lambda}}
(-1)^{\lambda_{\mathrm{\bar{v}}}+\Lambda_{3}+\overline{M}}
[\Lambda_{1},\Lambda_{3},\overline{\Lambda}]^{1/2}
\langle\Lambda_{3}M_{3}\overline{\Lambda}\:
\overline{M}\vert\Lambda M\rangle\bigl((-1)^{\lambda_{\mathrm{v^{\prime\prime}}}} \\
\times\widetilde{z}(0\lambda_{\mathrm{c}}\lambda_{\mathrm{c^{\prime}}}\lambda_{\mu^{\prime}}
\lambda_{\mathrm{v^{\prime\prime}}}\Lambda_{2}\Lambda_{2})\Omega_{\mathrm{v^{\prime\prime}v}
\mu^{\prime}\mathrm{\bar{v}c^{\prime}c}}^{(2)+}(\Lambda_{1}\Lambda_{2}\Lambda_{3}\overline{\Lambda})
\left\{ \begin{smallmatrix}\Lambda_{1} & \Lambda_{2} & \overline{\Lambda}\\
\lambda_{\mathrm{v^{\prime\prime}}} & \lambda_{\mathrm{v}} & \lambda_{\mu^{\prime}}
\end{smallmatrix}\right\} \left\{ \begin{smallmatrix}\Lambda_{3} & \overline{\Lambda} & \Lambda\\
\lambda_{\mathrm{v}} & \lambda_{\mathrm{\bar{v}}} & \lambda_{\mathrm{v^{\prime\prime}}}
\end{smallmatrix}\right\} +a(\Lambda_{1}\Lambda_{2}\lambda_{\mathrm{v}})\\
\times\widetilde{z}(0\lambda_{\mathrm{c}}\lambda_{\mathrm{c^{\prime}}}\lambda_{\mu^{\prime\prime}}
\lambda_{\mu^{\prime}}\Lambda_{2}\Lambda_{2})\Omega_{\mu^{\prime\prime}
\mu^{\prime}\mathrm{v\bar{v}c^{\prime}c}}^{(2)+}(\Lambda_{1}\Lambda_{2}\Lambda_{3}\overline{\Lambda})
\left\{ \begin{smallmatrix}\Lambda_{1} & \Lambda_{2} & \Lambda\\
\lambda_{\mu^{\prime\prime}} & \lambda_{\mathrm{v}} & \lambda_{\mu^{\prime}}
\end{smallmatrix}\right\} \left\{ \begin{smallmatrix}\Lambda_{3} & \overline{\Lambda} & \Lambda\\
\lambda_{\mathrm{v}} & \lambda_{\mathrm{\bar{v}}} & \lambda_{\mu^{\prime\prime}}
\end{smallmatrix}\right\} \bigr)\Bigr)\end{array}$ \\
\hline\hline
\end{tabular}
\end{table}

Usually the authors (see above cited works) better prefer $b$-scheme,
as it is more convenient to produce the algebraic expression, written
for a particular Goldstone diagram. Then a two-particle matrix element
is denoted $v_{\bar{\mu}\alpha\zeta\bar{\beta}}\equiv g_{\bar{\mu}\alpha\zeta\bar{\beta}}$
and $\widetilde{v}_{\bar{\mu}\alpha\zeta\bar{\beta}}\equiv\widetilde{g}_{\bar{\mu}\alpha\zeta\bar{\beta}}$;
the correspondent reduced matrix elements are given by $X_{\bar{\mu}\alpha\zeta\bar{\beta}}$
and $Z_{\bar{\mu}\alpha\zeta\bar{\beta}}$ (see, for example, Ref. \cite[Appendix, Eqs. (A1), (A5)]{Ho}).
Contrarily, in this paper we give priority exceptionally to the algebraic
analysis of terms, and, along with Theorem \ref{theor1}, $z$-scheme is more preferable for the arrangement
of their irreducible tensor form. Then $v_{\bar{\mu}\alpha\zeta\bar{\beta}}$
($\widetilde{v}_{\bar{\mu}\alpha\zeta\bar{\beta}}$) will be expressed
by $z(0\lambda_{\bar{\mu}}\lambda_{\alpha}\lambda_{\bar{\beta}}\lambda_{\zeta}\Gamma_{1}\Gamma_{1})$
($\widetilde{z}(0\lambda_{\bar{\mu}}\lambda_{\alpha}\lambda_{\bar{\beta}}\lambda_{\zeta}\Gamma_{1}\Gamma_{1})$),
where 

\begin{equation}
\widetilde{z}(0\lambda_{\bar{\mu}}\lambda_{\alpha}\lambda_{\bar{\beta}}\lambda_{\zeta}\Gamma_{1}\Gamma_{1})
=z(0\lambda_{\bar{\mu}}\lambda_{\alpha}\lambda_{\bar{\beta}}\lambda_{\zeta}\Gamma_{1}\Gamma_{1})
-a(\lambda_{\zeta}\lambda_{\bar{\beta}}\Gamma_{1})z(0\lambda_{\bar{\mu}}\lambda_{\alpha}\lambda_{\zeta}
\lambda_{\bar{\beta}}\Gamma_{1}\Gamma_{1}),\label{eq:3.5}
\end{equation}

\noindent{}where $a(\lambda_{\zeta}\lambda_{\bar{\beta}}\Gamma_{1})=(-1)^{\lambda_{\zeta}+\lambda_{\bar{\beta}}+\Gamma_{1}}$.
The studied effective one-particle matrix element is represented by

\begin{equation}
\sum_{\zeta\bar{\mu}}\frac{v_{\zeta\bar{\mu}}\widetilde{v}_{\bar{\mu}\alpha\zeta\bar{\beta}}}{\varepsilon_{\zeta\bar{\beta}}
-\varepsilon_{\alpha\bar{\mu}}}=(-1)^{\lambda_{\bar{\beta}}+m_{\bar{\beta}}}\widetilde{S}_{\alpha\bar{\beta}}(\tau_{1})
\langle\lambda_{\alpha}m_{\alpha}\lambda_{\bar{\beta}}-m_{\bar{\beta}}\vert\tau_{1}m_{1}\rangle,\label{eq:3.6}\end{equation}

%% #############################################################################
%% ################################## TABLE 5 ##################################
%% #############################################################################
\begin{table}
\caption{\label{Tab5}The expansion coefficients for two-body terms of the third-order contribution to the 
effective Hamiltonian}
\begin{tabular}{ll}
\hline\hline
$(mn\xi)$ & \raisebox{3.ex}{}\raisebox{-1.2ex}{}$\mathfrak{h}_{mn;\xi}^{(3)+}(\Lambda_{1}\Lambda_{2}\Lambda)$ \\
\hline
$(121)$ & $\begin{array}{l}
-[\tau_{0}]^{1/2}\sum_{\overline{\Lambda}_{1}\overline{\Lambda}}[\overline{\Lambda}_{1},\overline{\Lambda}]^{1/2}
\Bigl((-1)^{\overline{\Lambda}_{1}}a(\lambda_{\mathrm{v}}\lambda_{\mathrm{v^{\prime}}}\tau_{0})
a(\Lambda_{1}\Lambda_{2}\Lambda)[\Lambda_{1}]^{1/2}\langle\tau_{0}m_{0}\overline{\Lambda}\:\overline{M}\vert\Lambda M
\rangle\left\{ \begin{smallmatrix}\tau_{0} & \Lambda_{1} & \overline{\Lambda}_{1}\\
\Lambda_{2} & \overline{\Lambda} & \Lambda\end{smallmatrix}\right\} \\
\times\sum_{\mathrm{e}}
f(\tau_{0}\lambda_{\mathrm{v}}\lambda_{\mathrm{e}})\Omega_{\mathrm{ev^{\prime}\bar{v}\bar{v}^{\prime}}}^{(2)+}
(\overline{\Lambda}_{1}\Lambda_{2}\overline{\Lambda})
\left\{ \begin{smallmatrix}\tau_{0} & \lambda_{\mathrm{v}} & \lambda_{\mathrm{e}}\\
\lambda_{\mathrm{v^{\prime}}} & \overline{\Lambda}_{1} & \Lambda_{1}\end{smallmatrix}\right\} 
+(-1)^{m_{0}}a(\Lambda_{1}\Lambda_{2}\overline{\Lambda})[\Lambda_{2}]^{1/2}\langle\tau_{0}-m_{0}\overline{\Lambda}\:
\overline{M}\vert\Lambda M\rangle\\
\times\left\{ \begin{smallmatrix}\tau_{0} & \Lambda_{2} & \overline{\Lambda}_{1}\\
\Lambda_{1} & \overline{\Lambda} & \Lambda\end{smallmatrix}\right\} \sum_{\mathrm{c}}
f(\tau_{0}\lambda_{\mathrm{c}}\lambda_{\mathrm{\bar{v}^{\prime}}})\Omega_{\mathrm{vv^{\prime}c\bar{v}}}^{(2)+}
(\Lambda_{1}\overline{\Lambda}_{1}\overline{\Lambda})
\left\{ \begin{smallmatrix}\tau_{0} & \lambda_{\mathrm{\bar{v}^{\prime}}} & \lambda_{\mathrm{c}}\\
\lambda_{\mathrm{\bar{v}}} & \overline{\Lambda}_{1} & \Lambda_{2}\end{smallmatrix}\right\} \Bigr)\end{array}$ \\ \\
\raisebox{5.ex}{}$(211)$ & $\begin{array}{l}
(-1)^{\lambda_{\mathrm{\bar{v}^{\prime}}}+\Lambda}\Bigl([\Lambda_{2}]^{1/2}
\sum_{\mathrm{e}}(-1)^{\lambda_{\mathrm{e}}}\widetilde{z}(0\lambda_{\mathrm{v}}\lambda_{\mathrm{v^{\prime}}}
\lambda_{\mathrm{e}}\lambda_{\mathrm{\bar{v}}}\Lambda_{1}\Lambda_{1})\Omega_{\mathrm{e\bar{v}^{\prime}}}^{(2)+}
(\Lambda)\left\{ \begin{smallmatrix}\Lambda_{1} & \Lambda_{2} & \Lambda\\
\lambda_{\mathrm{\bar{v}^{\prime}}} & \lambda_{\mathrm{e}} & \lambda_{\mathrm{\bar{v}}}\end{smallmatrix}\right\} 
-(-1)^{\lambda_{\mathrm{\bar{v}}}+\Lambda_{1}}[\Lambda_{1}]^{1/2}\\
\times\sum_{\mathrm{c}}\widetilde{z}(0\lambda_{\mathrm{c}}\lambda_{\mathrm{v}}\lambda_{\mathrm{\bar{v}^{\prime}}}
\lambda_{\mathrm{\bar{v}}}\Lambda_{2}\Lambda_{2})
\Omega_{\mathrm{v^{\prime}c}}^{(2)+}(\Lambda)
\left\{ \begin{smallmatrix}\Lambda_{1} & \Lambda_{2} & \Lambda\\
\lambda_{\mathrm{c}} & \lambda_{\mathrm{v^{\prime}}} & \lambda_{\mathrm{v}}\end{smallmatrix}\right\} 
\Bigr)\end{array}$ \\ \\
\raisebox{7.ex}{}$(222)$ & $\begin{array}{l}
a(\lambda_{\mathrm{\bar{v}}}\lambda_{\mathrm{\bar{v}^{\prime}}}\Lambda_{2})[\Lambda_{2}]^{-1/2}
\sum_{\mathrm{cc^{\prime}}}\widetilde{z}(0\lambda_{\mathrm{c}}\lambda_{\mathrm{c^{\prime}}}
\lambda_{\mathrm{\bar{v}^{\prime}}}\lambda_{\mathrm{\bar{v}}}\Lambda_{2}\Lambda_{2})
\Omega_{\mathrm{vv^{\prime}cc^{\prime}}}^{(2)+}(\Lambda_{1}\Lambda_{2}\Lambda)
+[\Lambda_{1}]^{-1/2}\\
\times\sum_{\mathrm{e}}\sum_{\mu=\mathrm{v},\mathrm{e}}\widetilde{z}(0\lambda_{\mathrm{v}}
\lambda_{\mathrm{v^{\prime}}}\lambda_{\mu^{\prime\prime}}\lambda_{\mathrm{e}}\Lambda_{1}\Lambda_{1})
\Omega_{\mathrm{e}\mu^{\prime\prime}\mathrm{\bar{v}\bar{v}^{\prime}}}^{(2)+}
(\Lambda_{1}\Lambda_{2}\Lambda)+2(-1)^{\Lambda_{1}+\Lambda_{2}}[\Lambda_{1},\Lambda_{2}]^{1/2}
\sum_{\overline{\Lambda}_{1}\overline{\Lambda}_{2}u}[\overline{\Lambda}_{1}]^{1/2} \\
\times[\overline{\Lambda}_{2},u]^{1/2}
\sum_{\mathrm{c}}\Bigl((-1)^{\overline{\Lambda}_{1}+\overline{\Lambda}_{2}}\sum_{\mathrm{e}}
\widetilde{z}(0\lambda_{\mathrm{v}}\lambda_{\mathrm{c}}\lambda_{\mathrm{e}}\lambda_{\mathrm{\bar{v}^{\prime}}}uu)
\Omega_{\mathrm{ev^{\prime}\bar{v}c}}^{(2)+}(\overline{\Lambda}_{1}\overline{\Lambda}_{2}\Lambda)
\left\{ \begin{smallmatrix}\lambda_{\mathrm{\bar{v}^{\prime}}} &  & \lambda_{\mathrm{e}} &  & 
\overline{\Lambda}_{1} &  & \overline{\Lambda}_{2}\\
\  & u &  & \lambda_{\mathrm{v^{\prime}}} &  & \Lambda &  & \lambda_{\mathrm{\bar{v}}}\\
\lambda_{\mathrm{c}} &  & \lambda_{\mathrm{v}} &  & \Lambda_{1} &  & \Lambda_{2}\end{smallmatrix}\right\} \\
+(-1)^{\lambda_{\mathrm{v^{\prime}}}+\lambda_{\mathrm{\bar{v}}}}
\sum_{\mathrm{v^{\prime\prime}}}
(-1)^{\lambda_{\mathrm{c}}+\lambda_{\mathrm{v^{\prime\prime}}}}\widetilde{z}(0\lambda_{\mathrm{v}}
\lambda_{\mathrm{c}}\lambda_{\mathrm{v^{\prime\prime}}}\lambda_{\mathrm{\bar{v}^{\prime}}}uu)
\widetilde{\Omega}_{\mathrm{v^{\prime}v^{\prime\prime}c\bar{v}}}^{(2)+}(\overline{\Lambda}_{1}
\overline{\Lambda}_{2}\Lambda)
\left\{ \begin{smallmatrix}\lambda_{\mathrm{\bar{v}^{\prime}}} &  & 
\lambda_{\mathrm{v^{\prime\prime}}} &  & \overline{\Lambda}_{1} &  & \overline{\Lambda}_{2}\\
\  & u &  & \lambda_{\mathrm{v^{\prime}}} &  & \Lambda &  & \lambda_{\mathrm{\bar{v}}}\\
\lambda_{\mathrm{c}} &  & \lambda_{\mathrm{v}} &  & \Lambda_{1} &  & \Lambda_{2}\end{smallmatrix}\right\} 
\Bigr)\end{array}$ \\ \\
\raisebox{8.ex}{}$(132)$ & $\begin{array}{l}
(-1)^{\tau_{0}+M}[\tau_{0},\Lambda_{2},\Lambda]^{1/2}\sum_{\mathrm{c}}
\sum_{\mu=\mathrm{v},\mathrm{e}}\:\:\sum_{\overline{\Lambda}_{2}\Lambda_{3}\overline{\Lambda}\vartheta}
(-1)^{\overline{\Lambda}}[\overline{\Lambda}_{2},\Lambda_{3},\vartheta]^{1/2}\langle\Lambda_{3}M_{3}
\vartheta\varrho\vert\overline{\Lambda}\:\overline{M}\rangle \\
\times\langle\tau_{0}m_{0}\Lambda-M
\vert\vartheta\varrho\rangle
\Bigl((-1)^{\lambda_{\mathrm{c}}-\lambda_{\mu^{\prime\prime}}+\Lambda}
[\Lambda_{1}]^{1/2}\sum_{\overline{\Lambda}_{1}}[\overline{\Lambda}_{1}]^{1/2}
f(\tau_{0}\lambda_{\mathrm{c}}\lambda_{\mu^{\prime\prime}}) \\
\times\left\{ \begin{smallmatrix}
\lambda_{\mathrm{\bar{v}}} &  & \overline{\Lambda}_{2}\:\:\Lambda_{3} &  & 
\overline{\Lambda}\:\:\lambda_{\mathrm{v}} &  & \overline{\Lambda}_{1}\\
\  & \lambda_{\mathrm{\bar{v}^{\prime}}} &  & \vartheta &  & \lambda_{\mathrm{v^{\prime}}}\\
\Lambda_{2} &  & \lambda_{\mathrm{c}}\:\:\Lambda &  & \tau_{0}\:\:\Lambda_{1} &  & 
\lambda_{\mu^{\prime\prime}}\end{smallmatrix}\right\}
\bigl(\Omega_{\mathrm{vv^{\prime}}
\mu^{\prime\prime}\mathrm{\bar{v}\bar{v}^{\prime}c}}^{(2)+}(\overline{\Lambda}_{1}
\overline{\Lambda}_{2}\Lambda_{3}\overline{\Lambda})
-\delta_{\mu\mathrm{v}}
a(\lambda_{\mathrm{v^{\prime}}}\lambda_{\mathrm{v^{\prime\prime}}}\Lambda_{1}) \\
\times\Omega_{\mathrm{vv^{\prime\prime}v^{\prime}\bar{v}\bar{v}^{\prime}c}}^{(2)+}
(\overline{\Lambda}_{1}\overline{\Lambda}_{2}\Lambda_{3}\overline{\Lambda})\bigr)
+(-1)^{\lambda_{\mathrm{\bar{v}^{\prime}}}+\lambda_{\mathrm{v^{\prime\prime}}}
+\Lambda_{2}+\Lambda_{3}+\vartheta}
\delta_{\mu\mathrm{v}}f(\tau_{0}\lambda_{\mathrm{c}}
\lambda_{\mathrm{v^{\prime\prime}}})
\Omega_{\mathrm{v^{\prime\prime}vv^{\prime}
\bar{v}\bar{v}^{\prime}c}}^{(2)+}(\Lambda_{1}\overline{\Lambda}_{2}\Lambda_{3}
\overline{\Lambda}) \\
\times\left\{ \begin{smallmatrix}\lambda_{\mathrm{c}} &  & \overline{\Lambda}_{2} &  & 
\overline{\Lambda} &  & \Lambda_{3}\\
\  & \lambda_{\mathrm{\bar{v}^{\prime}}} &  & \Lambda_{1} &  & \vartheta &  & \lambda_{\mathrm{v^{\prime\prime}}}\\
\lambda_{\mathrm{\bar{v}}} &  & \Lambda_{2} &  & \Lambda &  & \tau_{0}\end{smallmatrix}\right\} \Bigr)\end{array}$ \\
\hline\hline
\end{tabular}
\end{table}

\noindent{}where $\widetilde{S}_{\alpha\bar{\beta}}(\tau_{1})$ plays
a role of the effective one-particle reduced matrix element. The quantity
$\langle\lambda_{\alpha}m_{\alpha}\lambda_{\bar{\beta}}-m_{\bar{\beta}}\vert\tau_{1}m_{1}\rangle$
denotes the Clebsch-Gordan coefficient of $\mathrm{SU(2)}$. Moreover,
the form of Eq. (\ref{eq:3.6}) directly indicates that the 
one-particle operator, given by the product of $\widehat{O}_{1}=a_{\alpha}a_{\bar{\beta}}^{\dagger}$ and Eq. (\ref{eq:3.6}),
is simply equal to $W_{m_{1}}^{\tau_{1}}(\lambda_{\alpha}\widetilde{\lambda}_{\bar{\beta}})
\widetilde{S}_{\alpha\bar{\beta}}(\tau_{1})$, where the irreducible tensor operator  
$W^{\tau_{1}}(\lambda_{\alpha}\widetilde{\lambda}_{\bar{\beta}})
=[a^{\lambda_{\alpha}}\times\tilde{a}^{\lambda_{\bar{\beta}}}]^{\tau_{1}}$ is obtained by reducing $\widehat{O}_{1}$. 
The representation $\widetilde{\lambda}_{\bar{\beta}}$
designates the transposed annihilation operator $\tilde{a}_{m_{\bar{\beta}}}^{\lambda_{\bar{\beta}}}=(-1)^{\lambda_{\bar{\beta}}
-m_{\bar{\beta}}}a_{-m_{\bar{\beta}}}^{\lambda_{\bar{\beta}}\dagger}$.
The matrix elements of $W^{\tau_{1}}$ can be found in Ref. \cite{Rudzikas,Rudzikas2}.
Here and elsewhere, it is considered if necessary that the summation is fulfilled over all 
given one-electron orbitals of marked type. However, only the sum running over the repetitive orbitals 
($\zeta$, $\bar{\mu}$ in this case) will be written.

In Tabs. \ref{Tab1}-\ref{Tab3}, the index $i=1,2$ labels $V_{i}$,
while $j=1,2$ labels $\widehat{\Omega}_{j}^{(1)}$. The
operator $\widehat{R}$ replaces orbitals in denominators. For example,
the expression $\widehat{R}\bigl(\substack{\bar{\nu}\bar{\eta}\rightarrow\bar{\mu}\\
\beta\zeta\rightarrow\alpha}
\bigr)(\varepsilon_{\bar{\nu}\bar{\eta}}-\varepsilon_{\beta\zeta})^{-1}$ reads $(\varepsilon_{\bar{\mu}}-\varepsilon_{\alpha})^{-1}$. 
The quantities $\left\{ \begin{smallmatrix}j_{1} & j_{2} & j_{3}\\
l_{1} & l_{2} & l_{3}\end{smallmatrix}\right\} $ and $\left\{ \begin{smallmatrix}j_{1} & j_{2} & j_{3}\\
l_{1} & l_{2} & l_{3}\\
k_{1} & k_{2} & k_{3}\end{smallmatrix}\right\} $ denote $6j$- and $9j$-symbols. The elements
which are found by making the three-pair contractions
between $V_{2}$ and $\widehat{\Omega}_{2}^{(1)}$ vanish, if representations
$\lambda_{\alpha}\neq\lambda_{\bar{\beta}}$ in $W^{0}(\lambda_{\alpha}\widetilde{\lambda}_{\bar{\beta}})$. 
In this case, the orbitals $\zeta$, $\rho$, $\eta$ are identical for all
$\alpha$, $\bar{\beta}$: $\zeta=\mathrm{c}$, $\rho=\mathrm{e}$, $\eta=\mathrm{v}$.
Also, we mark off 
$K_{\alpha\bar{\beta}}\equiv\widetilde{S}_{\alpha\bar{\beta}},
\widetilde{S^{\prime}}_{\alpha\bar{\beta}}$
by the summation parameter $\mu\equiv n_{\mu}\lambda_{\mu}$: (i)
if $\mu=\mathrm{v}$, then $K_{\alpha\bar{\beta}}=\dot{K}_{\alpha\bar{\beta}}$;
(ii) if $\mu=\mathrm{e}$, then $K_{\alpha\bar{\beta}}=\ddot{K}_{\alpha\bar{\beta}}$;
(iii) if $\mu=\mathrm{c}$, then we simply write $K_{\alpha\bar{\beta}}$. The tildes designate that
the direct and exchanged parts of a two-particle matrix element are involved. If given $D_{\alpha\beta\bar{\mu}\bar{\nu}}(
\widetilde{U}u\tau_{1})$, then $\widetilde{U}$ marks $\widetilde{v}$, represented by $V_{2}$. Additionally, if given
$D_{\alpha\beta\bar{\mu}\bar{\nu}}(U\widetilde{u}\tau_{1})$, then $\widetilde{u}$ marks $\widetilde{v}$, fitted to 
$\widetilde{\Omega}_{2}^{(1)}$. For both $\widetilde{U}$ and $\widetilde{u}$, we write 
$\widetilde{D}_{\alpha\beta\bar{\mu}\bar{\nu}}(Uu\tau_{1})$.
A similar argument holds for the rest of elements in Tabs. \ref{Tab2}-\ref{Tab3}.
These elements are also separated by the summation parameters. If
$\xi=1$, then the notations are similar to $K_{\alpha\bar{\beta}}$ case.
If $\xi=2$ (see Tab. \ref{Tab2}), then for
$D_{\alpha\beta\bar{\mu}\bar{\nu}}(uu)$, we write: (i) if $\zeta,\rho=\mathrm{c}$,
then $D\equiv D$; (ii) if $\zeta,\rho=\mathrm{v}$, then $D=\dot{D}$;
(iii) if $\zeta,\rho=\mathrm{e}$, then $D=\ddot{D}$; (iv) if $\zeta=\mathrm{e}$
and $\rho=\mathrm{v}$, then $D=\dddot{D}$. For $\zeta=\mathrm{v}$
and $\rho=\mathrm{e}$, the similar triple doted $\dddot{D}$ is considered.
This is due to the symmetry properties of $v_{\alpha\beta\bar{\mu}\bar{\nu}}$.
Finally, for $\Delta_{\alpha\beta\bar{\mu}\bar{\nu}}(UU)$, we write:
(i) if $\rho=\mathrm{v}$, then $\Delta=\dot{\Delta}$; (ii) if $\rho=\mathrm{e}$,
then $\Delta=\ddot{\Delta}$. In these cases, $\zeta=\mathrm{c}$.

%% #############################################################################
%% ############################# Subsection III-2 ##############################
%% #############################################################################
\subsection{The determination of terms of the third-order effective Hamiltonian}
\label{part2}

By Proposition \ref{prop1}, it follows that this part of computation requires significantly less time than the first one. 
Besides, the none zero terms of $\widehat{h}_{mn;\xi}^{(3)}$ are derived in accordance with Theorem \ref{theor1} which allows
to reject a large amount of $\widehat{\Omega}^{(2)}$ terms, attaching the zero-valued contributions. The operators 
$\widehat{h}_{mn;\xi}^{(3)}$ are considered by the formulas

\begin{equation}
\widehat{h}_{mn;\xi}^{(3)}={\displaystyle \sum_{\Lambda M}}W_{M}^{\Lambda}(\lambda_{\mathrm{v}}
\widetilde{\lambda}_{\mathrm{\bar{v}}})\mathfrak{h}_{mn;\xi}^{(3)}(\Lambda),
\label{eq:3.8}
\end{equation}

\noindent{}-- for $m+n-\xi=1$, and 

\begin{equation}
\widehat{h}_{mn;\xi}^{(3)}=-{\displaystyle \sum_{\substack{\Lambda_{1}\Lambda_{2}\\\Lambda M}}}
[W^{\Lambda_{1}}(\lambda_{\mathrm{v}}\lambda_{\mathrm{v^{\prime}}})
\times W^{\Lambda_{2}}(\widetilde{\lambda}_{\mathrm{\bar{v}}}
\widetilde{\lambda}_{\mathrm{\bar{v}^{\prime}}})]_{M}^{\Lambda}\mathfrak{h}_{mn;\xi}^{(3)}(\Lambda_{1}\Lambda_{2}\Lambda),
\label{eq:3.9}
\end{equation}

\noindent{}-- for $m+n-\xi=2$. Each coefficient $\mathfrak{h}_{mn;\xi}^{(3)}$
is additionally expressed by the sum of $\mathfrak{h}_{mn;\xi}^{(3)+}$
and $\mathfrak{h}_{mn;\xi}^{(3)-}$. The coefficients $\mathfrak{h}_{mn;\xi}^{(3)+}$ are presented
in an explicit form in Tabs. \ref{Tab4}-\ref{Tab6}. The coefficients $\mathfrak{h}_{mn;\xi}^{(3)-}$ are derived
from $\mathfrak{h}_{mn;\xi}^{(3)+}$
by making the following alterations:

\begin{enumerate}[\upshape (a)]
\item $\Omega_{\alpha\bar{\beta}}^{(2)+}(\Lambda)\rightarrow 
(-1)^{L_{\alpha\bar{\beta}}+M+1} \Omega_{\alpha\bar{\beta}}^{(2)-}(\Lambda);$

\item $\Omega_{\alpha \beta \bar{\mu} \bar{\nu}}^{(2)+}(\Lambda_{1} \Lambda_{2} \Lambda)\rightarrow 
(-1)^{L_{\alpha \beta \bar{\mu} \bar{\nu}}+M} \Omega_{\alpha \beta \bar{\mu} \bar{\nu}}^{(2)-}(\Lambda_{1} \Lambda_{2} \Lambda);$

\item 
$\Omega_{\alpha \beta \zeta \bar{\mu} \bar{\nu} \bar{\eta}}^{(2)+}(\Lambda_{1} \Lambda_{2} \Lambda_{3} \Lambda)\rightarrow 
(-1)^{L_{\alpha \beta \zeta \bar{\mu} \bar{\nu} \bar{\eta}}+M+M_{3}+1} 
\Omega_{\alpha \beta \zeta \bar{\mu} \bar{\nu} \bar{\eta}}^{(2)-}(\Lambda_{1} \Lambda_{2} \Lambda_{3} \Lambda).$
\end{enumerate}

\noindent{}Here $L_{\alpha\beta\ldots\zeta}=\lambda_{\alpha}+\lambda_{\beta}+\ldots+\lambda_{\zeta}$. In addition, there holds
one more rule: (d) each basis index (if such exists) in $\mathfrak{h}_{mn;\xi}^{(3)+}$, except for $m_{0}$,
is replaced by the opposite sign index. In (a)-(c), the indices $M$ and $M_{3}$ enumerate the 
basis for $\Lambda$ and $\Lambda_{3}$, respectively. The quantities $\Omega^{(2)\pm}$ are given in Appendix \ref{A}, while

\begin{equation}
 \widetilde{\Omega}_{\alpha\beta\bar{\mu}\bar{\nu}}^{(2)\pm}(\Lambda_{1}\Lambda_{2}\Lambda)
=\Omega_{\alpha\beta\bar{\mu}\bar{\nu}}^{(2)\pm}(\Lambda_{1}\Lambda_{2}\Lambda)
-a(\lambda_{\alpha}
\lambda_{\beta}\Lambda_{1})\Omega_{\beta\alpha\bar{\mu}\bar{\nu}}^{(2)\pm}(\Lambda_{1}\Lambda_{2}\Lambda).
\label{eq:3.10}
\end{equation}

%% #############################################################################
%% ################################## TABLE 6 ##################################
%% #############################################################################
\begin{table}
\caption{\label{Tab6}The expansion coefficients for two-body terms of the third-order contribution to the 
effective Hamiltonian (continued)}
\begin{tabular}{ll}
\hline\hline
$(mn\xi)$ & \raisebox{3.ex}{}\raisebox{-1.2ex}{}$\mathfrak{h}_{mn;\xi}^{(3)+}(\Lambda_{1}\Lambda_{2}\Lambda)$ \\
\hline
\raisebox{20.ex}{}$(233)$ & $\begin{array}{l}
2[\Lambda_{2}]^{1/2}\sum_{\mathrm{c}}(-1)^{\lambda_{\mathrm{c}}}\sum_{\overline{\Lambda}_{1}
\overline{\Lambda}_{2}\Lambda_{3}\overline{\Lambda}}(-1)^{\overline{M}}[\Lambda_{3},\overline{\Lambda}]^{1/2}
\Bigl((-1)^{\overline{\Lambda}}\langle\Lambda_{3}M_{3}\overline{\Lambda}-\overline{M}\vert\Lambda M
\rangle\bigl[\sum_{\mathrm{c^{\prime}}}\bigl([\Lambda_{1},\overline{\Lambda}_{1}]^{1/2}\\
\times(-1)^{\lambda_{\mathrm{c^{\prime}}}+\overline{\Lambda}_{1}}
\sum_{\mu=\mathrm{v},\mathrm{e}}\widetilde{z}(0\lambda_{\mathrm{c}}\lambda_{\mathrm{c^{\prime}}}
\lambda_{\mathrm{\bar{v}^{\prime}}}\lambda_{\mu^{\prime\prime}}\overline{\Lambda}_{2}\overline{\Lambda}_{2})
\{\delta_{\mu\mathrm{v}}\Omega_{\mathrm{v}\mu^{\prime\prime}\mathrm{v^{\prime}\bar{v}c^{\prime}c}}^{(2)+}
(\overline{\Lambda}_{1}\overline{\Lambda}_{2}\Lambda_{3}\overline{\Lambda})-a(\lambda_{\mathrm{v^{\prime}}}
\lambda_{\mu^{\prime\prime}}\overline{\Lambda}_{1}) \\
\times\Omega_{\mathrm{vv^{\prime}}\mu^{\prime\prime}
\mathrm{\bar{v}c^{\prime}c}}^{(2)+}(\overline{\Lambda}_{1}\overline{\Lambda}_{2}\Lambda_{3}
\overline{\Lambda})\}\left\{ \begin{smallmatrix}\overline{\Lambda}_{1} & \overline{\Lambda}_{2} & \overline{\Lambda}\\
\lambda_{\mathrm{\bar{v}^{\prime}}} & \lambda_{\mathrm{v^{\prime}}} & \lambda_{\mu^{\prime\prime}}\end{smallmatrix}\right\}
\left\{ \begin{smallmatrix}\lambda_{\mathrm{v}} & \lambda_{\mathrm{v^{\prime}}} & \Lambda_{1}\\
\lambda_{\mathrm{\bar{v}}} & \lambda_{\mathrm{\bar{v}^{\prime}}} & \Lambda_{2}\\
\Lambda_{3} & \overline{\Lambda} & \Lambda\end{smallmatrix}\right\} 
-\delta_{\Lambda_{1}
\overline{\Lambda}_{1}}(-1)^{\lambda_{\mathrm{c^{\prime}}}+\Lambda_{1}} \\
\times\sum_{\mathrm{v^{\prime\prime}}}\widetilde{z}(0\lambda_{\mathrm{c}}\lambda_{\mathrm{c^{\prime}}}
\lambda_{\mathrm{\bar{v}^{\prime}}}\lambda_{\mathrm{v^{\prime\prime}}}\overline{\Lambda}_{2}\overline{\Lambda}_{2})
\Omega_{\mathrm{v^{\prime\prime}vv^{\prime}\bar{v}c^{\prime}c}}^{(2)+}
(\Lambda_{1}\overline{\Lambda}_{2}\Lambda_{3}\overline{\Lambda})\left\{ \begin{smallmatrix}\Lambda_{2} & 
\overline{\Lambda}_{2} & \Lambda_{3}\\
\lambda_{\mathrm{v^{\prime\prime}}} & \lambda_{\mathrm{\bar{v}}} & \lambda_{\mathrm{\bar{v}^{\prime}}}
\end{smallmatrix}\right\} \left\{ \begin{smallmatrix}\Lambda_{1} & \Lambda_{2} & \Lambda\\
\Lambda_{3} & \overline{\Lambda} & \overline{\Lambda}_{2}\end{smallmatrix}\right\} \bigr)
-(-1)^{\lambda_{\mathrm{v^{\prime}}}+\Lambda_{1}} \\
\times[\Lambda_{1},\overline{\Lambda}_{2}]^{1/2}
\sum_{\mathrm{v^{\prime\prime}}}\sum_{\mu=\mathrm{v},\mathrm{e}}(-1)^{\lambda_{\mathrm{v^{\prime\prime}}}
+\lambda_{\bar{\mu}^{\prime\prime}}+\overline{\Lambda}_{1}+\overline{\Lambda}_{2}}
\widetilde{z}(0\lambda_{\mathrm{v}}\lambda_{\mathrm{c}}\lambda_{\bar{\mu}^{\prime\prime}}
\lambda_{\mathrm{v^{\prime\prime}}}\overline{\Lambda}_{1}\overline{\Lambda}_{1})
\Omega_{\mathrm{v^{\prime}v^{\prime\prime}}\bar{\mu}^{\prime\prime}\mathrm{\bar{v}\bar{v}^{\prime}c}}^{(2)+}
(\overline{\Lambda}_{1}\overline{\Lambda}_{2}\Lambda_{3}\overline{\Lambda}) \\
\times\left\{ \begin{smallmatrix}\overline{\Lambda}_{1} & \overline{\Lambda}_{2} & \overline{\Lambda}\\
\lambda_{\mathrm{\bar{v}^{\prime}}} & \lambda_{\mathrm{v}} & \lambda_{\mathrm{c}}\end{smallmatrix}\right\} 
\left\{ \begin{smallmatrix}\lambda_{\mathrm{v^{\prime}}} & \lambda_{\mathrm{v}} & \Lambda_{1}\\
\lambda_{\mathrm{\bar{v}}} & \lambda_{\mathrm{\bar{v}^{\prime}}} & \Lambda_{2}\\
\Lambda_{3} & \overline{\Lambda} & \Lambda\end{smallmatrix}\right\} \bigr]
+(-1)^{\lambda_{\mathrm{\bar{v}}}+\Lambda_{2}+\Lambda_{3}+\Lambda}[\Lambda_{1},\overline{\Lambda}_{1},
\overline{\Lambda}_{2}]^{1/2}\langle\Lambda_{3}M_{3}\overline{\Lambda}\:\overline{M}\vert\Lambda M
\rangle\sum_{u}[u]^{1/2} \\
\times\bigl[(-1)^{\overline{\Lambda}_{1}}
\sum_{\mu=\mathrm{v},\mathrm{e}}\widetilde{z}(0\lambda_{\mathrm{v}}\lambda_{\mathrm{c}}
\lambda_{\bar{\mu}^{\prime\prime}}\lambda_{\mu^{\prime\prime}}uu)\Omega_{\mu^{\prime\prime}
\bar{\mu}^{\prime\prime}\mathrm{v^{\prime}\bar{v}\bar{v}^{\prime}c}}^{(2)+}
(\overline{\Lambda}_{1}\overline{\Lambda}_{2}\Lambda_{3}\overline{\Lambda})\left\{ \begin{smallmatrix}u &  & 
\lambda_{\mathrm{\bar{v}}}\:\:\lambda_{\mathrm{c}} &  & \lambda_{\mathrm{\bar{v}^{\prime}}}\:\:\lambda_{\mathrm{v}} &  & 
\Lambda_{2}\\
\  & \lambda_{\mu^{\prime\prime}} &  & \overline{\Lambda}_{2} &  & \Lambda_{1}\\
\lambda_{\bar{\mu}^{\prime\prime}} &  & \Lambda_{3}\:\:\overline{\Lambda}_{1} &  & 
\overline{\Lambda}\:\:\lambda_{\mathrm{v^{\prime}}} &  & \Lambda\end{smallmatrix}\right\} \\
+(-1)^{\lambda_{\mathrm{v^{\prime}}}}
\sum_{\mathrm{v^{\prime\prime}}}\bigl(\sum_{\mathrm{e}}(-1)^{\lambda_{\mathrm{e}}}
\widetilde{z}(0\lambda_{\mathrm{v}}\lambda_{\mathrm{c}}\lambda_{\mathrm{e}}
\lambda_{\mathrm{v^{\prime\prime}}}uu)\Omega_{\mathrm{v^{\prime\prime}v^{\prime}e\bar{v}\bar{v}^{\prime}c}}^{(2)+}
(\overline{\Lambda}_{1}\overline{\Lambda}_{2}\Lambda_{3}\overline{\Lambda}) \\
\times\left\{ \begin{smallmatrix}u &  & 
\lambda_{\mathrm{\bar{v}}}\:\:\lambda_{\mathrm{c}} &  & \lambda_{\mathrm{\bar{v}^{\prime}}}\:\:\lambda_{\mathrm{v}} &  & 
\Lambda_{2}\\
\  & \lambda_{\mathrm{v^{\prime\prime}}} &  & \overline{\Lambda}_{2} &  & \Lambda_{1}\\
\lambda_{\mathrm{e}} &  & \Lambda_{3}\:\:\overline{\Lambda}_{1} &  & \overline{\Lambda}\:\:
\lambda_{\mathrm{v^{\prime}}} &  & \Lambda\end{smallmatrix}\right\}
-(-1)^{\Lambda_{1}+u}
\sum_{\mathrm{\bar{v}^{\prime\prime}}}(-1)^{\lambda_{\mathrm{\bar{v}^{\prime\prime}}}}
\widetilde{z}(0\lambda_{\mathrm{v}}\lambda_{\mathrm{c}}\lambda_{\mathrm{\bar{v}^{\prime\prime}}}
\lambda_{\mathrm{v^{\prime\prime}}}uu) \\
\times\Omega_{\mathrm{\mathrm{\bar{v}^{\prime\prime}}
v^{\prime}v^{\prime\prime}\bar{v}\bar{v}^{\prime}c}}^{(2)+}(\overline{\Lambda}_{1}\overline{\Lambda}_{2}\Lambda_{3}
\overline{\Lambda})
\left\{ \begin{smallmatrix}u &  & \lambda_{\mathrm{\bar{v}}}\:\:\lambda_{\mathrm{c}} &  & 
\lambda_{\mathrm{\bar{v}^{\prime}}}\:\:\lambda_{\mathrm{v}} &  & \Lambda_{2}\\
\  & \lambda_{\mathrm{\bar{v}^{\prime\prime}}} &  & \overline{\Lambda}_{2} &  & \Lambda_{1}\\
\lambda_{\mathrm{v^{\prime\prime}}} &  & \Lambda_{3}\:\:\overline{\Lambda}_{1} &  & \overline{\Lambda}\:\:
\lambda_{\mathrm{v^{\prime}}} &  & \Lambda\end{smallmatrix}\right\} \bigr)\bigr]\Bigr)\end{array}$ \\ \\
\raisebox{12.5ex}{}$(244)$ & $\begin{array}{l}
2\delta_{\Lambda_{1}\Lambda_{2}}\delta_{\Lambda0}(-1)^{\lambda_{\mathrm{\bar{v}}}
+\lambda_{\mathrm{\bar{v}^{\prime}}}}\sum_{\mathrm{cc^{\prime}}\overline{\Lambda}_{2}}
\Bigl(\sum_{\mathrm{v^{\prime\prime}}}\bigl(\sum_{\mathrm{\bar{v}^{\prime\prime}}}
a(\lambda_{\mathrm{v^{\prime\prime}}}\lambda_{\mathrm{\bar{v}^{\prime\prime}}}\Lambda_{1})
\widetilde{z}(0\lambda_{\mathrm{c}}\lambda_{\mathrm{c^{\prime}}}\lambda_{\mathrm{\bar{v}^{\prime\prime}}}
\lambda_{\mathrm{v^{\prime\prime}}}\overline{\Lambda}_{2}\overline{\Lambda}_{2}) \\
\times\Omega_{\mathrm{vv^{\prime}v^{\prime\prime}\bar{v}^{\prime\prime}\bar{v}^{\prime}\bar{v}c^{\prime}c}}^{(2)}
(\Lambda_{1}\Lambda_{1}\overline{\Lambda}_{2}\overline{\Lambda}_{2}0)
+\sum_{\mathrm{e}\overline{\Lambda}}[\overline{\Lambda}][a(\lambda_{\mathrm{e}}\lambda_{\mathrm{v^{\prime\prime}}}
\overline{\Lambda}_{2})[\Lambda_{1},\overline{\Lambda}_{2}]^{-1/2}\widetilde{z}(0\lambda_{\mathrm{c}}
\lambda_{\mathrm{c^{\prime}}}\lambda_{\mathrm{v^{\prime\prime}}}\lambda_{\mathrm{e}}\overline{\Lambda}_{2}
\overline{\Lambda}_{2}) \\
\times\Omega_{\mathrm{ev^{\prime\prime}vv^{\prime}\bar{v}^{\prime}\bar{v}c^{\prime}c}}^{(2)}
(\overline{\Lambda}_{2}\Lambda_{1}\Lambda_{1}\overline{\Lambda}_{2}\overline{\Lambda})-\sum_{\Lambda_{3}\Lambda_{4}}
[\overline{\Lambda}_{2},\Lambda_{3}]^{1/2}(-1)^{\Lambda_{4}}
a(\lambda_{\mathrm{e}}\lambda_{\mathrm{v^{\prime}}}\Lambda_{1})\left\{ \begin{smallmatrix}\Lambda_{1} & 
\overline{\Lambda}_{2} & \overline{\Lambda}\\
\lambda_{\mathrm{e}} & \lambda_{\mathrm{v^{\prime}}} & \lambda_{\mathrm{v}}\end{smallmatrix}\right\} 
\left\{ \begin{smallmatrix}\Lambda_{3} & \Lambda_{4} & \overline{\Lambda}\\
\lambda_{\mathrm{e}} & \lambda_{\mathrm{v^{\prime}}} & \lambda_{\mathrm{v^{\prime\prime}}}
\end{smallmatrix}\right\} \\
\times\widetilde{z}(0\lambda_{\mathrm{c}}\lambda_{\mathrm{c^{\prime}}}
\lambda_{\mathrm{v^{\prime\prime}}}\lambda_{\mathrm{e}}\Lambda_{4}\Lambda_{4})
\{\Omega_{\mathrm{evv^{\prime}v^{\prime\prime}\bar{v}^{\prime}\bar{v}c^{\prime}c}}^{(2)}
(\overline{\Lambda}_{2}\Lambda_{1}\Lambda_{3}\Lambda_{4}\overline{\Lambda})-a(\lambda_{\mathrm{v^{\prime}}}
\lambda_{\mathrm{v^{\prime\prime}}}\Lambda_{3})
\Omega_{\mathrm{evv^{\prime\prime}v^{\prime}\bar{v}^{\prime}\bar{v}c^{\prime}c}}^{(2)}
(\overline{\Lambda}_{2}\Lambda_{1}\Lambda_{3}\Lambda_{4}\overline{\Lambda})\}]\bigr) \\
+\sum_{\mu=\mathrm{v},\mathrm{e}}\sum_{\overline{\Lambda}}[\overline{\Lambda}][\Lambda_{1},
\overline{\Lambda}_{2}]^{-1/2}\widetilde{z}(0\lambda_{\mathrm{c}}\lambda_{\mathrm{c^{\prime}}}
\lambda_{\bar{\mu}^{\prime\prime}}\lambda_{\mu^{\prime\prime}}\overline{\Lambda}_{2}\overline{\Lambda}_{2})
\Omega_{\mu^{\prime\prime}\bar{\mu}^{\prime\prime}\mathrm{vv^{\prime}\bar{v}^{\prime}\bar{v}c^{\prime}c}}^{(2)}
(\overline{\Lambda}_{2}\Lambda_{1}\Lambda_{1}\overline{\Lambda}_{2}\overline{\Lambda})\\
+\sum_{\mathrm{v^{\prime\prime}\bar{v}^{\prime\prime}}}\sum_{\Lambda_{3}\Lambda_{4}\overline{\Lambda}}
a(\Lambda_{1}\Lambda_{3}\lambda_{\mathrm{\bar{v}^{\prime\prime}}})[\overline{\Lambda}][\overline{\Lambda}_{2},
\Lambda_{3}]^{1/2}\widetilde{z}(0\lambda_{\mathrm{c}}\lambda_{\mathrm{c^{\prime}}}\lambda_{\mathrm{\bar{v}^{\prime\prime}}}
\lambda_{\mathrm{v^{\prime\prime}}}\Lambda_{4}\Lambda_{4})\left\{ \begin{smallmatrix}\Lambda_{1} & \overline{\Lambda}_{2} & 
\overline{\Lambda}\\
\lambda_{\mathrm{\bar{v}^{\prime\prime}}} & \lambda_{\mathrm{v^{\prime}}} & \lambda_{\mathrm{v}}\end{smallmatrix}\right\} 
\left\{ \begin{smallmatrix}\Lambda_{3} & \Lambda_{4} & \overline{\Lambda}\\
\lambda_{\mathrm{\bar{v}^{\prime\prime}}} & \lambda_{\mathrm{v^{\prime}}} & \lambda_{\mathrm{v^{\prime\prime}}}
\end{smallmatrix}\right\} \\
\times[\Omega_{\mathrm{\bar{v}^{\prime\prime}vv^{\prime\prime}v^{\prime}\bar{v}^{\prime}\bar{v}c^{\prime}c}}^{(2)}
(\overline{\Lambda}_{2}\Lambda_{1}\Lambda_{3}\Lambda_{4}\overline{\Lambda})+a(\lambda_{\mathrm{v}}
\lambda_{\mathrm{v^{\prime}}}\overline{\Lambda}_{2})a(\lambda_{\mathrm{v^{\prime\prime}}}
\lambda_{\mathrm{\bar{v}^{\prime\prime}}}\Lambda_{3})\Omega_{\mathrm{v\bar{v}^{\prime\prime}v^{\prime}
v^{\prime\prime}\bar{v}^{\prime}\bar{v}c^{\prime}c}}^{(2)}(\overline{\Lambda}_{2}\Lambda_{1}\Lambda_{3}\Lambda_{4}
\overline{\Lambda}) \\
-a(\Lambda_{3}\Lambda_{4}\lambda_{\mathrm{v^{\prime}}})
\{\Omega_{\mathrm{\bar{v}^{\prime\prime}vv^{\prime}v^{\prime\prime}\bar{v}^{\prime}\bar{v}c^{\prime}c}}^{(2)}
(\overline{\Lambda}_{2}\Lambda_{1}\Lambda_{3}\Lambda_{4}\overline{\Lambda})+(-1)^{\lambda_{\mathrm{v^{\prime\prime}}}
+\Lambda_{3}}a(\lambda_{\mathrm{v}}\lambda_{\mathrm{v^{\prime}}}\overline{\Lambda}_{2}) \\
\times\Omega_{\mathrm{v\bar{v}^{\prime\prime}v^{\prime\prime}v^{\prime}\bar{v}^{\prime}\bar{v}c^{\prime}c}}^{(2)}
(\overline{\Lambda}_{2}\Lambda_{1}\Lambda_{3}\Lambda_{4}\overline{\Lambda})\}]\Bigr)\end{array}$ \\
\hline\hline
\end{tabular}
\end{table}

\noindent{}Particularly, the $\Omega^{(2)-}$ are derived by replacing the one-particle and two-particle matrix elements 
$v_{\alpha\bar{\beta}}$ and
$v_{\alpha\beta\bar{\mu}\bar{\nu}}$ with $v_{\bar{\beta}\alpha}$ and $v_{\bar{\mu}\bar{\nu}\alpha\beta}$ 
in order to obtain the standard form of the constructions that lay out in Tabs. \ref{Tab1}-\ref{Tab3}.
For example, the element
$S_{\alpha\bar{\beta}}(\tau_{1}\tau_{2}\tau)$ (Tab. \ref{Tab1})
is recognized from $\sum_{\mu}v_{\alpha\mu}v_{\mu\bar{\beta}}/(\varepsilon_{\bar{\beta}}-\varepsilon_{\mu})$
by excluding the $\mathrm{SO}(3)$-invariant part.
Thus, for $\alpha=\mathrm{v}$, $\beta=\mathrm{c}$, $\mu=\mathrm{e}$,
we get $\ddot{S}_{\mathrm{v\bar{c}}}(\tau_{1}\tau_{2}\tau)$ which
fits the definition of one-particle operator along with $W_{m}^{\tau}(\lambda_{\mathrm{v}}
\widetilde{\lambda}_{\mathrm{\bar{c}}})$.
On the other hand, another one-particle matrix element could
be obtained from $\sum_{\mu}v_{\alpha\mu}v_{\mu\bar{\beta}}/(\varepsilon_{\mu}-\varepsilon_{\alpha})$.
The last element must be written in a standard form $(-1)\sum_{\mu}v_{\bar{\beta}\mu}v_{\mu\alpha}/
(\varepsilon_{\alpha}-\varepsilon_{\mu})$
to arrange the element $S_{\bar{\beta}\alpha}(\tau_{1}\tau_{2}\tau)$
correctly. For concrete values $\alpha=\mathrm{v}$ and $\beta=\mathrm{c}$,
the orbital $\mu$ equals to $\mathrm{c}$ only (see Proposition \ref{prop3}), and the element
is denoted $S_{\mathrm{\bar{c}v}}(\tau_{1}\tau_{2}\tau)$. In a tensor
formalism, we gain the opposite sign of the basis index $m$ in $W_{m}^{\tau}(\lambda_{\mathrm{v}}
\widetilde{\lambda}_{\mathrm{\bar{c}}})$.

The diagrammatic interpretation of $\widehat{h}_{mn;\xi}^{(3)-}$
can be clarified as follows. The diagrams of the third-order effective
Hamiltonian $H_{eff}^{(3)}$, assembled in $\widehat{h}_{mn;\xi}^{(3)-}$, are derived by contracting
the perturbation $V_{m}$ with $\widehat{\Omega}_{n}^{(2)-}$, where $\widehat{\Omega}_{n}^{(2)-}$
includes: (i) the folded
diagrams; (ii) some diagrams, obtained by contracting the core orbitals
in Wick's series; (iii) the diagrams, acceded to the reflection of
a number of diagrams in $\widehat{\Omega}_{n}^{(2)+}$ about a horizontal axis. Note, 
$\Omega_{\alpha\beta\zeta\rho\bar{\mu}\bar{\nu}\bar{\eta}\bar{\sigma}}^{(2)+}\equiv
\Omega_{\alpha\beta\zeta\rho\bar{\mu}\bar{\nu}\bar{\eta}\bar{\sigma}}^{(2)}$, since it is obtained from the
disconnected diagrams $V_{2}\widehat{\Omega}^{(1)}_{2}$ (see Tab. \ref{Tab3}).

%% #############################################################################
%% ############################ Subsection III-3 ###############################
%% #############################################################################
\subsection{The estimation of terms}
\label{count}

The evaluation of the number of computed one-body terms 
(Tab. \ref{Tab4}) of $H_{eff}^{(3)}$ is presented in Tab. \ref{Tab7}, 
where $d^{\pm}$ denotes
the number of direct terms in $\widehat{h}_{mn;\xi}^{(3)\pm}$, while $\bar{d}^{\pm}$ denotes the number of
direct terms in $\widehat{h}_{mn;\xi}^{(3)\pm}$, if the one-body interactions $v^{\tau_{i}}$ ($i=0,1,2$) are absent.
Totally, there are computed $188+70=258$ direct one-body terms of $\widehat{h}_{mn;\xi}^{(3)}$ and $72+20=92$ direct one-body 
terms of $\widehat{h}_{mn;\xi}^{(3)}$, including the two-particle interactions $g^{0}$ only.
For instance, Blundell et. al. \cite{Blundell2} calculated $84$ diagrams 
contributing to the third-order mono-valent removal
energy. In our considerations, their studied energies: a) $E_{A}^{(3)}-E_{H}^{(3)}$; b) $E_{I}^{(3)},E_{J}^{(3)}$ and
c) $E_{K}^{(3)},E_{L}^{(3)}$ 
(see Ref. \cite[Sec. II, Eq. (8)]{Blundell2}) denote the matrix elements of terms in $\widehat{h}_{22;3}^{(3)}$, 
$\widehat{h}_{23;4}^{(3)}$ and $\widehat{h}_{21;2}^{(3)}$, if $g^{0}$ represents the Coulomb interaction.

The estimation of the amount of two-body terms (Tabs. \ref{Tab5}-\ref{Tab6}) that contribute to 
$H_{eff}^{(3)}$ is provided in Tab. \ref{Tab8}.
There are $217+82=299$ direct two-body terms in $\widehat{h}_{mn;\xi}^{(3)\pm}$ and $125+42=167$ direct two-body
terms including the two-particle interactions $g^{0}$ only. In their study of beryllium and magnesium isoelectronic 
sequences, Ho et. al.\cite{Ho} calculated $218$ two-body diagrams of the third-order perturbation. Analogous disposition
to account for the two-particle interactions only, can be found and in other works \cite{Safronova,Blundell2,Chou}.
Additionally, most of them do not account for the folded diagrams.

Meanwhile, the expressions in Tabs. \ref{Tab4}-\ref{Tab6}, obtained by exploiting the properties of proposed
model space $\mathcal{P}$ (Sec. \ref{model}), have their own benefits: 
\begin{quote}
1. The third-order contributions to the effective Hamiltonian $H_{eff}$
are written in an operator form providing an opportunity to construct their matrix elements efficiently. Namely, the 
irreducible tensor operators, labeled by the representations $\Lambda$ (see Eqs. (\ref{eq:3.8})-(\ref{eq:3.9})), are
written apart from the
projection-independent parts. These angular coefficients include the structure coefficients $\Omega^{(2)\pm}$, 
multiplied by the
$3nj$-symbols. Particularly, the coefficients $\Bigl\{\begin{smallmatrix}j_{1}& &j_{2}& &j_{3}& &j_{4}\\ 
&l_{1}& &l_{2}& &l_{3}& &l_{4}\\ k_{1}& &k_{2}& &k_{3}& &k_{4}\end{smallmatrix}\Bigr\}$ and
$\Bigl\{\begin{smallmatrix}k_{1}& &k_{1}^{\prime}\:\:k& &k^{\prime}\:\:k_{2}& &k_{2}^{\prime}\\ 
&p_{1}& &p& &p_{2}& &\\ j_{1}& &j_{1}^{\prime}\:\:j& &j^{\prime}\:\:j_{2}& &j_{2}^{\prime}\end{smallmatrix}\Bigr\}$
(see Tabs. \ref{Tab5}-\ref{Tab6}), denote the $12j$-symbol of the first kind \cite[Sec. 4-33, Eq. (33.17), p. 207]{Jucys2} 
and the $15j$-symbol of the third kind \cite[Sec. 4-20, Eq. (20.3), p. 112]{Jucys}. 

2. The form of the expressions in 
Tabs. \ref{Tab4}-\ref{Tab6}, allows to evaluate the contributions of $n$-particle effects in CC approach. 
This is done by simply replacing $\Omega^{(2)\pm}$ with $\Omega_{n}$ ($n=1,2,3,4$). By generally accepted labeling,
such replacement leads to the transformation
$g_{\alpha\beta\ldots\zeta}\rightarrow \rho_{\alpha\beta\ldots\zeta}$, where $\rho$ denotes
the valence singles, doubles, triples, quadruples amplitude. 

3. Obtained terms of the third-order perturbation include, in addition,
the one-particle operators $v^{\tau_{i}}$ ($\tau_{i}=0,1,2,\ldots$) that represent the magnetic, hyperfine, etc. interactions. 
Moreover,
the general expressions also fit none relativistic as well as the relativistic approaches. These effects are embodied
in $z(0\lambda_{\alpha}\lambda_{\beta}\lambda_{\bar{\nu}}\lambda_{\bar{\mu}}\Gamma_{1}\Gamma_{1})$ coefficients.

4. The reduction scheme allows to write a large number of terms 
in a concise form. This feature becomes evident especially clearly when in contrast the terms are written side by side their 
diagrammatic representation.
\end{quote}

%% #############################################################################
%% ################################## TABLE 7 ##################################
%% #############################################################################
\begin{table}
\caption{\label{Tab7}The amount of one-body terms in MBPT(3)}
\begin{tabular}{rrrrrrrrrrrrrr}
\hline\hline
$(mn\xi)$ & & & & $d^{+}$ & & & $\bar{d}^{+}$ & & & $d^{-}$ & & & $\bar{d}^{-}$ \\
\hline
$(111)$ & & & & $13$ & & & $0$ & & & $3$ & & & $0$ \\
$(122)$ & & & & $37$ & & & $0$ & & & $18$ & & & $0$ \\
$(212)$ & & & & $14$ & & & $2$ & & & $2$ & & & $0$ \\
$(223)$ & & & & $67$ & & & $34$ & & & $29$ & & & $2$ \\
$(234)$ & & & & $57$ & & & $36$ & & & $18$ & & & $18$ \\
\hline 
Total: & & & & $188$ & & & $72$ & & & $70$ & & & $20$ \\
\hline\hline
\end{tabular}
\end{table}

%% #############################################################################
%% ################################## TABLE 8 ##################################
%% #############################################################################
\begin{table}
\caption{\label{Tab8}The amount of two-body terms in MBPT(3)}
\begin{tabular}{rrrrrrrrrrrrrr}
\hline\hline
$(mn\xi)$ & & & & $d^{+}$ & & & $\bar{d}^{+}$ & & & $d^{-}$ & & & $\bar{d}^{-}$ \\
\hline
$(121)$ & & & & $20$ & & & $0$ & & & $10$ & & & $0$ \\
$(211)$ & & & & $13$ & & & $2$ & & & $3$ & & & $0$ \\
$(222)$ & & & & $64$ & & & $32$ & & & $31$ & & & $4$ \\
$(132)$ & & & & $20$ & & & $16$ & & & $10$ & & & $10$ \\
$(233)$ & & & & $75$ & & & $50$ & & & $28$ & & & $28$ \\
$(244)$ & & & & $25$ & & & $25$ & & & $-$ & & & $-$ \\
\hline 
Total: & & & & $217$ & & & $125$ & & & $82$ & & & $42$ \\
\hline\hline
\end{tabular}
\end{table}

Maintaining the completeness of the present discuss, we note that $H_{eff}^{(3)}$ also includes the terms
$\widehat{h}_{mn;\xi}^{(3)}$ with $m+n-\xi=0,3,4,5$. For example, the coefficient 
$\mathfrak{h}_{22;1}^{(3)+}(E_{1}\Lambda_{1}E_{2}\Lambda_{2}\Lambda)$ reads

\begin{align}
\mathfrak{h}_{22;1}^{(3)+}(E_{1}\Lambda_{1}E_{2}\Lambda_{2}\Lambda)
&=(-1)^{\lambda_{\mathrm{v^{\prime\prime}}}+\lambda_{\mathrm{\bar{v}^{\prime}}}
+\lambda_{\mathrm{\bar{v}^{\prime\prime}}}+\Lambda_{2}+\Lambda}[\Lambda_{1},\Lambda_{2}]^{1/2}
{\displaystyle \sum_{\overline{\Lambda}_{1}}}\Bigl({\displaystyle \sum_{\overline{\Lambda}_{2}}}
a(\lambda_{\mathrm{v^{\prime}}}\lambda_{\mathrm{\bar{v}}}\overline{\Lambda}_{2}) \nonumber \\
&\times[E_{1},E_{2},\overline{\Lambda}_{1}]^{1/2}{\displaystyle \sum_{\mathrm{c}u}}
\widetilde{z}(0\lambda_{\mathrm{c}}\lambda_{\mathrm{v}}\lambda_{\mathrm{\bar{v}^{\prime}}}
\lambda_{\mathrm{\bar{v}}}uu)\Omega_{\mathrm{v^{\prime}v^{\prime\prime}c\bar{v}^{\prime\prime}}}^{(2)+}
(\overline{\Lambda}_{1}\overline{\Lambda}_{2}\Lambda) \nonumber \\
&\times\left\{ \begin{smallmatrix}\lambda_{\mathrm{v}} & \overline{\Lambda}_{2} & \Lambda_{2}\\
\lambda_{\mathrm{\bar{v}^{\prime\prime}}} & u & \lambda_{\mathrm{c}}\end{smallmatrix}\right\} 
\left\{ \begin{smallmatrix}\Lambda_{2} & \lambda_{\mathrm{v}} & \overline{\Lambda}_{2}\\
\overline{\Lambda}_{1} & \Lambda & \Lambda_{1}\end{smallmatrix}\right\} \left\{ \begin{smallmatrix}
\lambda_{\mathrm{v}} & \lambda_{\mathrm{v^{\prime}}} & E_{1}\\
\lambda_{\mathrm{v^{\prime\prime}}} & \Lambda_{1} & \overline{\Lambda}_{1}\end{smallmatrix}\right\} 
\left\{ \begin{smallmatrix}\lambda_{\mathrm{\bar{v}^{\prime\prime}}} & \lambda_{\mathrm{\bar{v}^{\prime}}} & E_{2}\\
\lambda_{\mathrm{\bar{v}}} & \Lambda_{2} & u\end{smallmatrix}\right\} \nonumber \\
&+(-1)^{E_{1}+E_{2}}{\displaystyle \sum_{\mathrm{e}}}\widetilde{z}(0\lambda_{\mathrm{v}}
\lambda_{\mathrm{v^{\prime}}}\lambda_{\mathrm{e}}\lambda_{\mathrm{\bar{v}}}E_{1}E_{1})
\Omega_{\mathrm{ev^{\prime\prime}\bar{v}^{\prime}\bar{v}^{\prime\prime}}}^{(2)+}(\overline{\Lambda}_{1}E_{2}\Lambda) \nonumber \\
&\times\left\{ \begin{smallmatrix}\Lambda_{1} & \overline{\Lambda}_{1} & \lambda_{\mathrm{\bar{v}}}\\
\lambda_{\mathrm{e}} & E_{1} & \lambda_{\mathrm{v^{\prime\prime}}}\end{smallmatrix}\right\} 
\left\{ \begin{smallmatrix}\Lambda_{1} & \Lambda_{2} & \Lambda\\
E_{2} & \overline{\Lambda}_{1} & \lambda_{\mathrm{\bar{v}}}\end{smallmatrix}\right\} 
\Bigr),\label{fin}
\end{align}

\noindent{}and it forms the three-body operator $\widehat{h}_{22;1}^{(3)}$ along with the irreducible tensor operator

\begin{equation}
[[W^{E_{1}}(\lambda_{\mathrm{v}}\lambda_{\mathrm{v^{\prime}}})
\times a^{\lambda_{\mathrm{v^{\prime\prime}}}}]^{\Lambda_{1}}\times [W^{E_{2}}(\widetilde{\lambda}_{\mathrm{\bar{v}}^{\prime\prime}}
\widetilde{\lambda}_{\mathrm{\bar{v}^{\prime}}})\times \tilde{a}^{\lambda_{\mathrm{\bar{v}}}}]^{\Lambda_{2}}]_{M}^{\Lambda}.
\label{fin2}\end{equation}

\noindent{}However, the triple and higher-order effects are not covered by the examination of this paper.

%% #############################################################################
%% ################################# Section IV ################################
%% #############################################################################
\section{Conclusions}
\label{summary}

We present an algebraic technique to evaluate the terms of MBPT. The method relies on two main circumstances:
the strictly determined operations of the Fock space operators on the vectors of the orthogonal subspaces of given
separable Hilbert space, and the specific reduction scheme of terms of the PT. The aspiration to determine the behavior of
the creation and annihilation operators is motivated by the fact that in higher-order perturbation theories, a huge
number of terms is generated and, particularly, not all computed terms of the wave operator attach none zero contributions
to the terms of effective interaction operator. Therefore the rules that allow to predetermine these valuable terms
become meaningful. Meanwhile, in the PT, another no less important procedure is to work up the generated terms in 
order to calculate their matrix elements efficiently.
If going on the traditional route, when each term is expressed side by side its diagrammatic representation,
then once again one deals with the redundancy of terms and the procedure of reduction becomes troublesome.
In this paper, however, we suggest the reduction scheme which is of versatile disposition. That is, the generated
terms are combined in groups related to the different types of one-electron orbitals. Afterwards, we construct
the effective matrix elements and apply the Wigner-Eckart theorem. In the result, we get the irreducible tensor
operators and the projection-independent parts that are located apart from each other. 
This means the form of reduced terms becomes universal, only
the $\mathrm{SO}(3)$-invariant parts that embody the dynamics of studied physical interactions are changed.
Since the method of reduction is applied to the third-order MBPT, we obtain maximum four-body effective
matrix elements that are derived from the four-body parts of the second-order wave operator.

Finally, it is worth to mention that obtained symbolic preparation of terms of the MBPT(3) can be extremely
simplified if applied to some special cases of interest. Mathematically, this means that some of the irreducible
representations drawn in the $3nj$-symbols, in most cases would be simply equal to zero. Particularly,
the $12j$- or $15j$-symbols would become the ordinary and widely used $6j$- or $9j$-symbols. Moreover, if
the one-electron interactions, characterized by the none scalar representations,
are not taken into account, then the irreducible one-body and two-body operators
of the effective Hamiltonian are simply scalar operators. Thus their matrix elements become even simpler.

\newpage{}

%% #############################################################################
%% ################################## Appendix #################################
%% #############################################################################
\appendix

\section{\label{A}SO(3)-invariant part of the second-order wave operator}{\small
\begin{verse}
\emph{One-body part.}
\end{verse}

\begin{subequations}\label{5.2.3}
\begin{equation}
\begin{array}{l}
\Omega_{\mu\mathrm{c}}^{(2)+}(\Lambda)(\varepsilon_{\mathrm{c}}-\varepsilon_{\mu})\\
\\=\delta_{\Lambda\tau}[\ddot{S}_{\mu\mathrm{c}}(\tau_{1}\tau_{2}\tau)+\dot{S}_{\mu\mathrm{c}}(\tau_{1}\tau_{2}\tau)]
+\delta_{\Lambda\tau_{1}}[\widetilde{\ddot{S}}_{\mu\mathrm{c}}(\tau_{1})+\widetilde{\dot{S}}_{\mu\mathrm{c}}(\tau_{1})]\\
+\delta_{\Lambda\tau_{2}}[\widetilde{\ddot{S^{\prime}}}_{\mu\mathrm{c}}(\tau_{2})+\widetilde{\dot{S^{\prime}}}_{\mu\mathrm{c}}
(\tau_{2})]+\delta_{\Lambda0}\widetilde{S}_{\mu\mathrm{c}},
\end{array}\label{eq:5.2.3a}\end{equation}

\begin{equation}
\Omega_{\mu\mathrm{c}}^{(2)-}(\Lambda)(\varepsilon_{\mathrm{c}}-\varepsilon_{\mu})
=\delta_{\Lambda\tau}S_{\mathrm{c}\mu}(\tau_{1}\tau_{2}\tau).\label{eq:5.2.3b}\end{equation}

\end{subequations}\begin{subequations}\label{c1}

\begin{equation}
\begin{array}{l}
\Omega_{\mathrm{ev}}^{(2)+}(\Lambda)(\varepsilon_{\mathrm{v}}-\varepsilon_{\mathrm{e}})\\
\\=\delta_{\Lambda\tau}\ddot{S}_{\mathrm{ev}}(\tau_{1}\tau_{2}\tau)+\delta_{\Lambda\tau_{1}}[\widetilde{\ddot{S}}_{
\mathrm{ev}}(\tau_{1})+\widetilde{\dot{S}}_{\mathrm{ev}}(\tau_{1})]\\+\delta_{\Lambda\tau_{2}}[\widetilde{\ddot{S}^{\prime}}_{
\mathrm{ev}}(\tau_{2})+\widetilde{\dot{S}^{\prime}}_{\mathrm{ev}}(\tau_{2})]
+\delta_{\Lambda0}\widetilde{S}_{\mathrm{ev}},\end{array}\label{eq:c1a}\end{equation}

\begin{equation}
\Omega_{\mathrm{ev}}^{(2)-}(\Lambda)(\varepsilon_{\mathrm{v}}-\varepsilon_{\mathrm{e}})
=\delta_{\Lambda\tau}[\dot{S}_{\mathrm{ve}}(\tau_{1}\tau_{2}\tau)+\ddot{S}_{\mathrm{ve}}(
\tau_{1}\tau_{2}\tau)].\label{eq:c1b}\end{equation}

\end{subequations}
\begin{verse}
\emph{Two-body part.}
\end{verse}
\begin{subequations}\label{5.2.5}

\begin{equation}
\begin{array}{l}
\Omega_{\mu\mu^{\prime}\mathrm{cc^{\prime}}}^{(2)+}(\Lambda_{1}\Lambda_{2}\Lambda)(\varepsilon_{\mathrm{cc^{\prime}}}
-\varepsilon_{\mu\mu^{\prime}})\\
\\=\delta_{\Lambda_{1}u}\delta_{\Lambda_{2}d}\delta_{\Lambda\tau}D_{\mu\mu^{\prime}\mathrm{cc^{\prime}}}(ud\tau)
+\delta_{\Lambda_{1}U}\delta_{\Lambda_{2}u}\delta_{\Lambda\tau_{1}}[\mathfrak{X}_{\mu\mu^{\prime}\mathrm{
\mathrm{cc^{\prime}}}}(Uu\tau_{1})\\
+{\scriptstyle \frac{1}{2}}\widetilde{\mathfrak{Y}}_{\mu\mu^{\prime}\mathrm{cc^{\prime}}}(Uu\tau_{1})]
+{\scriptstyle \frac{1}{2}}\delta_{\Lambda_{1}U}\delta_{\Lambda_{2}u}\delta_{\Lambda\tau_{2}}\mathcal{Z}_{\mu^{\prime}
\mu\mathrm{cc^{\prime}}}(Uu\tau_{2})\\
\times\widetilde{D^{\prime}}_{\mu^{\prime}\mu\mathrm{cc^{\prime}}}(Uu\tau_{2})
-\delta_{\Lambda_{1}\Lambda_{2}}\delta_{\Lambda_{1}u}\delta_{\Lambda0}[{\scriptstyle \frac{1}{4}}\ddot{D}_{
\mu\mu^{\prime}\mathrm{cc^{\prime}}}(\widetilde{u}u)\\
+{\scriptstyle \frac{1}{4}}\dot{D}_{\mu\mu^{\prime}
\mathrm{cc^{\prime}}}(\widetilde{u}u)
+{\scriptstyle \frac{1}{2}}\dddot{\mathfrak{Z}}_{\mu\mu^{\prime}
\mathrm{cc^{\prime}}}(\widetilde{u}u)
+\widetilde{\dot{\Delta}}_{\mu\mu^{\prime}\mathrm{cc^{\prime}}}(uu)+\widetilde{\ddot{\Delta}}_{\mu\mu^{\prime}
\mathrm{cc^{\prime}}}(uu)],\end{array}\label{eq:5.2.5a}\end{equation}

\begin{equation}
\begin{array}{l}
\Omega_{\mu\mu^{\prime}\mathrm{cc^{\prime}}}^{(2)-}(\Lambda_{1}\Lambda_{2}\Lambda)(\varepsilon_{\mathrm{cc^{\prime}}}
-\varepsilon_{\mu\mu^{\prime}})\\
\\=-{\scriptstyle \frac{1}{2}}\delta_{\Lambda_{1}U}\delta_{\Lambda_{2}u}\delta_{\Lambda\tau_{1}}\mathcal{Z}_{
\mathrm{c^{\prime}c}\mu\mu^{\prime}}(uU\tau_{1})\widetilde{D}_{\mathrm{c^{\prime}c}\mu\mu^{\prime}}(uU\tau_{1})\\
+{\scriptstyle \frac{1}{2}}\delta_{\Lambda_{1}U}\delta_{\Lambda_{2}u}\delta_{\Lambda\tau_{2}}
\mathcal{Z}_{\mathrm{c^{\prime}c}\mu\mu^{\prime}}(uU\tau_{2})[\widetilde{\ddot{D^{\prime}}}_{\mathrm{c^{\prime}c}
\mu\mu^{\prime}}(uU\tau_{2})\\
+\widetilde{\dot{D^{\prime}}}_{\mathrm{c^{\prime}c}\mu\mu^{\prime}}(uU\tau_{2})]
+{\scriptstyle \frac{1}{4}}\delta_{\Lambda_{1}\Lambda_{2}}\delta_{\Lambda_{1}u}\delta_{\Lambda0}
\mathcal{Z}_{\mathrm{cc^{\prime}}\mu\mu^{\prime}}(u)D_{\mathrm{cc^{\prime}}\mu\mu^{\prime}}
(u\widetilde{u}).\end{array}\label{eq:5.2.5b}\end{equation}

\end{subequations}\begin{subequations}\label{5.2.6}

\begin{equation}
\begin{array}{l}
\Omega_{\mathrm{evcc^{\prime}}}^{(2)+}(\Lambda_{1}\Lambda_{2}\Lambda)(\varepsilon_{\mathrm{cc^{\prime}}}
-\varepsilon_{\mathrm{ev}})\\
\\=-\delta_{\Lambda_{1}u}\delta_{\Lambda_{2}d}\delta_{\Lambda\tau}\mathcal{Z}_{\mathrm{vecc^{\prime}}}(ud\tau)
D_{\mathrm{vecc^{\prime}}}(ud\tau)+\delta_{\Lambda_{1}U}\delta_{\Lambda_{2}u}\delta_{\Lambda\tau_{1}}\\
\times[\ddot{D}_{\mathrm{evcc^{\prime}}}(Uu\tau_{1})
+\mathcal{Z}_{\mathrm{vec^{\prime}c}}(Uu\tau_{1})\dot{D}_{\mathrm{vec^{\prime}c}}(Uu\tau_{1})\\
+{\scriptstyle \frac{1}{2}}\widetilde{\dot{D}}_{\mathrm{evcc^{\prime}}}(Uu\tau_{1})
-{\scriptstyle \frac{1}{2}}\mathcal{Z}_{\mathrm{vecc^{\prime}}}(Uu\tau_{1})
\widetilde{\ddot{D}}_{\mathrm{vecc^{\prime}}}(Uu\tau_{1})]\\
+{\scriptstyle \frac{1}{2}}\delta_{\Lambda_{1}U}\delta_{\Lambda_{2}u}\delta_{\Lambda\tau_{2}}
[-\widetilde{D^{\prime}}_{\mathrm{evcc^{\prime}}}(Uu\tau_{2})
+\mathcal{Z}_{\mathrm{vecc^{\prime}}}(Uu\tau_{2})\\
\times\widetilde{D^{\prime}}_{\mathrm{vecc^{\prime}}}(Uu\tau_{2})]
-\delta_{\Lambda_{1}\Lambda_{2}}\delta_{\Lambda_{1}u}\delta_{\Lambda0}
[\widetilde{\ddot{\Delta}}_{\mathrm{evcc^{\prime}}}(uu)
+\widetilde{\dot{\Delta}}_{\mathrm{evcc^{\prime}}}(uu)\\
-\mathcal{Z}_{\mathrm{vecc^{\prime}}}(u)
\{\widetilde{\ddot{\Delta}}_{\mathrm{vecc^{\prime}}}(uu)
+\widetilde{\dot{\Delta}}_{\mathrm{vecc^{\prime}}}(uu)\}
+\mathcal{Z}_{\mathrm{evc^{\prime}c}}(u)\dddot{D}_{\mathrm{evc^{\prime}c}}(u\widetilde{u})],
\end{array}\label{eq:5.2.6a}\end{equation}

\begin{equation}
\begin{array}{l}
\Omega_{\mathrm{evcc^{\prime}}}^{(2)-}(\Lambda_{1}\Lambda_{2}\Lambda)(\varepsilon_{\mathrm{cc^{\prime}}}
-\varepsilon_{\mathrm{ev}})\\
\\=-\delta_{\Lambda_{1}U}\delta_{\Lambda_{2}u}\delta_{\Lambda\tau_{1}}\mathcal{Z}_{\mathrm{c^{\prime}cev}}(uU\tau_{1})
\widetilde{D}_{\mathrm{c^{\prime}cev}}(uU\tau_{1})\\
+\delta_{\Lambda_{1}U}\delta_{\Lambda_{2}u}\delta_{\Lambda\tau_{2}}
\mathcal{Z}_{\mathrm{c^{\prime}cev}}(uU\tau_{2})
[\widetilde{\ddot{D^{\prime}}}_{\mathrm{c^{\prime}cev}}(uU\tau_{2})\\
+\widetilde{\dot{D^{\prime}}}_{\mathrm{c^{\prime}cev}}(uU\tau_{2})].\end{array}\label{eq:5.2.6b}\end{equation}

\end{subequations}\begin{subequations}\label{c2}

\begin{equation}
\begin{array}{l}
\Omega_{\mu\mu^{\prime}\mathrm{c\bar{v}}}^{(2)+}(\Lambda_{1}\Lambda_{2}\Lambda)(\varepsilon_{\mathrm{c\bar{v}}}
-\varepsilon_{\mu\mu^{\prime}})\\
\\=\delta_{\Lambda_{1}u}\delta_{\Lambda_{2}d}\delta_{\Lambda\tau}\delta_{\mu\mathrm{e}}
[D_{\mathrm{ee^{\prime}c\bar{v}}}(ud\tau)-\mathcal{Z}_{\mathrm{ee^{\prime}\bar{v}c}}(ud\tau)
D_{\mathrm{ee^{\prime}\bar{v}c}}(ud\tau)]\\
+\delta_{\Lambda_{1}U}\delta_{\Lambda_{2}u}\delta_{\Lambda\tau_{1}}
[\widetilde{\ddot{D}}_{\mu\mu^{\prime}\mathrm{c\bar{v}}}(Uu\tau_{1})
-\mathcal{Z}_{\mu\mu^{\prime}\mathrm{\bar{v}c}}(Uu\tau_{1})\\
\times\widetilde{\dot{D}}_{\mu\mu^{\prime}\mathrm{\bar{v}c}}(Uu\tau_{1})]
+\delta_{\Lambda_{1}U}\delta_{\Lambda_{2}u}\delta_{\Lambda\tau_{2}}
\mathcal{Z}_{\mu^{\prime}\mu\mathrm{c\bar{v}}}(Uu\tau_{2})
\widetilde{D^{\prime}}_{\mu^{\prime}\mu\mathrm{c\bar{v}}}(Uu\tau_{2})\\
+\delta_{\Lambda_{1}\Lambda_{2}}\delta_{\Lambda_{1}u}\delta_{\Lambda0}
[\mathcal{Z}_{\mu\mu^{\prime}\mathrm{\bar{v}c}}(u)\{\widetilde{\ddot{\Delta}}_{\mu\mu^{\prime}\mathrm{\bar{v}c}}(uu)
+\widetilde{\dot{\Delta}}_{\mu\mu^{\prime}\mathrm{\bar{v}c}}(uu)\}\\
-\widetilde{\ddot{\Delta}}_{\mu\mu^{\prime}\mathrm{c\bar{v}}}(uu)
-\widetilde{\dot{\Delta}}_{\mu\mu^{\prime}\mathrm{c\bar{v}}}(uu)
-{\scriptstyle \frac{1}{2}}\widetilde{\dddot{D}}_{\mu\mu^{\prime}\mathrm{c\bar{v}}}(uu)],\end{array}\label{eq:c2a}\end{equation}

\begin{equation}
\begin{array}{l}
\Omega_{\mu\mu^{\prime}\mathrm{c\bar{v}}}^{(2)-}(\Lambda_{1}\Lambda_{2}\Lambda)(\varepsilon_{\mathrm{c\bar{v}}}
-\varepsilon_{\mu\mu^{\prime}})\\
\\=-\delta_{\Lambda_{1}U}\delta_{\Lambda_{2}u}\delta_{\Lambda\tau_{1}}
[\mathcal{Z}_{\mathrm{\bar{v}c}\mu\mu^{\prime}}(uU\tau_{1})\{\dot{D}_{\mathrm{\bar{v}c}\mu\mu^{\prime}}(uU\tau_{1})\\
+{\scriptstyle \frac{1}{2}}\widetilde{D}_{\mathrm{\bar{v}c}\mu\mu^{\prime}}(uU\tau_{1})\}
+\mathcal{Z}_{\mathrm{c\bar{v}}\mu^{\prime}\mu}(uU\tau_{1})D_{\mathrm{c\bar{v}}\mu^{\prime}\mu}(uU\tau_{1})]\\
-\delta_{\Lambda_{1}U}\delta_{\Lambda_{2}u}\delta_{\Lambda\tau_{2}}
[\mathcal{Z}_{\mathrm{c\bar{v}}\mu\mu^{\prime}}(uU\tau_{2})
\{\widetilde{\ddot{D}^{\prime}}_{\mathrm{c\bar{v}}\mu\mu^{\prime}}(uU\tau_{2})\\
+\widetilde{\dot{D}^{\prime}}_{\mathrm{c\bar{v}}\mu\mu^{\prime}}(uU\tau_{2})\}
-\mathcal{Z}_{\mathrm{\bar{v}c}\mu\mu^{\prime}}(uU\tau_{2})
\widetilde{\ddot{D}^{\prime}}_{\mathrm{\bar{v}c}\mu\mu^{\prime}}(uU\tau_{2})].\end{array}\label{eq:c2b}\end{equation}

\end{subequations}\begin{subequations}\label{c3}

\begin{equation}
\begin{array}{l}
\Omega_{\mathrm{ev\bar{v}c}}^{(2)+}(\Lambda_{1}\Lambda_{2}\Lambda)(\varepsilon_{\mathrm{c\bar{v}}}-\varepsilon_{\mathrm{ev}})\\
\\=\delta_{\Lambda_{1}u}\delta_{\Lambda_{2}d}\delta_{\Lambda\tau}[D_{\mathrm{ev\bar{v}c}}(ud\tau)
+\mathcal{Z}_{\mathrm{vec\bar{v}}}(ud\tau)D_{\mathrm{vec\bar{v}}}(ud\tau)]\\
+\delta_{\Lambda_{1}U}\delta_{\Lambda_{2}u}\delta_{\Lambda\tau_{1}}[\widetilde{\ddot{D}}_{\mathrm{ev\bar{v}c}}(Uu\tau_{1})
+\widetilde{\dot{D}}_{\mathrm{ev\bar{v}c}}(Uu\tau_{1})-\mathcal{Z}_{\mathrm{ve\bar{v}c}}(Uu\tau_{1})\\
\times\{\widetilde{\ddot{D}}_{\mathrm{ve\bar{v}c}}(Uu\tau_{1})+\widetilde{\dot{D}}_{\mathrm{ve\bar{v}c}}(Uu\tau_{1})\}]
+\delta_{\Lambda_{1}U}\delta_{\Lambda_{2}u}\delta_{\Lambda\tau_{2}}\\
\times[-\widetilde{D^{\prime}}_{\mathrm{ev\bar{v}c}}(Uu\tau_{2})
+\mathcal{Z}_{\mathrm{ve\bar{v}c}}(Uu\tau_{2})\widetilde{D^{\prime}}_{\mathrm{ve\bar{v}c}}(Uu\tau_{2})]\\
-\delta_{\Lambda_{1}\Lambda_{2}}\delta_{\Lambda_{1}u}\delta_{\Lambda0}
[\widetilde{\ddot{\Delta}}_{\mathrm{ev\bar{v}c}}(uu)+\widetilde{\dot{\Delta}}_{\mathrm{ev\bar{v}c}}(uu)
+\widetilde{\dddot{D}}_{\mathrm{ev\bar{v}c}}(uu)\\
-\mathcal{Z}_{\mathrm{ve\bar{v}c}}(u)\{\widetilde{\ddot{\Delta}}_{\mathrm{ve\bar{v}c}}(uu)
+\widetilde{\dot{\Delta}}_{\mathrm{ve\bar{v}c}}(uu)\}
-\mathcal{Z}_{\mathrm{evc\bar{v}}}(u)\{\widetilde{\ddot{\Delta}}_{\mathrm{evc\bar{v}}}(uu)\\
+\widetilde{\dot{\Delta}}_{\mathrm{evc\bar{v}}}(uu)\}
+\mathcal{Z}_{\mathrm{vec\bar{v}}}(u)\{\widetilde{\ddot{\Delta}}_{\mathrm{vec\bar{v}}}(uu)
+\widetilde{\dot{\Delta}}_{\mathrm{vec\bar{v}}}(uu)\}],\end{array}\label{eq:c3a}\end{equation}

\begin{equation}
\begin{array}{l}
\Omega_{\mathrm{ev\bar{v}c}}^{(2)-}(\Lambda_{1}\Lambda_{2}\Lambda)(\varepsilon_{\mathrm{c\bar{v}}}-\varepsilon_{\mathrm{ev}})\\
\\=\delta_{\Lambda_{1}U}\delta_{\Lambda_{2}u}\delta_{\Lambda\tau_{1}}
[\mathcal{Z}_{\mathrm{\bar{v}cev}}(uU\tau_{1})\{\widetilde{D}_{\mathrm{\bar{v}cev}}(uU\tau_{1})
-\widetilde{\dot{D}}_{\mathrm{\bar{v}cev}}(uU\tau_{1})\}\\
-\mathcal{Z}_{\mathrm{c\bar{v}ev}}(uU\tau_{1})\widetilde{D}_{\mathrm{c\bar{v}ev}}(uU\tau_{1})]
+\delta_{\Lambda_{1}U}\delta_{\Lambda_{2}u}\delta_{\Lambda\tau_{2}}[-\mathcal{Z}_{\mathrm{\bar{v}cev}}(uU\tau_{2})\\
\times\widetilde{\ddot{D}^{\prime}}_{\mathrm{\bar{v}cev}}(uU\tau_{2})
+\mathcal{Z}_{\mathrm{c\bar{v}ev}}(uU\tau_{2})\{\widetilde{\ddot{D}^{\prime}}_{\mathrm{c\bar{v}ev}}(uU\tau_{2})\\
-\widetilde{\dot{D}^{\prime}}_{\mathrm{c\bar{v}ev}}(uU\tau_{2})\}].\end{array}\label{eq:c3b}\end{equation}

\end{subequations}\begin{subequations}\label{c7}

\begin{equation}
\begin{array}{l}
\Omega_{\mathrm{ev\bar{v}\bar{v}^{\prime}}}^{(2)+}(\Lambda_{1}\Lambda_{2}\Lambda)(\varepsilon_{\mathrm{\bar{v}\bar{v}^{\prime}}}
-\varepsilon_{\mathrm{ev}})\\
\\=\delta_{\Lambda_{1}U}\delta_{\Lambda_{2}u}\delta_{\Lambda\tau_{1}}
[\ddot{D}_{\mathrm{ev\bar{v}\bar{v}^{\prime}}}(Uu\tau_{1})+\mathcal{Z}_{\mathrm{ve\bar{v}\bar{v}^{\prime}}}(Uu\tau_{1})\\
\times\{\dot{D}_{\mathrm{ve\bar{v}\bar{v}^{\prime}}}(Uu\tau_{1})
-{\scriptstyle \frac{1}{2}}\widetilde{\ddot{D}}_{\mathrm{ve\bar{v}\bar{v}^{\prime}}}(Uu\tau_{1})\}]
+{\scriptstyle \frac{1}{2}}\delta_{\Lambda_{1}U}\delta_{\Lambda_{2}u}\delta_{\Lambda\tau_{2}}\\
\times[\mathcal{Z}_{\mathrm{ve\bar{v}\bar{v}^{\prime}}}(Uu\tau_{2})
\widetilde{D^{\prime}}_{\mathrm{ve\bar{v}\bar{v}^{\prime}}}(Uu\tau_{2})
-\widetilde{D^{\prime}}_{\mathrm{ev\bar{v}\bar{v}^{\prime}}}(Uu\tau_{2})\\
-\widetilde{\dot{D}^{\prime}}_{\mathrm{ev\bar{v}\bar{v}^{\prime}}}(Uu\tau_{2})]
+\delta_{\Lambda_{1}\Lambda_{2}}\delta_{\Lambda_{1}u}\delta_{\Lambda0}
[-\widetilde{\ddot{\Delta}}_{\mathrm{ev\bar{v}\bar{v}^{\prime}}}(uu)\\
-\widetilde{\dot{\Delta}}_{\mathrm{ev\bar{v}\bar{v}^{\prime}}}(uu)
+\mathcal{Z}_{\mathrm{ve\bar{v}\bar{v}^{\prime}}}(u)
\{\widetilde{\ddot{\Delta}}_{\mathrm{ve\bar{v}\bar{v}^{\prime}}}(uu)
+\widetilde{\dot{\Delta}}_{\mathrm{ve\bar{v}\bar{v}^{\prime}}}(uu)\}],\end{array}\label{eq:c7a}\end{equation}

\begin{equation}
\begin{array}{l}
\Omega_{\mathrm{ev\bar{v}\bar{v}^{\prime}}}^{(2)-}(\Lambda_{1}\Lambda_{2}\Lambda)(\varepsilon_{\mathrm{\bar{v}\bar{v}^{\prime}}}
-\varepsilon_{\mathrm{ev}})\\
\\=-\delta_{\Lambda_{1}U}\delta_{\Lambda_{2}u}\delta_{\Lambda\tau_{1}}
\mathcal{Z}_{\mathrm{\bar{v}^{\prime}\bar{v}ev}}(uU\tau_{1})[\widetilde{D}_{\mathrm{\bar{v}^{\prime}\bar{v}ev}}(uU\tau_{1})\\
+\widetilde{\dot{D}}_{\mathrm{\bar{v}^{\prime}\bar{v}ev}}(uU\tau_{1})]
+\delta_{\Lambda_{1}U}\delta_{\Lambda_{2}u}\delta_{\Lambda\tau_{2}}\mathcal{Z}_{\mathrm{\bar{v}^{\prime}\bar{v}ev}}(uU\tau_{2})\\
\times\widetilde{\ddot{D}}_{\mathrm{\bar{v}^{\prime}\bar{v}ev}}(uU\tau_{2})
-{\scriptstyle \frac{1}{2}}\delta_{\Lambda_{1}\Lambda_{2}}\delta_{\Lambda_{1}u}\delta_{\Lambda0}
\mathcal{Z}_{\mathrm{\bar{v}\bar{v}^{\prime}ev}}(u)\\
\times\dot{D}_{\mathrm{\bar{v}\bar{v}^{\prime}ev}}(u\widetilde{u}).\end{array}\label{eq:c7b}\end{equation}

\end{subequations}\begin{subequations}\label{c8}

\begin{equation}
\begin{array}{l}
\Omega_{\mathrm{ee^{\prime}\bar{v}\bar{v}^{\prime}}}^{(2)+}(\Lambda_{1}\Lambda_{2}\Lambda)
(\varepsilon_{\mathrm{\bar{v}\bar{v}^{\prime}}}-\varepsilon_{\mathrm{ee^{\prime}}})\\
\\=\delta_{\Lambda_{1}u}\delta_{\Lambda_{2}d}\delta_{\Lambda\tau}
D_{\mathrm{ee^{\prime}\bar{v}\bar{v}^{\prime}}}(ud\tau)
+\delta_{\Lambda_{1}U}\delta_{\Lambda_{2}u}\delta_{\Lambda\tau_{1}}[{\scriptstyle \frac{1}{2}}
\widetilde{\ddot{D}}_{\mathrm{ee^{\prime}\bar{v}\bar{v}^{\prime}}}(Uu\tau_{1})\\
+\dot{D}_{\mathrm{ee^{\prime}\bar{v}\bar{v}^{\prime}}}(Uu\tau_{1})]
+{\scriptstyle \frac{1}{2}}\delta_{\Lambda_{1}U}\delta_{\Lambda_{2}u}\delta_{\Lambda\tau_{2}}
\mathcal{Z}_{\mathrm{e^{\prime}e\bar{v}\bar{v}^{\prime}}}(Uu\tau_{2})\\
\times\widetilde{D^{\prime}}_{\mathrm{e^{\prime}e\bar{v}\bar{v}^{\prime}}}(Uu\tau_{2})
-\delta_{\Lambda_{1}\Lambda_{2}}\delta_{\Lambda_{1}u}\delta_{\Lambda0}
[{\scriptstyle \frac{1}{2}}\{{\scriptstyle \frac{1}{2}}\ddot{D}_{\mathrm{ee^{\prime}\bar{v}\bar{v}^{\prime}}}(\widetilde{u}u)\\
-\mathcal{Z}_{\mathrm{ee^{\prime}\bar{v}^{\prime}\bar{v}}}(u)
\dddot{D}_{\mathrm{ee^{\prime}\bar{v}^{\prime}\bar{v}}}(\widetilde{u}u)\}
+\widetilde{\ddot{\Delta}}_{\mathrm{ee^{\prime}\bar{v}\bar{v}^{\prime}}}(uu)
+\widetilde{\dot{\Delta}}_{\mathrm{ee^{\prime}\bar{v}\bar{v}^{\prime}}}(uu)],\end{array}\label{eq:c8a}\end{equation}

\begin{equation}
\begin{array}{l}
\Omega_{\mathrm{ee^{\prime}\bar{v}\bar{v}^{\prime}}}^{(2)-}(\Lambda_{1}\Lambda_{2}\Lambda)
(\varepsilon_{\mathrm{\bar{v}\bar{v}^{\prime}}}-\varepsilon_{\mathrm{ee^{\prime}}})\\
\\=-\delta_{\Lambda_{1}U}\delta_{\Lambda_{2}u}\delta_{\Lambda\tau_{1}}
\mathcal{Z}_{\mathrm{\bar{v}^{\prime}\bar{v}ee^{\prime}}}(uU\tau_{1})
[D_{\mathrm{\bar{v}^{\prime}\bar{v}ee^{\prime}}}(uU\tau_{1})\\
+\widetilde{\dot{D}}_{\mathrm{\bar{v}^{\prime}\bar{v}ee^{\prime}}}(uU\tau_{1})]
+{\scriptstyle \frac{1}{2}}\delta_{\Lambda_{1}U}\delta_{\Lambda_{2}u}\delta_{\Lambda\tau_{2}}
\mathcal{Z}_{\mathrm{\bar{v}^{\prime}\bar{v}ee^{\prime}}}(uU\tau_{2})\\
\times\widetilde{\ddot{D}^{\prime}}_{\mathrm{\bar{v}^{\prime}\bar{v}ee^{\prime}}}(uU\tau_{2})
+{\scriptstyle \frac{1}{4}}\delta_{\Lambda_{1}\Lambda_{2}}\delta_{\Lambda_{1}u}\delta_{\Lambda0}
\mathcal{Z}_{\mathrm{\bar{v}\bar{v}^{\prime}ee^{\prime}}}(u)\\
\times[D_{\mathrm{\bar{v}\bar{v}^{\prime}ee^{\prime}}}(u\widetilde{u})
-\dot{D}_{\mathrm{\bar{v}\bar{v}^{\prime}ee^{\prime}}}(u\widetilde{u})].\end{array}\label{eq:c8b}\end{equation}

\end{subequations}

\noindent{}In Eq. (\ref{eq:5.2.5a}), the quantities $\mathfrak{X}_{\mu\mu^{\prime}\mathrm{cc^{\prime}}}$,
$\mathfrak{Y}_{\mu\mu^{\prime}\mathrm{cc^{\prime}}}$, $\mathfrak{Z}_{\mu\mu^{\prime}\mathrm{cc^{\prime}}}$
differ for distinct one-electron orbitals $\mu=\mathrm{v},\mathrm{e}$.
If $\mu=\mathrm{v}$, then $\mathfrak{X}_{\mathrm{vv^{\prime}cc^{\prime}}}\equiv\ddot{D}_{\mathrm{vv^{\prime}cc^{\prime}}}$,
$\mathfrak{Y}_{\mathrm{vv^{\prime}cc^{\prime}}}\equiv\dot{D}_{\mathrm{vv^{\prime}cc^{\prime}}}$
and $\mathfrak{Z}_{\mathrm{vv^{\prime}cc^{\prime}}}\equiv-\mathcal{Z}_{\mathrm{vv^{\prime}c^{\prime}c}}(u)
D_{\mathrm{vv^{\prime}c^{\prime}c}}$;
if $\mu=\mathrm{e}$, then $\mathfrak{X}_{\mathrm{ee^{\prime}cc^{\prime}}}\equiv\dot{D}_{\mathrm{ee^{\prime}cc^{\prime}}}$,
$\mathfrak{Y}_{\mathrm{ee^{\prime}cc^{\prime}}}\equiv\ddot{D}_{\mathrm{ee^{\prime}cc^{\prime}}}$,
$\mathfrak{Z}_{\mathrm{ee^{\prime}cc^{\prime}}}\equiv D_{\mathrm{ee^{\prime}cc^{\prime}}}$.
\begin{verse}
\emph{Three-body part.}
\end{verse}
\begin{subequations}\label{c4}

\begin{equation}
\begin{array}{l}
\Omega_{\mathrm{vv^{\prime}v^{\prime\prime}\bar{v}c^{\prime}c}}^{(2)+}
(\Lambda_{1}\Lambda_{2}\Lambda_{3}\Lambda)(\varepsilon_{\mathrm{\bar{v}c^{\prime}c}}
-\varepsilon_{\mathrm{vv^{\prime}v^{\prime\prime}}})\\
\\={\scriptstyle \frac{1}{2}}\delta_{\Lambda\Lambda_{3}}
\delta_{MM_{3}}[\ddot{T}_{\mathrm{vv^{\prime}v^{\prime\prime}\bar{v}c^{\prime}c}}
(\widetilde{\Lambda}_{1}\Lambda_{2}\Lambda)+{\scriptstyle \frac{1}{2}}
\widetilde{\dot{T}}_{\mathrm{vv^{\prime}v^{\prime\prime}\bar{v}c^{\prime}c}}(\Lambda_{1}\Lambda_{2}\Lambda)\\
-\sum_{{\scriptscriptstyle \overline{\Lambda}_{2}\overline{\Lambda}}}
J_{\mathrm{vc\bar{v}c^{\prime}}}(\overline{\Lambda}\:\overline{\Lambda}_{2}\Lambda\Lambda_{2}\Lambda_{1})
\{\widetilde{\ddot{T}}_{\mathrm{vv^{\prime}v^{\prime\prime}c^{\prime}\bar{v}c}}
(\Lambda_{1}\overline{\Lambda}_{2}\overline{\Lambda})\\
+\widetilde{\dot{T}}_{\mathrm{vv^{\prime}v^{\prime\prime}c^{\prime}\bar{v}c}}
(\Lambda_{1}\overline{\Lambda}_{2}\overline{\Lambda})\}]+(-1)^{\lambda_{\mathrm{v}}
+\lambda_{\mathrm{c^{\prime}}}}Y_{\mathrm{vc\bar{v}c^{\prime}}}(\Lambda_{1}\Lambda_{2}\Lambda\Lambda_{3}\tau_{1})\\
\times T_{\mathrm{vv^{\prime}v^{\prime\prime}c^{\prime}c\bar{v}}}(\Lambda_{1}\tau_{1})
+{\scriptstyle \frac{1}{2}}\sum_{u}a(\lambda_{\mathrm{v^{\prime}}}\lambda_{\mathrm{\bar{v}}}u)\\
\times I_{\mathrm{vv^{\prime}v^{\prime\prime}\bar{v}c^{\prime}c}}
(\Lambda_{1}\Lambda_{2}\Lambda_{3}\tau_{1}\Lambda u)
\widetilde{T^{\prime}}_{\mathrm{v^{\prime\prime}vv^{\prime}c\bar{v}c^{\prime}}}(u\tau_{2}),
\end{array}\label{eq:c4a}\end{equation}

\begin{equation}
\begin{array}{l}
\Omega_{\mathrm{vv^{\prime}v^{\prime\prime}\bar{v}c^{\prime}c}}^{(2)-}
(\Lambda_{1}\Lambda_{2}\Lambda_{3}\Lambda)(\varepsilon_{\mathrm{\bar{v}c^{\prime}c}}
-\varepsilon_{\mathrm{vv^{\prime}v^{\prime\prime}}})\\
\\={\scriptstyle \frac{1}{2}}\delta_{\Lambda\Lambda_{3}}\delta_{MM_{3}}
a(\lambda_{\mathrm{v^{\prime}}}\lambda_{\mathrm{v^{\prime\prime}}}\Lambda_{1})
[a(\lambda_{\mathrm{c}}\lambda_{\mathrm{c^{\prime}}}\Lambda_{2})
\widetilde{T}_{\mathrm{\bar{v}c^{\prime}cvv^{\prime}v^{\prime\prime}}}(\Lambda_{2}\Lambda_{1}\Lambda)\\
+\sum_{{\scriptscriptstyle \overline{\Lambda}_{2}\overline{\Lambda}}}
a(\lambda_{\mathrm{c}}\lambda_{\mathrm{c^{\prime}}}\overline{\Lambda}_{2})
J_{\mathrm{vc^{\prime}\bar{v}c}}(\overline{\Lambda}\:\overline{\Lambda}_{2}\Lambda\Lambda_{2}\Lambda_{1})
T_{\mathrm{cc^{\prime}\bar{v}vv^{\prime}v^{\prime\prime}}}
(\widetilde{\overline{\Lambda}}_{2}\Lambda_{1}\overline{\Lambda})].\end{array}\label{eq:c4b}\end{equation}

\end{subequations}\begin{subequations}\label{c5}

\begin{equation}
\begin{array}{l}
\Omega_{\mathrm{vv^{\prime}e\bar{v}c^{\prime}c}}^{(2)+}(\Lambda_{1}\Lambda_{2}\Lambda_{3}\Lambda)
(\varepsilon_{\mathrm{\bar{v}c^{\prime}c}}-\varepsilon_{\mathrm{vv^{\prime}e}})\\
\\=\delta_{\Lambda\Lambda_{3}}\delta_{MM_{3}}[{\scriptstyle \frac{1}{2}}
\{-a(\lambda_{\mathrm{c}}\lambda_{\mathrm{c^{\prime}}}\Lambda_{2})
\dot{T}_{\mathrm{vv^{\prime}e\bar{v}cc^{\prime}}}(\widetilde{\Lambda}_{1}\Lambda_{2}\Lambda)\\
+{\scriptstyle \frac{1}{2}}\widetilde{\ddot{T}}_{\mathrm{vv^{\prime}e\bar{v}c^{\prime}c}}(\Lambda_{1}\Lambda_{2}\Lambda)\}
-a(\lambda_{\mathrm{e}}\lambda_{\mathrm{v^{\prime}}}\Lambda_{1})\{\ddot{T}_{\mathrm{vev^{\prime}\bar{v}c^{\prime}c}}
(\widetilde{\Lambda}_{1}\Lambda_{2}\Lambda)\\
+{\scriptstyle \frac{1}{2}}\widetilde{\dot{T}}_{\mathrm{vev^{\prime}\bar{v}c^{\prime}c}}
(\Lambda_{1}\Lambda_{2}\Lambda)\}-\sum_{{\scriptscriptstyle \overline{\Lambda}_{2}\overline{\Lambda}}}
J_{\mathrm{vc\bar{v}c^{\prime}}}(\overline{\Lambda}\:\overline{\Lambda}_{2}\Lambda\Lambda_{2}\Lambda_{1})\\
\times\{{\scriptstyle \frac{1}{2}}(\widetilde{\dot{T}}_{\mathrm{vv^{\prime}ec^{\prime}\bar{v}c}}
(\Lambda_{1}\overline{\Lambda}_{2}\overline{\Lambda})
+\widetilde{\ddot{T}}_{\mathrm{vv^{\prime}ec^{\prime}\bar{v}c}}(\Lambda_{1}\overline{\Lambda}_{2}\overline{\Lambda}))\\
+a(\lambda_{\mathrm{e}}\lambda_{\mathrm{v^{\prime}}}\Lambda_{1})
a(\lambda_{\mathrm{c}}\lambda_{\mathrm{\bar{v}}}\overline{\Lambda}_{2})
(\widetilde{\dot{T}}_{\mathrm{vev^{\prime}c^{\prime}c\bar{v}}}(\Lambda_{1}\overline{\Lambda}_{2}\overline{\Lambda})\\
+\widetilde{\ddot{T}}_{\mathrm{vev^{\prime}c^{\prime}c\bar{v}}}(\Lambda_{1}\overline{\Lambda}_{2}\overline{\Lambda}))\}]
+(-1)^{\lambda_{\mathrm{v}}+\lambda_{\mathrm{c^{\prime}}}}
Y_{\mathrm{vc\bar{v}c^{\prime}}}(\Lambda_{1}\Lambda_{2}\Lambda\Lambda_{3}\tau_{1})\\
\times\widetilde{T}_{\mathrm{vv^{\prime}ec^{\prime}c\bar{v}}}(\Lambda_{1}\tau_{1})
+{\scriptstyle \frac{1}{2}}a(\lambda_{\mathrm{e}}\lambda_{\mathrm{\bar{v}}}\Lambda_{2})
Y_{\mathrm{vv^{\prime}e\bar{v}}}^{\prime}(\tau_{1}\Lambda_{1}\Lambda\Lambda_{3}\Lambda_{2})\\
\times T_{\mathrm{evv^{\prime}\bar{v}c^{\prime}c}}(\Lambda_{2}\tau_{1})+\sum_{u}
[a(\lambda_{\mathrm{c}}\lambda_{\mathrm{v^{\prime}}}\Lambda_{2})\\
\times I_{\mathrm{vv^{\prime}e\bar{v}c^{\prime}c}}(\Lambda_{1}\Lambda_{2}\Lambda_{3}\tau_{1}\Lambda u)
T_{\mathrm{evv^{\prime}c^{\prime}c\bar{v}}}(u\tau_{1})+{\scriptstyle \frac{1}{2}}
a(\lambda_{\mathrm{v^{\prime}}}\lambda_{\mathrm{\bar{v}}}u)\\
\times I_{\mathrm{vv^{\prime}e\bar{v}c^{\prime}c}}(\Lambda_{1}\Lambda_{2}\Lambda_{3}\tau_{2}\Lambda u)
\widetilde{T^{\prime}}_{\mathrm{evv^{\prime}c\bar{v}c^{\prime}}}
(u\tau_{2})+(-1)^{\lambda_{\mathrm{v}}-\lambda_{\mathrm{\bar{v}}}}\\
\times a(\lambda_{\mathrm{e}}\lambda_{\mathrm{v^{\prime}}}\Lambda_{1})
I_{\mathrm{vev^{\prime}\bar{v}c^{\prime}c}}(\Lambda_{1}\Lambda_{2}\Lambda_{3}\tau_{2}\Lambda u)
\widetilde{T^{\prime}}_{\mathrm{v^{\prime}evc\bar{v}c^{\prime}}}(u\tau_{2})],\end{array}\label{eq:c5a}\end{equation}

\begin{equation}
\begin{array}{l}
\Omega_{\mathrm{vv^{\prime}e\bar{v}c^{\prime}c}}^{(2)-}(\Lambda_{1}\Lambda_{2}\Lambda_{3}\Lambda)
(\varepsilon_{\mathrm{\bar{v}c^{\prime}c}}-\varepsilon_{\mathrm{vv^{\prime}e}})\\
\\=\delta_{\Lambda\Lambda_{3}}\delta_{MM_{3}}a(\lambda_{\mathrm{e}}\lambda_{\mathrm{v^{\prime}}}\Lambda_{1})
[a(\lambda_{\mathrm{c}}\lambda_{\mathrm{c^{\prime}}}\Lambda_{2})
\widetilde{T}_{\mathrm{\bar{v}c^{\prime}cvv^{\prime}e}}(\Lambda_{2}\Lambda_{1}\Lambda)\\
+{\scriptstyle \frac{1}{2}}\sum_{{\scriptscriptstyle \overline{\Lambda}_{2}\overline{\Lambda}}}
\{J_{\mathrm{vc\bar{v}c^{\prime}}}(\overline{\Lambda}\:\overline{\Lambda}_{2}\Lambda\Lambda_{2}\Lambda_{1})
\widetilde{T}_{\mathrm{c^{\prime}c\bar{v}vv^{\prime}e}}(\overline{\Lambda}_{2}\Lambda_{1}\overline{\Lambda})\\
+(-1)^{\lambda_{\mathrm{v}}-\lambda_{\mathrm{e}}}\sum_{{\scriptscriptstyle \overline{\Lambda}_{1}}}
I_{\mathrm{vv^{\prime}e\bar{v}c^{\prime}c}}^{\prime}
(\overline{\Lambda}_{1}\overline{\Lambda}_{2}\Lambda_{1}\Lambda_{2}\overline{\Lambda}\Lambda)
(T_{\mathrm{c^{\prime}c\bar{v}evv^{\prime}}}(\widetilde{\overline{\Lambda}}_{2}\overline{\Lambda}_{1}\overline{\Lambda})\\
-a(\lambda_{\mathrm{c}}\lambda_{\mathrm{\bar{v}}}\overline{\Lambda}_{2})
\widetilde{T}_{\mathrm{c^{\prime}\bar{v}cevv^{\prime}}}
(\overline{\Lambda}_{2}\overline{\Lambda}_{1}\overline{\Lambda}))\}].\end{array}\label{eq:c5b}\end{equation}

\end{subequations}\begin{subequations}\label{c6}

\begin{equation}
\begin{array}{l}
\Omega_{\mathrm{ee^{\prime}v\bar{v}c^{\prime}c}}^{(2)+}(\Lambda_{1}\Lambda_{2}\Lambda_{3}\Lambda)
(\varepsilon_{\mathrm{\bar{v}c^{\prime}c}}-\varepsilon_{\mathrm{ee^{\prime}v}})\\
\\=\delta_{\Lambda\Lambda_{3}}\delta_{MM_{3}}[{\scriptstyle \frac{1}{2}}
\{\ddot{T}_{\mathrm{ee^{\prime}v\bar{v}c^{\prime}c}}(\widetilde{\Lambda}_{1}\Lambda_{2}\Lambda)
+{\scriptstyle \frac{1}{2}}\widetilde{\dot{T}}_{\mathrm{ee^{\prime}v\bar{v}c^{\prime}c}}(\Lambda_{1}\Lambda_{2}\Lambda)\}\\
+a(\lambda_{\mathrm{v}}\lambda_{\mathrm{e^{\prime}}}\Lambda_{1})
\{a(\lambda_{\mathrm{c}}\lambda_{\mathrm{c^{\prime}}}\Lambda_{2})
\dot{T}_{\mathrm{eve^{\prime}\bar{v}cc^{\prime}}}(\widetilde{\Lambda}_{1}\Lambda_{2}\Lambda)\\
-{\scriptstyle \frac{1}{2}}\widetilde{\ddot{T}}_{\mathrm{eve^{\prime}\bar{v}c^{\prime}c}}
(\Lambda_{1}\Lambda_{2}\Lambda)\}+\sum_{{\scriptscriptstyle \overline{\Lambda}_{2}\overline{\Lambda}}}
J_{\mathrm{ec\bar{v}c^{\prime}}}(\overline{\Lambda}\:\overline{\Lambda}_{2}\Lambda\Lambda_{2}\Lambda_{1})\\
\times\{{\scriptstyle \frac{1}{2}}(a(\lambda_{\mathrm{c}}\lambda_{\mathrm{\bar{v}}}\overline{\Lambda}_{2})
\widetilde{\ddot{T}}_{\mathrm{ee^{\prime}vc^{\prime}c\bar{v}}}(\Lambda_{1}\overline{\Lambda}_{2}\overline{\Lambda})
-\widetilde{\dot{T}}_{\mathrm{ee^{\prime}vc^{\prime}\bar{v}c}}(\Lambda_{1}\overline{\Lambda}_{2}\overline{\Lambda}))\\
+a(\lambda_{\mathrm{v}}\lambda_{\mathrm{e^{\prime}}}\Lambda_{1})
(\widetilde{\ddot{T}}_{\mathrm{eve^{\prime}c^{\prime}\bar{v}c}}
(\Lambda_{1}\overline{\Lambda}_{2}\overline{\Lambda})
+\widetilde{\dot{T}}_{\mathrm{eve^{\prime}c^{\prime}\bar{v}c}}(\Lambda_{1}\overline{\Lambda}_{2}\overline{\Lambda}))\}]\\
+\delta_{\Lambda_{1}\Lambda_{2}}\delta_{\Lambda_{3}\tau_{1}}\delta_{\Lambda0}
T_{\mathrm{ee^{\prime}v\bar{v}c^{\prime}c}}(\Lambda_{1}\tau_{1})+(-1)^{\lambda_{\mathrm{e}}+\lambda_{\mathrm{c^{\prime}}}}\\
\times Y_{\mathrm{ec\bar{v}c^{\prime}}}(\Lambda_{1}\Lambda_{2}\Lambda\Lambda_{3}\tau_{1})
\widetilde{T}_{\mathrm{ee^{\prime}vc^{\prime}c\bar{v}}}(\Lambda_{1}\tau_{1})+\sum_{u}
[a(\lambda_{\mathrm{e^{\prime}}}\lambda_{\mathrm{c^{\prime}}}\Lambda_{2})\\
\times I_{\mathrm{ee^{\prime}v\bar{v}cc^{\prime}}}(\Lambda_{1}\Lambda_{2}\Lambda_{3}\tau_{1}\Lambda u)
T_{\mathrm{vee^{\prime}c^{\prime}c\bar{v}}}(u\tau_{1})+a(\lambda_{\mathrm{e^{\prime}}}\lambda_{\mathrm{\bar{v}}}u)\\
\times\{{\scriptstyle \frac{1}{2}}I_{\mathrm{ee^{\prime}v\bar{v}c^{\prime}c}}
(\Lambda_{1}\Lambda_{2}\Lambda_{3}\tau_{2}\Lambda u)
\widetilde{T^{\prime}}_{\mathrm{vee^{\prime}c\bar{v}c^{\prime}}}(u\tau_{2})\\
+(-1)^{\lambda_{\mathrm{e}}-\lambda_{\mathrm{v}}}
I_{\mathrm{eve^{\prime}\bar{v}c^{\prime}c}}(\Lambda_{1}\Lambda_{2}\Lambda_{3}\tau_{2}\Lambda u)
\widetilde{T^{\prime}}_{\mathrm{e^{\prime}vec\bar{v}c^{\prime}}}(u\tau_{2})\}],\end{array}\label{eq:c6a}\end{equation}

\begin{equation}
\begin{array}{l}
\Omega_{\mathrm{ee^{\prime}v\bar{v}c^{\prime}c}}^{(2)-}(\Lambda_{1}\Lambda_{2}\Lambda_{3}\Lambda)
(\varepsilon_{\mathrm{\bar{v}c^{\prime}c}}-\varepsilon_{\mathrm{ee^{\prime}v}})\\
\\=\delta_{\Lambda\Lambda_{3}}\delta_{MM_{3}}a(\lambda_{\mathrm{v}}\lambda_{\mathrm{e^{\prime}}}\Lambda_{1})
[a(\lambda_{\mathrm{c}}\lambda_{\mathrm{c^{\prime}}}\Lambda_{2})
\widetilde{T}_{\mathrm{\bar{v}c^{\prime}cee^{\prime}v}}(\Lambda_{2}\Lambda_{1}\Lambda)\\
+{\scriptstyle \frac{1}{2}}\sum_{{\scriptscriptstyle \overline{\Lambda}_{2}\overline{\Lambda}}}
\{J_{\mathrm{ec\bar{v}c^{\prime}}}(\overline{\Lambda}\:\overline{\Lambda}_{2}\Lambda\Lambda_{2}\Lambda_{1})
\widetilde{T}_{\mathrm{c^{\prime}c\bar{v}ee^{\prime}v}}(\overline{\Lambda}_{2}\Lambda_{1}\overline{\Lambda})\\
+(-1)^{\lambda_{\mathrm{e}}-\lambda_{\mathrm{v}}}\sum_{{\scriptscriptstyle \overline{\Lambda}_{1}}}
I_{\mathrm{ee^{\prime}v\bar{v}c^{\prime}c}}^{\prime}(\overline{\Lambda}_{1}
\overline{\Lambda}_{2}\Lambda_{1}\Lambda_{2}\overline{\Lambda}\Lambda)
(T_{\mathrm{c^{\prime}c\bar{v}vee^{\prime}}}(\widetilde{\overline{\Lambda}}_{2}\overline{\Lambda}_{1}\overline{\Lambda})\\
-a(\lambda_{\mathrm{c}}\lambda_{\mathrm{\bar{v}}}\overline{\Lambda}_{2})
\widetilde{T}_{\mathrm{c^{\prime}\bar{v}cvee^{\prime}}}
(\overline{\Lambda}_{2}\overline{\Lambda}_{1}\overline{\Lambda}))\}].\end{array}\label{eq:c6b}\end{equation}

\end{subequations}\begin{subequations}\label{c9}

\begin{equation}
\begin{array}{l}
\Omega_{\mathrm{vv^{\prime}v^{\prime\prime}\bar{v}\bar{v}^{\prime}c}}^{(2)+}
(\Lambda_{1}\Lambda_{2}\Lambda)(\varepsilon_{\mathrm{\bar{v}\bar{v}^{\prime}c}}
-\varepsilon_{\mathrm{vv^{\prime}v^{\prime\prime}}})\\
\\={\scriptstyle \frac{1}{2}}\delta_{\Lambda\Lambda_{3}}\delta_{MM_{3}}
[\widetilde{\ddot{T}}_{\mathrm{vv^{\prime}v^{\prime\prime}\bar{v}\bar{v}^{\prime}c}}(\Lambda_{1}\Lambda_{2}\Lambda)
+\widetilde{\dot{T}}_{\mathrm{vv^{\prime}v^{\prime\prime}\bar{v}\bar{v}^{\prime}c}}(\Lambda_{1}\Lambda_{2}\Lambda)\\
+a(\lambda_{\mathrm{c}}\lambda_{\mathrm{\bar{v}^{\prime}}}\Lambda_{2})
\sum_{{\scriptscriptstyle \overline{\Lambda}_{2}\overline{\Lambda}}}
J_{\mathrm{v\bar{v}^{\prime}\bar{v}c}}(\overline{\Lambda}\:\overline{\Lambda}_{2}\Lambda\Lambda_{2}\Lambda_{1})
\ddot{T}_{\mathrm{vv^{\prime}v^{\prime\prime}c\bar{v}\bar{v}^{\prime}}}
(\widetilde{\Lambda}_{1}\overline{\Lambda}_{2}\overline{\Lambda})],\end{array}\label{eq:c9a}\end{equation}

\begin{equation}
\begin{array}{l}
\Omega_{\mathrm{vv^{\prime}v^{\prime\prime}\bar{v}\bar{v}^{\prime}c}}^{(2)-}
(\Lambda_{1}\Lambda_{2}\Lambda)(\varepsilon_{\mathrm{\bar{v}\bar{v}^{\prime}c}}
-\varepsilon_{\mathrm{vv^{\prime}v^{\prime\prime}}})\\
\\=\delta_{\Lambda\Lambda_{3}}\delta_{MM_{3}}
a(\lambda_{\mathrm{v^{\prime}}}\lambda_{\mathrm{v^{\prime\prime}}}\Lambda_{2})
[{\scriptstyle \frac{1}{4}}a(\lambda_{\mathrm{c}}\lambda_{\mathrm{\bar{v}^{\prime}}}\Lambda_{1})
\widetilde{T}_{\mathrm{\bar{v}\bar{v}^{\prime}cvv^{\prime}v^{\prime\prime}}}(\Lambda_{2}\Lambda_{1}\Lambda)\\
-T_{\mathrm{\bar{v}c\bar{v}^{\prime}vv^{\prime}v^{\prime\prime}}}
(\widetilde{\Lambda}_{2}\Lambda_{1}\Lambda)].\end{array}\label{eq:c9b}\end{equation}

\end{subequations}\begin{subequations}\label{c10}

\begin{equation}
\begin{array}{l}
\Omega_{\mathrm{vv^{\prime}e\bar{v}\bar{v}^{\prime}c}}^{(2)+}
(\Lambda_{1}\Lambda_{2}\Lambda_{3}\Lambda)
(\varepsilon_{\mathrm{\bar{v}\bar{v}^{\prime}c}}-\varepsilon_{\mathrm{vv^{\prime}e}})\\
\\=\delta_{\Lambda\Lambda_{3}}\delta_{MM_{3}}[{\scriptstyle \frac{1}{2}}
\{\widetilde{\ddot{T}}_{\mathrm{vv^{\prime}e\bar{v}\bar{v}^{\prime}c}}(\Lambda_{1}\Lambda_{2}\Lambda)
+\widetilde{\dot{T}}_{\mathrm{vv^{\prime}e\bar{v}\bar{v}^{\prime}c}}(\Lambda_{1}\Lambda_{2}\Lambda)\}\\
-a(\lambda_{\mathrm{e}}\lambda_{\mathrm{v^{\prime}}}\Lambda_{1})
\{\widetilde{\ddot{T}}_{\mathrm{vev^{\prime}\bar{v}\bar{v}^{\prime}c}}(\Lambda_{1}\Lambda_{2}\Lambda)
+\widetilde{\dot{T}}_{\mathrm{vev^{\prime}\bar{v}\bar{v}^{\prime}c}}(\Lambda_{1}\Lambda_{2}\Lambda)\}\\
+a(\lambda_{\mathrm{c}}\lambda_{\mathrm{\bar{v}^{\prime}}}\Lambda_{2})
\sum_{{\scriptscriptstyle \overline{\Lambda}_{2}\overline{\Lambda}}}
J_{\mathrm{v\bar{v}^{\prime}\bar{v}c}}(\overline{\Lambda}\:\overline{\Lambda}_{2}\Lambda\Lambda_{2}\Lambda_{1})\\
\times\{{\scriptstyle \frac{1}{4}}\widetilde{\ddot{T}}_{\mathrm{vv^{\prime}ec\bar{v}\bar{v}^{\prime}}}
(\Lambda_{1}\overline{\Lambda}_{2}\overline{\Lambda})-{\scriptstyle \frac{1}{2}}
a(\lambda_{\mathrm{\bar{v}}}\lambda_{\mathrm{\bar{v}^{\prime}}}\overline{\Lambda}_{2})
\dot{T}_{\mathrm{vv^{\prime}ec\bar{v}^{\prime}\bar{v}}}(\widetilde{\Lambda}_{1}\overline{\Lambda}_{2}\overline{\Lambda})\\
-a(\lambda_{\mathrm{e}}\lambda_{\mathrm{v^{\prime}}}\Lambda_{1})
\ddot{T}_{\mathrm{vev^{\prime}c\bar{v}\bar{v}^{\prime}}}
(\widetilde{\Lambda}_{1}\overline{\Lambda}_{2}\overline{\Lambda})\}]+(-1)^{\lambda_{\mathrm{e}}-\lambda_{\mathrm{\bar{v}}}}\\
\times[(-1)^{\lambda_{\mathrm{c}}+\lambda_{\mathrm{\bar{v}^{\prime}}}}
Y_{\mathrm{vv^{\prime}e\bar{v}}}^{\prime}(\tau_{1}\Lambda_{1}\Lambda\Lambda_{3}\Lambda_{2})
T_{\mathrm{evv^{\prime}\bar{v}c\bar{v}^{\prime}}}(\Lambda_{2}\tau_{1})\\
+a(\lambda_{\mathrm{v}}\lambda_{\mathrm{v^{\prime}}}\Lambda_{2})
Y_{\mathrm{v\bar{v}^{\prime}\bar{v}c}}(\Lambda_{1}\Lambda_{2}\Lambda\Lambda_{3}\tau_{1})
T_{\mathrm{vev^{\prime}c\bar{v}\bar{v}^{\prime}}}(\Lambda_{1}\tau_{1})]\\
+{\scriptstyle \frac{1}{2}}\sum_{u}[(-1)^{\lambda_{\mathrm{e}}-\lambda_{\mathrm{\bar{v}}}}
a(\lambda_{\mathrm{v}}\lambda_{\mathrm{v^{\prime}}}\Lambda_{1})
I_{\mathrm{vev^{\prime}\bar{v}\bar{v}^{\prime}c}}(\Lambda_{1}\Lambda_{2}\Lambda_{3}\tau_{2}\Lambda u)\\
\times\widetilde{T^{\prime}}_{\mathrm{v^{\prime}evc\bar{v}\bar{v}^{\prime}}}(u\tau_{2})
+a(\lambda_{\mathrm{v^{\prime}}}\lambda_{\mathrm{\bar{v}^{\prime}}}\Lambda_{2})\\
\times I_{\mathrm{vv^{\prime}e\bar{v}c\bar{v}^{\prime}}}(\Lambda_{1}\Lambda_{2}\Lambda_{3}\tau_{2}\Lambda u)
\widetilde{T^{\prime}}_{\mathrm{evv^{\prime}\bar{v}^{\prime}c\bar{v}}}(u\tau_{2})],\end{array}\label{eq:c10a}\end{equation}

\begin{equation}
\begin{array}{l}
\Omega_{\mathrm{vv^{\prime}e\bar{v}\bar{v}^{\prime}c}}^{(2)-}(\Lambda_{1}\Lambda_{2}\Lambda_{3}\Lambda)
(\varepsilon_{\mathrm{\bar{v}\bar{v}^{\prime}c}}-\varepsilon_{\mathrm{vv^{\prime}e}})\\
\\=\delta_{\Lambda\Lambda_{3}}\delta_{MM_{3}}[\widetilde{T}_{\mathrm{\bar{v}c\bar{v}^{\prime}vev^{\prime}}}
(\Lambda_{2}\Lambda_{1}\Lambda)-{\scriptstyle \frac{1}{2}}
a(\lambda_{\mathrm{c}}\lambda_{\mathrm{\bar{v}^{\prime}}}\Lambda_{2})\\
\times\widetilde{T}_{\mathrm{\bar{v}\bar{v}^{\prime}cvev^{\prime}}}(\Lambda_{2}\Lambda_{1}\Lambda)
+\sum_{{\scriptscriptstyle \overline{\Lambda}_{1}\overline{\Lambda}_{2}\overline{\Lambda}}}
a(\lambda_{\mathrm{v}}\lambda_{\mathrm{v^{\prime}}}\overline{\Lambda}_{2})\\
\times\{a(\lambda_{\mathrm{\bar{v}}}\lambda_{\mathrm{\bar{v}^{\prime}}}\overline{\Lambda}_{1})
T_{\mathrm{c\bar{v}\bar{v}^{\prime}evv^{\prime}}}(\widetilde{\overline{\Lambda}}_{2}\overline{\Lambda}_{1}\overline{\Lambda})\\
\times I_{\mathrm{vv^{\prime}e\bar{v}c\bar{v}^{\prime}}}^{\prime}(\overline{\Lambda}_{1}
\overline{\Lambda}_{2}\Lambda_{1}\Lambda_{2}\overline{\Lambda}\Lambda)
-{\scriptstyle \frac{1}{4}}a(\lambda_{\mathrm{c}}\lambda_{\mathrm{\bar{v}}}\overline{\Lambda}_{1})\\
\times\widetilde{T}_{\mathrm{\bar{v}^{\prime}\bar{v}cevv^{\prime}}}
(\overline{\Lambda}_{2}\overline{\Lambda}_{1}\overline{\Lambda})
I_{\mathrm{vv^{\prime}e\bar{v}\bar{v}^{\prime}c}}^{\prime}
(\overline{\Lambda}_{1}\overline{\Lambda}_{2}\Lambda_{1}
\Lambda_{2}\overline{\Lambda}\Lambda)\}].\end{array}\label{eq:c10b}\end{equation}

\end{subequations}\begin{subequations}\label{c11}

\begin{equation}
\begin{array}{l}
\Omega_{\mathrm{ee^{\prime}v\bar{v}\bar{v}^{\prime}c}}^{(2)+}(\Lambda_{1}\Lambda_{2}\Lambda_{3}\Lambda)
(\varepsilon_{\mathrm{\bar{v}\bar{v}^{\prime}c}}-\varepsilon_{\mathrm{ee^{\prime}v}})\\
\\=\delta_{\Lambda\Lambda_{3}}\delta_{MM_{3}}[{\scriptstyle \frac{1}{2}}
\{\widetilde{\ddot{T}}_{\mathrm{ee^{\prime}v\bar{v}\bar{v}^{\prime}c}}(\Lambda_{1}\Lambda_{2}\Lambda)
+\widetilde{\dot{T}}_{\mathrm{ee^{\prime}v\bar{v}\bar{v}^{\prime}c}}(\Lambda_{1}\Lambda_{2}\Lambda)\}\\
-a(\lambda_{\mathrm{e^{\prime}}}\lambda_{\mathrm{v}}\Lambda_{1})
\{\widetilde{\ddot{T}}_{\mathrm{eve^{\prime}\bar{v}\bar{v}^{\prime}c}}
(\Lambda_{1}\Lambda_{2}\Lambda)+\widetilde{\dot{T}}_{\mathrm{eve^{\prime}\bar{v}\bar{v}^{\prime}c}}
(\Lambda_{1}\Lambda_{2}\Lambda)\}\\
+a(\lambda_{\mathrm{c}}\lambda_{\mathrm{\bar{v}^{\prime}}}\Lambda_{2})
\sum_{{\scriptscriptstyle \overline{\Lambda}_{2}\overline{\Lambda}}}
J_{\mathrm{e\bar{v}^{\prime}\bar{v}c}}(\overline{\Lambda}\:\overline{\Lambda}_{2}\Lambda\Lambda_{2}\Lambda_{1})
\{{\scriptstyle \frac{1}{2}}\ddot{T}_{\mathrm{ee^{\prime}vc\bar{v}\bar{v}^{\prime}}}
(\widetilde{\Lambda}_{1}\overline{\Lambda}_{2}\overline{\Lambda})\\
-a(\lambda_{\mathrm{e^{\prime}}}\lambda_{\mathrm{v}}\Lambda_{1})({\scriptstyle \frac{1}{2}}
\widetilde{\ddot{T}}_{\mathrm{eve^{\prime}c\bar{v}\bar{v}^{\prime}}}
(\Lambda_{1}\overline{\Lambda}_{2}\overline{\Lambda})+\widetilde{\dot{T}}_{\mathrm{eve^{\prime}c\bar{v}\bar{v}^{\prime}}}
(\Lambda_{1}\overline{\Lambda}_{2}\overline{\Lambda}))\}]\\
+\delta_{\Lambda_{1}\Lambda_{2}}\delta_{\Lambda_{3}\tau_{1}}\delta_{\Lambda0}
\widetilde{T}_{\mathrm{ee^{\prime}v\bar{v}\bar{v}^{\prime}c}}(\Lambda_{1}\tau_{1})
+a(\lambda_{\mathrm{e}}\lambda_{\mathrm{\bar{v}}}\Lambda_{1})(-1)^{\Lambda_{2}}\\
\times Y_{\mathrm{e\bar{v}^{\prime}\bar{v}c}}(\Lambda_{1}\Lambda_{2}\Lambda\Lambda_{3}\tau_{1})
T_{\mathrm{ee^{\prime}vc\bar{v}\bar{v}^{\prime}}}(\Lambda_{1}\tau_{1})+\sum_{u}[{\scriptstyle \frac{1}{2}}
a(\lambda_{\mathrm{e^{\prime}}}\lambda_{\mathrm{\bar{v}}}u)\\
\times\{I_{\mathrm{ee^{\prime}v\bar{v}\bar{v}^{\prime}c}}(\Lambda_{1}\Lambda_{2}\Lambda_{3}\tau_{1}\Lambda u)
T_{\mathrm{vee^{\prime}c\bar{v}\bar{v}^{\prime}}}(u\tau_{1})\\
+I_{\mathrm{ee^{\prime}v\bar{v}\bar{v}^{\prime}c}}(\Lambda_{1}\Lambda_{2}\Lambda_{3}\tau_{2}\Lambda u)
T_{\mathrm{vee^{\prime}c\bar{v}\bar{v}^{\prime}}}^{\prime}(u\tau_{2})\}
+a(\lambda_{\mathrm{e^{\prime}}}\lambda_{\mathrm{v}}\Lambda_{1})\\
\times\{a(\lambda_{\mathrm{v}}\lambda_{\mathrm{\bar{v}}}u)
a(\lambda_{\mathrm{c}}\lambda_{\mathrm{\bar{v}^{\prime}}}\Lambda_{2})
I_{\mathrm{eve^{\prime}\bar{v}c\bar{v}^{\prime}}}(\Lambda_{1}\Lambda_{2}\Lambda_{3}\tau_{2}\Lambda u)\\
\times\widetilde{T^{\prime}}_{\mathrm{e^{\prime}ev\bar{v}^{\prime}\bar{v}c}}(u\tau_{2})
+{\scriptstyle \frac{1}{2}}(-1)^{\lambda_{\mathrm{e}}-\lambda_{\mathrm{\bar{v}}}}
I_{\mathrm{eve^{\prime}\bar{v}\bar{v}^{\prime}c}}(\Lambda_{1}\Lambda_{2}\Lambda_{3}\tau_{2}\Lambda u)\\
\times\widetilde{T^{\prime}}_{\mathrm{e^{\prime}vec\bar{v}\bar{v}^{\prime}}}(u\tau_{2})\}],
\end{array}\label{eq:c11a}\end{equation}

\begin{equation}
\begin{array}{l}
\Omega_{\mathrm{ee^{\prime}v\bar{v}\bar{v}^{\prime}c}}^{(2)-}(\Lambda_{1}\Lambda_{2}\Lambda_{3}\Lambda)
(\varepsilon_{\mathrm{\bar{v}\bar{v}^{\prime}c}}-\varepsilon_{\mathrm{ee^{\prime}v}})\\
\\=-\delta_{\Lambda\Lambda_{3}}\delta_{MM_{3}}[a(\lambda_{\mathrm{e^{\prime}}}\lambda_{\mathrm{v}}\Lambda_{1})
\{\widetilde{T}_{\mathrm{\bar{v}c\bar{v}^{\prime}ee^{\prime}v}}(\Lambda_{2}\Lambda_{1}\Lambda)\\
-{\scriptstyle \frac{1}{2}}a(\lambda_{\mathrm{c}}\lambda_{\mathrm{\bar{v}^{\prime}}}\Lambda_{2})
\widetilde{T}_{\mathrm{\bar{v}\bar{v}^{\prime}cee^{\prime}v}}(\Lambda_{2}\Lambda_{1}\Lambda)\}\\
+\sum_{{\scriptscriptstyle \overline{\Lambda}_{1}\overline{\Lambda}_{2}\overline{\Lambda}}}
a(\lambda_{\mathrm{e}}\lambda_{\mathrm{e^{\prime}}}\overline{\Lambda}_{1})
a(\lambda_{\mathrm{c}}\lambda_{\mathrm{\bar{v}}}\overline{\Lambda}_{2})\{(-1)^{\Lambda_{2}}\\
\times I_{\mathrm{ee^{\prime}v\bar{v}c\bar{v}^{\prime}}}^{\prime}(\overline{\Lambda}_{1}
\overline{\Lambda}_{2}\Lambda_{1}\Lambda_{2}\overline{\Lambda}\Lambda)
T_{\mathrm{c\bar{v}\bar{v}^{\prime}vee^{\prime}}}(\widetilde{\overline{\Lambda}}_{2}\overline{\Lambda}_{1}\overline{\Lambda})\\
+{\scriptstyle \frac{1}{4}}I_{\mathrm{ee^{\prime}v\bar{v}\bar{v}^{\prime}c}}^{\prime}
(\overline{\Lambda}_{1}\overline{\Lambda}_{2}\Lambda_{1}\Lambda_{2}\overline{\Lambda}\Lambda)
\widetilde{T}_{\mathrm{\bar{v}^{\prime}\bar{v}cvee^{\prime}}}(\overline{\Lambda}_{2}\overline{\Lambda}_{1}
\overline{\Lambda})\}].\end{array}\label{eq:c11b}\end{equation}

\end{subequations}

\begin{verse}
\emph{Four-body part.}
\end{verse}
\begin{equation}
\begin{array}{l}
\Omega_{\mathrm{vv^{\prime}v^{\prime\prime}v^{\prime\prime\prime}\bar{v}\bar{v}^{\prime}cc^{\prime}}}^{(2)}
(\Lambda_{1}\Lambda_{2}\Lambda_{3}\Lambda_{4}\Lambda)(\varepsilon_{\mathrm{\bar{v}\bar{v}^{\prime}cc^{\prime}}}
-\varepsilon_{\mathrm{vv^{\prime}v^{\prime\prime}v^{\prime\prime\prime}}})\\
\\={\scriptstyle \frac{1}{2}}(-1)^{\Lambda_{3}}a(\lambda_{\mathrm{c}}\lambda_{\mathrm{\bar{v}}}\Lambda_{1})
F_{\mathrm{c^{\prime}c\bar{v}\bar{v}^{\prime}}}(\Lambda_{1}\Lambda_{2}\Lambda_{3}\Lambda_{4}\Lambda)\\
\times Q_{\mathrm{vv^{\prime}v^{\prime\prime}v^{\prime\prime\prime}\bar{v}cc^{\prime}\bar{v}^{\prime}}}
(\widetilde{\Lambda}_{1}\Lambda_{3}).\end{array}\label{eq:c13}\end{equation}

\begin{equation}
\begin{array}{l}
\Omega_{\mathrm{evv^{\prime}v^{\prime\prime}\bar{v}\bar{v}^{\prime}cc^{\prime}}}^{(2)}
(\Lambda_{1}\Lambda_{2}\Lambda_{3}\Lambda_{4}\Lambda)(\varepsilon_{\mathrm{\bar{v}\bar{v}^{\prime}cc^{\prime}}}
-\varepsilon_{\mathrm{evv^{\prime}v^{\prime\prime}}})\\
\\={\scriptstyle \frac{1}{4}}\delta_{\Lambda_{1}\Lambda_{2}}\delta_{\Lambda_{3}\Lambda_{4}}
\delta_{\Lambda0}Q_{\mathrm{evv^{\prime}v^{\prime\prime}\bar{v}\bar{v}^{\prime}cc^{\prime}}}
(\widetilde{\Lambda}_{1}\Lambda_{3})+(-1)^{\Lambda_{3}}a(\lambda_{\mathrm{e}}\lambda_{\mathrm{v}}\Lambda_{1})\\
\times F_{\mathrm{c^{\prime}c\bar{v}\bar{v}^{\prime}}}(\Lambda_{1}\Lambda_{2}\Lambda_{3}\Lambda_{4}\Lambda)
Q_{\mathrm{vev^{\prime}v^{\prime\prime}c\bar{v}c^{\prime}\bar{v}^{\prime}}}(\widetilde{\Lambda}_{1}\Lambda_{3})
+{\scriptstyle \frac{1}{2}}(-1)^{\Lambda_{3}}\\
\times a(\lambda_{\mathrm{e}}\lambda_{\mathrm{v}}\Lambda_{4})
F_{\mathrm{evv^{\prime}v^{\prime\prime}}}(\Lambda_{4}\Lambda_{3}\Lambda_{2}\Lambda_{1}\Lambda)
Q_{\mathrm{vv^{\prime}ev^{\prime\prime}cc^{\prime}\bar{v}\bar{v}^{\prime}}}(\Lambda_{4}\Lambda_{2})\\
+{\scriptstyle \frac{1}{2}}\sum_{ud}G_{\mathrm{evv^{\prime}v^{\prime\prime}\bar{v}\bar{v}^{\prime}cc^{\prime}}}
(ud\Lambda_{1}\Lambda_{3}\Lambda_{2}\Lambda_{4}\Lambda)
\widetilde{Q}_{\mathrm{vv^{\prime}ev^{\prime\prime}\bar{v}cc^{\prime}\bar{v}^{\prime}}}(ud).
\end{array}\label{eq:c14}\end{equation}

\begin{equation}
\begin{array}{l}
\Omega_{\mathrm{ee^{\prime}vv^{\prime}\bar{v}\bar{v}^{\prime}cc^{\prime}}}^{(2)}
(\Lambda_{1}\Lambda_{2}\Lambda_{3}\Lambda_{4}\Lambda)(\varepsilon_{\mathrm{\bar{v}\bar{v}^{\prime}cc^{\prime}}}
-\varepsilon_{\mathrm{ee^{\prime}vv^{\prime}}})\\
\\={\scriptstyle \frac{1}{4}}\delta_{\Lambda_{1}\Lambda_{2}}\delta_{\Lambda_{3}\Lambda_{4}}\delta_{\Lambda0}
\{Q_{\mathrm{ee^{\prime}vv^{\prime}\bar{v}\bar{v}^{\prime}cc^{\prime}}}(\widetilde{\Lambda}_{1}\Lambda_{3})
+Q_{\mathrm{vv^{\prime}ee^{\prime}cc^{\prime}\bar{v}\bar{v}^{\prime}}}(\widetilde{\Lambda}_{3}\Lambda_{1})\}\\
-{\scriptstyle \frac{1}{2}}(-1)^{\Lambda_{3}}\{F_{\mathrm{c^{\prime}c\bar{v}\bar{v}^{\prime}}}
(\Lambda_{1}\Lambda_{2}\Lambda_{3}\Lambda_{4}\Lambda)Q_{\mathrm{ee^{\prime}vv^{\prime}c\bar{v}c^{\prime}\bar{v}^{\prime}}}
(\widetilde{\Lambda}_{1}\Lambda_{3})\\
+(-1)^{\Lambda_{1}+\Lambda_{4}}F_{\mathrm{e^{\prime}evv^{\prime}}}(\Lambda_{4}\Lambda_{3}\Lambda_{2}\Lambda_{1}\Lambda)
Q_{\mathrm{eve^{\prime}v^{\prime}cc^{\prime}\bar{v}\bar{v}^{\prime}}}(\widetilde{\Lambda}_{4}\Lambda_{2})\\
+a(\lambda_{\mathrm{\bar{v}}}\lambda_{\mathrm{\bar{v}^{\prime}}}\Lambda_{2})
a(\lambda_{\mathrm{c}}\lambda_{\mathrm{c^{\prime}}}\Lambda_{4})F_{\mathrm{cc^{\prime}\bar{v}^{\prime}\bar{v}}}
(\Lambda_{1}\Lambda_{2}\Lambda_{3}\Lambda_{4}\Lambda)\\
\times Q_{\mathrm{vv^{\prime}ee^{\prime}c\bar{v}c^{\prime}\bar{v}^{\prime}}}(\widetilde{\Lambda}_{3}\Lambda_{1})
+a(\lambda_{\mathrm{v}}\lambda_{\mathrm{v^{\prime}}}\Lambda_{3})a(\lambda_{\mathrm{e}}\lambda_{\mathrm{e^{\prime}}}\Lambda_{4})\\
\times F_{\mathrm{ee^{\prime}v^{\prime}v}}(\Lambda_{4}\Lambda_{3}\Lambda_{2}\Lambda_{1}\Lambda)
Q_{\mathrm{eve^{\prime}v^{\prime}\bar{v}\bar{v}^{\prime}cc^{\prime}}}(\widetilde{\Lambda}_{2}\Lambda_{4})\}
+a(\lambda_{\mathrm{e}}\lambda_{\mathrm{e^{\prime}}}\Lambda_{1})\\
\times\sum_{ud}a(\lambda_{\mathrm{c^{\prime}}}\lambda_{\mathrm{\bar{v}^{\prime}}}d)
\widetilde{Q}_{\mathrm{eve^{\prime}v^{\prime}\bar{v}c\bar{v}^{\prime}c^{\prime}}}(ud)\\
\times G_{\mathrm{e^{\prime}evv^{\prime}\bar{v}\bar{v}^{\prime}cc^{\prime}}}(ud\Lambda_{1}\Lambda_{3}\Lambda_{2}
\Lambda_{4}\Lambda).\end{array}\label{eq:c15}\end{equation}

\noindent{}The coefficients $J$, $Y$, $Y^{\prime}$, $I$, $I^{\prime}$,
$F$, $G$ are defined by the following formulas

\begin{equation}
J_{\alpha\beta\bar{\mu}\bar{\nu}}(\Lambda_{1}\Lambda_{2}\overline{\Lambda}_{1}\overline{\Lambda}_{2}\Lambda)
=(-1)^{\Lambda_{2}+\overline{\Lambda}_{2}}[\Lambda_{1}][\Lambda_{2},\overline{\Lambda}_{2}]^{1/2}
\left\{ \begin{smallmatrix}\lambda_{\alpha} & \lambda_{\bar{\nu}} & \Lambda_{1}\\
\lambda_{\bar{\mu}} & \lambda_{\beta} & \Lambda_{2}\\
\overline{\Lambda}_{1} & \overline{\Lambda}_{2} & \Lambda\end{smallmatrix}\right\} ,\label{eq:J}\end{equation}

\begin{align}
&Y_{\alpha\beta\bar{\mu}\bar{\nu}}(\Lambda_{1}\Lambda_{2}\overline{\Lambda}_{1}\overline{\Lambda}_{2}\Lambda)
=(-1)^{\Lambda_{1}+\Lambda_{2}+\overline{\Lambda}_{1}
+\overline{\Lambda}_{2}}[\overline{\Lambda}_{1},\Lambda_{2},\Lambda]^{1/2} \nonumber \\
&\times\langle\Lambda M\overline{\Lambda}_{1}\overline{M}_{1}\vert\overline{\Lambda}_{2}
\overline{M}_{2}\rangle\left\{ \begin{smallmatrix}\Lambda_{1} & \Lambda_{2} & \overline{\Lambda}_{1}\\
\lambda_{\bar{\nu}} & \lambda_{\bar{\mu}} & \lambda_{\beta}\end{smallmatrix}\right\} 
\left\{ \begin{smallmatrix}\overline{\Lambda}_{1} & \overline{\Lambda}_{2} & \Lambda\\
\lambda_{\alpha} & \lambda_{\bar{\nu}} & \lambda_{\bar{\mu}}\end{smallmatrix}\right\},\label{eq:Y}
\end{align}

\begin{align}
&Y_{\alpha\beta\bar{\mu}\bar{\nu}}^{\prime}(\Lambda_{1}\Lambda_{2}\overline{\Lambda}_{1}\overline{\Lambda}_{2}\Lambda)
=(-1)^{\Lambda_{1}+\Lambda_{2}}[\Lambda_{1},\overline{\Lambda}_{1},\Lambda_{2}]^{1/2} \nonumber \\
&\times\langle\Lambda_{1}M_{1}\overline{\Lambda}_{1}\overline{M}_{1}\vert\overline{\Lambda}_{2}
\overline{M}_{2}\rangle\left\{ \begin{smallmatrix}\overline{\Lambda}_{1} & \Lambda_{2} & \Lambda\\
\lambda_{\beta} & \lambda_{\alpha} & \lambda_{\bar{\mu}}\end{smallmatrix}\right\} 
\left\{ \begin{smallmatrix}\overline{\Lambda}_{1} & \overline{\Lambda}_{2} & \Lambda_{1}\\
\lambda_{\bar{\nu}} & \lambda_{\bar{\mu}} & \lambda_{\alpha}\end{smallmatrix}\right\} ,\label{eq:Yp}
\end{align}

\begin{align}
&I_{\alpha\beta\zeta\bar{\mu}\bar{\nu}\bar{\eta}}(\Lambda_{1}\Lambda_{2}
\overline{\Lambda}_{1}\overline{\Lambda}_{2}\Lambda\overline{\Lambda})
=(-1)^{\Lambda+\overline{\Lambda}_{1}}[\Lambda_{1},\Lambda_{2},\overline{\Lambda}_{2},\Lambda,\overline{\Lambda}]^{1/2} \nonumber \\
&\times\langle\overline{\Lambda}_{2}\overline{M}_{2}\Lambda M\vert\overline{\Lambda}_{1}
\overline{M}_{1}\rangle\left\{ \begin{smallmatrix}\lambda_{\alpha} & \lambda_{\beta} & \overline{\Lambda}\\
\lambda_{\bar{\nu}} & \lambda_{\bar{\mu}} & \overline{\Lambda}_{1}\end{smallmatrix}\right\} 
\left\{ \begin{smallmatrix}\lambda_{\beta} & \lambda_{\zeta} & \Lambda_{1}\\
\lambda_{\bar{\nu}} & \lambda_{\bar{\eta}} & \Lambda_{2}\\
\overline{\Lambda}_{1} & \overline{\Lambda}_{2} & \Lambda\end{smallmatrix}\right\} ,\label{eq:I}
\end{align}

\begin{align}
&I_{\alpha\beta\zeta\bar{\mu}\bar{\nu}\bar{\eta}}^{\prime}(\Lambda_{1}\Lambda_{2}\overline{\Lambda}_{1}
\overline{\Lambda}_{2}\Lambda\overline{\Lambda})=(-1)^{\lambda_{\alpha}+\lambda_{\zeta}+\lambda_{\bar{\mu}}
+\lambda_{\bar{\eta}}+\Lambda_{1}+\overline{\Lambda}_{1}+\Lambda_{2}+\Lambda+\overline{\Lambda}} \nonumber \\
&\times[\Lambda][\Lambda_{1},\Lambda_{2},\overline{\Lambda}_{1},\overline{\Lambda}_{2}]^{1/2}
\left[\begin{smallmatrix}\overline{\Lambda} & \lambda_{\beta} & \lambda_{\bar{\eta}} & \Lambda\\
\lambda_{\bar{\mu}} & \overline{\Lambda}_{2} & \Lambda_{1} & \lambda_{\zeta}\\
\lambda_{\alpha} & \Lambda_{2} & \overline{\Lambda}_{1} & \lambda_{\bar{\nu}}\end{smallmatrix}
\right],\label{eq:Ip}
\end{align}

\begin{align}
&F_{\alpha\beta\zeta\rho}(\Lambda_{1}\Lambda_{2}\overline{\Lambda}_{1}\overline{\Lambda}_{2}\Lambda)
=(-1)^{\overline{\Lambda}_{2}}[\Lambda_{2},\overline{\Lambda}_{2}]^{1/2}
\left\{ \begin{smallmatrix}\Lambda_{1} & \Lambda_{2} & \Lambda\\
\lambda_{\rho} & \lambda_{\beta} & \lambda_{\zeta}\end{smallmatrix}\right\}
\left\{ \begin{smallmatrix}\overline{\Lambda}_{1} & \overline{\Lambda}_{2} & \Lambda\\
\lambda_{\beta} & \lambda_{\rho} & \lambda_{\alpha}\end{smallmatrix}\right\} ,\label{eq:F}
\end{align}

\begin{align}
&G_{\alpha\beta\zeta\rho\bar{\mu}\bar{\nu}\bar{\eta}\bar{\sigma}}(\Lambda_{1}\Lambda_{2}
\overline{\Lambda}_{1}\overline{\Lambda}_{2}\overline{\overline{\Lambda}}_{1}
\overline{\overline{\Lambda}}_{2}\Lambda)=(-1)^{\Lambda}a(\lambda_{\alpha}
\lambda_{\beta}\overline{\Lambda}_{1})a(\lambda_{\bar{\nu}}\lambda_{\bar{\sigma}}\Lambda_{2}) \nonumber \\
&\times[\Lambda_{1},\Lambda_{2},\overline{\Lambda}_{1},\overline{\Lambda}_{2},
\overline{\overline{\Lambda}}_{1},\overline{\overline{\Lambda}}_{2}]^{1/2}
\left\{ \begin{smallmatrix}\lambda_{\beta} &  & \lambda_{\bar{\mu}}\:\:\overline{\Lambda}_{1} &  & 
\overline{\overline{\Lambda}}_{1}\:\:\lambda_{\alpha} &  & \lambda_{\bar{\nu}}\\
 & \Lambda_{1} &  & \Lambda &  & \Lambda_{2}\\
\lambda_{\zeta} &  & \lambda_{\bar{\eta}}\:\:\overline{\Lambda}_{2} &  & 
\overline{\overline{\Lambda}}_{2}\:\:\lambda_{\rho} &  & \lambda_{\bar{\sigma}}\end{smallmatrix}\right\}. 
\label{eq:G}
\end{align}

%% #############################################################################
%% ################################## TABLE 9 ##################################
%% #############################################################################
\begin{table}
\caption{\label{TabE}The expressions for phase factors $\mathcal{Z}_{\alpha^{\prime}\beta^{\prime}\bar{\mu}^{\prime}
\bar{\nu}^{\prime}}$}
\begin{tabular}{rrrrr}
\hline\hline
\raisebox{2.5ex}{}\raisebox{-1.ex}{}$\lambda_{\alpha^{\prime}}$ & 
$\lambda_{\beta^{\prime}}$ & $\lambda_{\bar{\mu}^{\prime}}$ & $\lambda_{\bar{\nu}^{\prime}}$ & 
$\mathcal{Z}_{\alpha^{\prime}\beta^{\prime}\bar{\mu}^{\prime}\bar{\nu}^{\prime}}(\Lambda_{1}\Lambda_{2}\Lambda)$ \\
\hline
\raisebox{2.5ex}{}\raisebox{-1.ex}{}$\lambda_{\alpha}$ & $\lambda_{\beta}$ & 
$\lambda_{\bar{\mu}}$ & $\lambda_{\bar{\nu}}$ & $1$ \\
$\lambda_{\alpha}$ & $\lambda_{\beta}$ & $\lambda_{\bar{\nu}}$ & $\lambda_{\bar{\mu}}$ & 
$a(\lambda_{\bar{\mu}}\lambda_{\bar{\nu}}\Lambda_{2})$ \\
$\lambda_{\beta}$ & $\lambda_{\alpha}$ & $\lambda_{\bar{\mu}}$ & $\lambda_{\bar{\nu}}$ & 
$a(\lambda_{\alpha}\lambda_{\beta}\Lambda_{1})$ \\
$\lambda_{\beta}$ & $\lambda_{\alpha}$ & $\lambda_{\bar{\nu}}$ & $\lambda_{\bar{\mu}}$ & 
$a(\lambda_{\alpha}\lambda_{\beta}\Lambda_{1})$ $a(\lambda_{\bar{\mu}}\lambda_{\bar{\nu}}\Lambda_{2})$ \\
$\lambda_{\bar{\mu}}$ & $\lambda_{\bar{\nu}}$ & $\lambda_{\alpha}$ & $\lambda_{\beta}$ & 
$a(\lambda_{\alpha}\lambda_{\beta}\Lambda_{2})$ $a(\lambda_{\bar{\mu}}\lambda_{\bar{\nu}}\Lambda_{1})$ \\
$\lambda_{\bar{\mu}}$ & $\lambda_{\bar{\nu}}$ & $\lambda_{\beta}$ & $\lambda_{\alpha}$ & 
$a(\lambda_{\bar{\mu}}\lambda_{\bar{\nu}}\Lambda_{1})$ \\
$\lambda_{\bar{\nu}}$ & $\lambda_{\bar{\mu}}$ & $\lambda_{\alpha}$ & $\lambda_{\beta}$ & 
$a(\lambda_{\alpha}\lambda_{\beta}\Lambda_{2})$ \\
$\lambda_{\bar{\nu}}$ & $\lambda_{\bar{\mu}}$ & $\lambda_{\beta}$ & $\lambda_{\alpha}$ & $1$ \\
\hline\hline
\end{tabular}
\end{table}

\noindent{}In Eq. (\ref{eq:Ip}), the last term with the brackets $\left[\ldots\right]$
denotes $12j$-symbol of the second kind (see Ref. \cite[Sec. 4-19, Eq. (19.3), p. 102]{Jucys}).
In Eqs. (\ref{5.2.5})-(\ref{c8}),
the expressions for the coefficients $\mathcal{Z}_{\alpha^{\prime}\beta^{\prime}\bar{\mu}^{\prime}\bar{\nu}^{\prime}}
(\Lambda_{1}\Lambda_{2}\Lambda)$,
where $\{\alpha^{\prime},\beta^{\prime},\bar{\mu}^{\prime},\bar{\nu}^{\prime}\}$
denote somehow permuted orbitals $\{\alpha,\beta,\bar{\mu},\bar{\nu}\}$
of the coefficient $\Omega_{\alpha\beta\bar{\mu}\bar{\nu}}^{(2)}(\Lambda_{1}\Lambda_{2}\Lambda)$,
are displayed in Tab. \ref{TabE}. Particularly, we use the abbreviation
$\mathcal{Z}_{\alpha^{\prime}\beta^{\prime}\bar{\mu}^{\prime}\bar{\nu}^{\prime}}(\Lambda_{1}\Lambda_{1}0)
\equiv\mathcal{Z}_{\alpha^{\prime}\beta^{\prime}\bar{\mu}^{\prime}\bar{\nu}^{\prime}}(\Lambda_{1})$.

%% #############################################################################
%% ################################ References #################################
%% #############################################################################

\end{document}